\def\ps@pprintTitle{%
	\let\@oddhead\@empty
	\let\@evenhead\@empty
	\def\@oddfoot{\centerline{\thepage}}%
	\let\@evenfoot\@oddfoot}
\newtheorem{theorem}{Theorem}
\newtheorem{proposition}[theorem]{Proposition}
\newtheorem{definition}[theorem]{Definition}
\newenvironment{proof}[1][Proof]{\begin{trivlist}
		\item[\hskip \labelsep {\bfseries #1}]}{\end{trivlist}}
\journal{TBA}
\begin{document}
	\begin{frontmatter}
		
		\title{Dynamic Inventory Management with Mean-Field Competition
			\tnoteref{t1}}
		\tnotetext[t1]{The authors would like to thank Matt Loring and Ronnie Sircar for their helpful comments on an earlier draft.}
		
		\author[author1]{Ryan Donnelly}\ead{ryan.f.donnelly@kcl.ac.uk}
		\author[author1]{Zi Li}\ead{zi.2.li@kcl.ac.uk}
		\address[author1] {Department of Mathematics, King's College London, \\ Strand, London, WC2R 2LS, United Kingdom}
		
		\date{}
		
		\begin{abstract}
			Agents attempt to maximize expected profits earned by selling multiple units of a perishable product where their revenue streams are affected by the prices they quote as well as the distribution of other prices quoted in the market by other agents. We propose a model which captures this competitive effect and directly analyze the model in the mean-field limit as the number of agents is very large. We classify mean-field Nash equilibrium in terms of the solution to a Hamilton-Jacobi-Bellman equation and a consistency condition and use this to motivate an iterative numerical algorithm. Convergence of this numerical algorithm yields the pricing strategy of a mean-field Nash equilibrium. Properties of the equilibrium pricing strategies and overall market dynamics are then investigated, in particular how they depend on the strength of the competitive interaction and the ability to oversell the product.
		\end{abstract}
		
		\begin{keyword}
			mean-field game, dynamic pricing, optimal control
		\end{keyword}
	\end{frontmatter}
	
	\section{Introduction}
	
	This paper considers an optimal price setting model in which agents attempt to liquidate a given product inventory over a finite time horizon. Agents quote prices to potential buyers which affects both the revenue made on individual sales and the intensity at which sales are made. In addition, the intensity of sales is affected by the distribution of prices quoted across all agents, meaning the revenue rate of an individual agent is impacted by the effects of competition. When the finite time horizon is reached, each agent that has unsold units of the perishable good recovers a salvage cost for their remaining inventory, essentially selling it with an imposed penalty. We employ a mean-field game approach to the model which lowers the complexity of numerical computation of optimal pricing strategies compared to the finite agent setting.
	
	Models of dynamic inventory pricing have been used in the past to dictate optimal pricing strategies for products which have a finite shelf life, notably fashion garments, seasonal leisure spaces, and airline tickets (see for example \cite{gallego1994optimal}, \cite{anjos2004maximizing}, \cite{anjos2005optimal}, and \cite{gallego2014dynamic}). Some early work in dynamic inventory pricing restricts the underlying dynamics to be deterministic (see \cite{jorgensen1986optimal}, \cite{dockner1988optimal}, and \cite{eliashberg1991competitive}) which may allow for more tractability and further analysis of optimal policies, with or without considering competition. Models with stochastic demand have also been studied (see \cite{gallego1994optimal} and \cite{zhao2000optimal}), but most results pertain to the case of a monopolistic agent without competition. The paper \cite{gallego2014dynamic} considers an oligopolistic market, but equilibrium with stochastic revenue streams is only classified in terms of a system of coupled differential equations. Instead of trying to analyse the solution to these equations, which is very computationally intense even for only two agents, the authors show that the stochastic game is well approximated by a deterministic differential game under suitable scaling of model parameters. The mean-field setting we consider can be thought of as allowing the number of agents to grow very large rather than the parameters which control the underlying dynamics. This retains a level of computational complexity at a level similar to the single agent case which allows us to investigate the effects of competition.
	
	Our model is a generalization of the single agent model considered in \cite{gallego1994optimal}, which we briefly summarize and refer to as the reference model when discussing the mean-field case. Specifically, agents hold a positive integer number of units of an asset and quote a selling price for each unit continuously through time. Sales occur at random times with an intensity that depends on the price quoted by the agent such that higher prices result in less frequent trades, creating a trade-off between large but infrequent revenue of quoting high prices, versus small but frequent revenue of low prices. Additionally, competition between agents is modelled by specifying that the sell intensity also depends on the distribution of prices quoted by all agents. Thus, an agent's selling rate increases if other agents begin to quote higher prices. Our model setting is quite similar to the one in \cite{yang2013nonatomic}, especially in regards to the dynamics of agents' inventory levels. The main difference is that our model allows prices to be continuous rather than being selected from a finite set of either high price or low price. Additionally, as a mean-field extension of one of the models presented in \cite{dockner1988optimal}, \cite{chenavaz2021dynamic} has a similar sell intensity to our work that depends on the distribution of prices quoted by all agents. Our work mainly differs from \cite{chenavaz2021dynamic} in that their model considers agents with states determined by a continuous quantity which changes deterministically, whereas in our work a representative agent has a discrete inventory level which changes stochastically.
	
	By working in a mean-field setting, our conditions for equilibrium do not require the solution to a coupled system of differential equations. Instead, equilibrium is classified by a single differential equation and a consistency condition. This lends itself to an efficient iterative algorithm which upon convergence yields a mean-field Nash equilibrium. We are unable to prove the existence of a mean-field Nash equilibrium, but we have conducted extensive numerical experiments with the iterative algorithm which always converge to the same result within a small numerical tolerance. The low dimension of the system of equations which classify equilibrium allows us to easily demonstrate the resulting pricing strategies, and hence investigate how prices under the effects of competition compare to those of the reference model.
	
	The tractability offered by a mean-field game framework over finite agent models has also led to their use in studying competition in other types of markets. In \cite{chan2017fracking} and \cite{ludkovski2017mean}, the effects of competition through mean-field interaction are incorporated into models of energy production and commodity extraction. In \cite{donnelly2019effort}, agents compete for a reward in an R\&D setting within a mean-field framework, in which earlier success yields greater rewards for the expended effort. In \cite{li2024mean} agents expend effort to mine cryptocurrency, where an agent's rate of mining depends on their hash rate as well as that of the entire population of miners. Our modeling framework has some similarities to these papers which allows us to employ a nearly direct adaptation of relevant numerical methods to compute equilibrium in our model.
	
	A novel focus of our work is regarding how overall market behaviour is affected by features describing individual agents. In particular, we investigate how the magnitude of competitive interaction, the ability to oversell the asset (with penalty), and price caps affect the total wealth transferred from consumers in the market, the average price paid per unit asset, and the probability that a particular consumer will end up empty handed due to overselling of the asset. The dependence of market behaviour on these phenomena could be used to guide regulatory framework with the goal of achieving desired levels of various measurements of economic welfare.
	
	The rest of the paper is organized as follows: in Section \ref{sec:reference_model} we give an overview of the reference model, which is the single-agent equivalent to the mean-field setting we cover in more detail. In Section \ref{sec:competition_model} we specify our model that incorporates the effects of competition, including the definition of equilibrium which we consider. In Section \ref{sec:NumericalExperiments} we show several examples of numerically solving for equilibrium and investigate the effects of competition and other market phenomena. Section \ref{sec:conclusion} concludes, and longer proofs are contained in the Appendix.
	
	\subsection{General Notation}
	
	Here we introduce the general notation which is used throughout the remainder of the work.
	
		\begin{tabular}{ l l }
			$(\Omega, \mathcal{F}, \{{\mathcal{F}_t}\}_{0\le t\le T}, \mathbb{P})$ & A filtered probability space. \\
			$\mathcal{X}_A$ & Indicator function of the event $A$. \\
			$\mathbb{E}_{t,s,x,q}[Y]$ & Conditional expectation of $Y$ given $S_t = s$, $X_{t^-}=x$ and $Q_{t^-}=q$. \\
			$\overline{Q}, \underline{Q}$ & Finite upper and lower bounds of inventory. \\
			$T$ & Finite terminal time horizon. \\
			$S=(S_t)_{t\in[0,T]}$ & Reference price process. \\
			$\mathcal{A}$ & The set of admissible controls.\\
			$\delta^i,\delta^f$ & Spread process of agent $i$, or induced by feedback control $f$. \\
			$\overline{\delta},\overline{\delta}^f$ & Mean spread posted by all agents, or induced by the feedback control $f$. \\
			$N^{\lambda(\delta)},N^{\lambda(\delta^i, \overline{\delta})}_i$ & Counting process of the number of items sold.\\
			$Q^\delta,Q^{\delta^i, \overline{\delta}}_i$ & Remaining inventory held by an agent. \\
			$X^\delta,X^{\delta^i, \overline{\delta}}_i$ & The cash generated by an agent selling inventory. \\
			$\lambda(\delta),\lambda(\delta, \overline{\delta})$ & The intensity function which determines the rate of inventory sales. \\
			$f(t,q)$ & Feedback form of a Markov strategy. \\
			$P^{f,M}_q,P^f_q$ & Proportion process in the $M$-player setting and in the mean-field limit. \\
		\end{tabular}
	
	\section{Single Agent Reference Model}\label{sec:reference_model}
	
	In this section, we introduce a reference model where we only consider one agent. Many aspects of the dynamics we consider are equivalent to those found in \cite{gallego1994optimal} with some modifications made to the price process and agent's performance criterion. This style of model which relates intensity to price has also been used frequently in the literature on algorithmic trading. See for example \cite{gueant2012optimal}, \cite{gueant2015general}, and \cite{cartea2015optimal}.
	
	We work on a probability space $(\Omega,\mathcal{F},\mathbb{P})$ which we assume supports all random variables and stochastic processes defined below.  We consider an agent who has to liquidate a finite quantity $\overline{Q}$ of a given product within a finite time horizon of length $T$. The reference price process of the product is denoted by $S = (S_t)_{t\in[0,T]}$ with dynamics
	\begin{align}
		\label{eqn:ReferenceProcess}
		dS_t = \sigma\,dW_t\,,
	\end{align}
	where $\sigma>0$ is a constant and $W = (W_t)_{t\in[0,T]}$ is a Brownian motion.
	
	We denote the agent's spread process above the reference price by $\delta=(\delta_t)_{t\in[0,T]}$, so that she continuously quotes her selling price at $S_t + \delta_{t^-}$\footnote{Here by $t^-$, we mean the left limit to time $t$.} at every time point, and is committed to sell one item\footnote{Note that selling one item may be understood as selling a block of units of the product, each block being of the same size.} with the quoted price. Once the agent clears her inventory, she will stop trading. Given a non-negative bounded process $\lambda=(\lambda_t)_{t\in[0,T]}$, her number of items sold follows a counting process denoted by $N^{\lambda}=(N_t^{\lambda})_{t\in[0,T]}$ defined as follows: let $\{u_n\}_{n=1}^\infty$ be a sequence of independent standard uniform random variables. Define
	\begin{align*}
		\tau_0 &= 0\,,\\
		\tau_n &= \inf\{t\geq\tau_{n-1}: e^{-\int_{\tau_{n-1}}^t\lambda_u\,du} \leq u_n\}\,,\\
		N^\lambda_t &= \sup\{n\geq0: t\geq \tau_n\}\,.
	\end{align*}
	Then $N^\lambda$ is a doubly stochastic Poisson process with intensity process $\lambda$ (see \cite{lando1998cox}), and the sequence of times at which $N^\lambda$ jumps is $\{\tau_n\}_{n=1}^\infty$. Subsequently, we will let the intensity process depend on her spread through the relation $\lambda_t = \lambda(\delta_t)$ for a function to be specified later, and we will write $N^{\lambda(\delta)}$ to denote the number of items sold when the agent quotes a spread according to $\delta=(\delta_t)_{t\in[0,T]}$. We denote the indicator function of any event $A$ by $\mathcal{X}_A$. Thus, the agent's inventory $Q^{\delta} = (Q_{t}^{\delta})_{t\in[0,T]}$ satisfies
	\begin{align*}
		dQ_{t}^{\delta} =-\mathcal{X}_{Q_{t^-}^{\delta} > 0}\, dN_{t}^{\lambda(\delta)}\,,
	\end{align*}
	with initial value $Q_0^\delta = \overline{Q}$. As a consequence of her trades, the agent accumulates cash denoted by $X^{\delta} = (X_{t}^{\delta})_{t\in[0,T]}$ with dynamics given by
	\begin{align*}
		dX^{\delta}_{t} = \mathcal{X}_{Q_{t^-}^{\delta} > 0}\,\left(S_t + \delta_{t^-}\right)dN_{t}^{\lambda(\delta)},
	\end{align*}
	with given initial cash $X_{0}^{\delta}=x_0$.
	
	The agent's goal is to maximize her expected P\&L at time $T$ with an inventory penalty. Specifically, her value functional is
	\begin{align*}
		J(\delta) = \mathbb{E} \left[ X_{T}^{\delta}+Q_{T}^{\delta}\,\left(S_T-\alpha\,Q_{T}^{\delta}\right)-\phi\int^T_0 \left(Q_{u}^{\delta}\right)^2du\right]\,,
	\end{align*}
	where $\alpha$ and $\phi$ are positive constants. The objective functional consists of three parts. The first term in the expectation $X_{T}^{\delta}$ is the amount of cash at time $T$. The second term $Q_{T}^{\delta}\,\left(S_T-\alpha\, Q_{T}^{\delta}\right)$ corresponds to the salvage value of unsold items remaining at time $T$, and the parameter $\alpha$ represents a penalty for failing to sell all inventory by the end of the trading period. The running inventory penalty term $\phi\int^T_0 (Q_{u}^{\delta})^2du$ penalizes deviation from zero during the entire trading horizon, so it can be interpreted as an urgency penalty, and the parameter $\phi$ acts as a risk control. Thus, the agent's dynamic value function is
	\begin{align}
		\label{eqn:ValueFunctionSingle}
		H(t,s,x,q) = \sup_{(\delta_u)_{t\le u\le T}\in\mathcal{A}}\mathbb{E}_{t,s,x,q} \left[ X_{T}^{\delta}+Q_{T}^{\delta}\,\left(S_T-\alpha\,Q_{T}^{\delta}\right)-\phi\int^T_{t} \left(Q_{u}^{\delta}\right)^2du\right]\,,
	\end{align}
	where the set of admissible controls $\mathcal{A}$ consists of Markov feedback controls of the form $\delta_t = f(t,Q_t^\delta)$ which are bounded below by a large negative constant $\underline{B}$.\footnote{It is well known that restricting to Markov strategies generally does not reduce performance. See for example \cite{oksendal2007applied}.} That is, we have
	\begin{align*}
		\mathcal{A} = \left\{ \delta\, \Big\vert\, \delta_t = f(t,Q^\delta_t)\,, \quad \underline{B}\le f \right\}\,,
	\end{align*}
	The expectation $\mathbb{E}_{t,s,x,q}$ is conditional on $S_t = s$, $X_{t^-}=x$, and $Q_{t^-}=q$.
	
	To solve the optimal control problem described above we consider the associated Hamilton-Jacobi-Bellman (HJB) equation along with terminal conditions given by (see for example \cite{gueant2012optimal})
	\begin{align}
		\label{eqn:HJBOriginalSingle}
		\begin{split}
			&\partial_t H + \frac{1}{2}\,\sigma^2\,\partial_{ss}H-\phi\, q^2+\sup_{\delta\ge \underline{B}}\lambda(\delta) \left[ H(t,s,x+s+\delta,q-1)-H(t,s,x,q)\right]\,\mathcal{X}_{q>0} = 0\,,\\
			&H(T,s,x,q) = x + q\,(s-\alpha\,q)\,.
		\end{split}
	\end{align}
	To solve the HJB equation, we use the excess value ansatz for $H$ given by
	\begin{align}
		\label{eqn:ExecessValueAnsatz}
		H(t,s,x,q) = x+q\,s+h_q(t)\,.
	\end{align}
	The first term $x$ is the current cash in hand, the second term $q\,s$ accounts for the reference value of the current inventory, and the last term $h_q(t)$ represents the expected profit or loss from liquidating $q$ items with both the terminal penalty and the running inventory penalty. Substituting the ansatz into HJB equation gives a system of ODEs with terminal condition given by
	\begin{align}
		\label{eqn:HJBSingle}
		\begin{split}
			&\partial_t h_q -\phi\,q^2+ \sup_{\delta\ge \underline{B}} \lambda(\delta)\left[\delta+h_{q-1}(t)-h_q(t) \right]\,\mathcal{X}_{q>0} = 0\,,\\
			&h_q(T) = -\alpha\,q^2\,.
		\end{split}
	\end{align}
	To solve for the optimal feedback controls in terms of $h_q(t)$, we assume that the intensity function $\lambda$ follows 
	\begin{align}
		\label{eqn:PoissonIntensityAssumptionSingle}
		\lambda(\delta) = A\exp{\{-\kappa\,\delta\}}\,
	\end{align}
	where $A,\kappa \ge 0$. Here $\kappa$ is the sensitivity of the demand rate of the product with respect to the agent's spread\footnote{Other forms of the intensity function can be used, but a particular property which is desired is that there is a unique maximizer of $\delta\,\lambda(\delta)$. We choose this exponential form for tractability reasons.}. This means that the smaller the agent's spread is, the faster the item will be sold.
	
	We delay a proof of existence and uniqueness of a solution to equation \eqref{eqn:HJBSingle} until we have developed the control problem that incorporates multiple agents. The solution to equation \eqref{eqn:HJBSingle} is a special case of the solution to equation \eqref{eqn:HJB} which appears in Proposition \ref{prop:SolutiontoHJBEquation}. A verification that solutions to the HJB equation \eqref{eqn:HJBOriginalSingle} yield the value function defined in equation \eqref{eqn:ValueFunctionSingle} is also postponed to the more general setting when we include competition (see Theorem \ref{theo:VerificationTheorem}).
	
	\begin{proposition}[Optimal Feedback Control]
		\label{prop:SolutiontoHJBEquationSingle}
		The optimal feedback controls of the HJB equation are given by
		\begin{align}
			\label{eqn:OptimalFeedbakControlSingle}
			\delta^*(t,q) = \max\left\{  \frac{1}{\kappa}+h_q(t)-h_{q-1}(t), \underline{B}\right\}\,, \quad q\neq0\,.
		\end{align}
	\end{proposition}
	\begin{proof}
		Substituting the intensity function from equation \eqref{eqn:PoissonIntensityAssumptionSingle} into equation \eqref{eqn:HJBSingle}, we have
		\begin{align*}
			\partial_t h_q -\phi\,q^2+ \sup_{\delta\ge \underline{B}} A\,\exp{\{-\kappa\,\delta\}}\left[\delta+h_{q-1}(t)-h_q(t) \right]\,\mathcal{X}_{q>0} = 0\,.
		\end{align*}
		Applying first order conditions to the $\sup$ term and letting the result equal $0$ gives us
		\begin{align*}
			\Tilde{\delta}^{*}(t,q) =   \frac{1}{\kappa}+h_q(t)-h_{q-1}(t)\,, \quad q\neq0\,.
		\end{align*}
		It is an easy task to check that the first derivative of the sup term is positive for $\delta<\Tilde{\delta}^{*}$ and negative for $\delta>\Tilde{\delta}^{*}$. Therefore, when the lower bound $\underline{B}\le \Tilde{\delta}^{*}$, the stationary point $\Tilde{\delta}^{*}$ is the maximizer; when $\underline{B}>\Tilde{\delta}^{*}$, the lower bound itself is the maximizer. Thus, the optimal feedback control is given by \eqref{eqn:OptimalFeedbakControlSingle}.\qed
	\end{proof}
	
	To intuitively understand the optimal feedback control form of the optimal spread, note that the first term $\frac{1}{\kappa}$ maximizes the rate of expected incoming profit, without any consideration for risk or terminal penalties. The quantity $h_{q-1}(t) - h_q(t)$ is the change of the expected future risk-adjusted P\&L due to selling one item. Thus, the negative of this value, $h_{q}(t) - h_{q-1}(t)$, represents the amount that the agent is willing to adjust her spread, positively or negatively, depending on the change of future risks and potential future profits by having one fewer item to sell.
	
	\begin{figure}[!htp]
		\centering
		\includegraphics[width=0.7\textwidth]{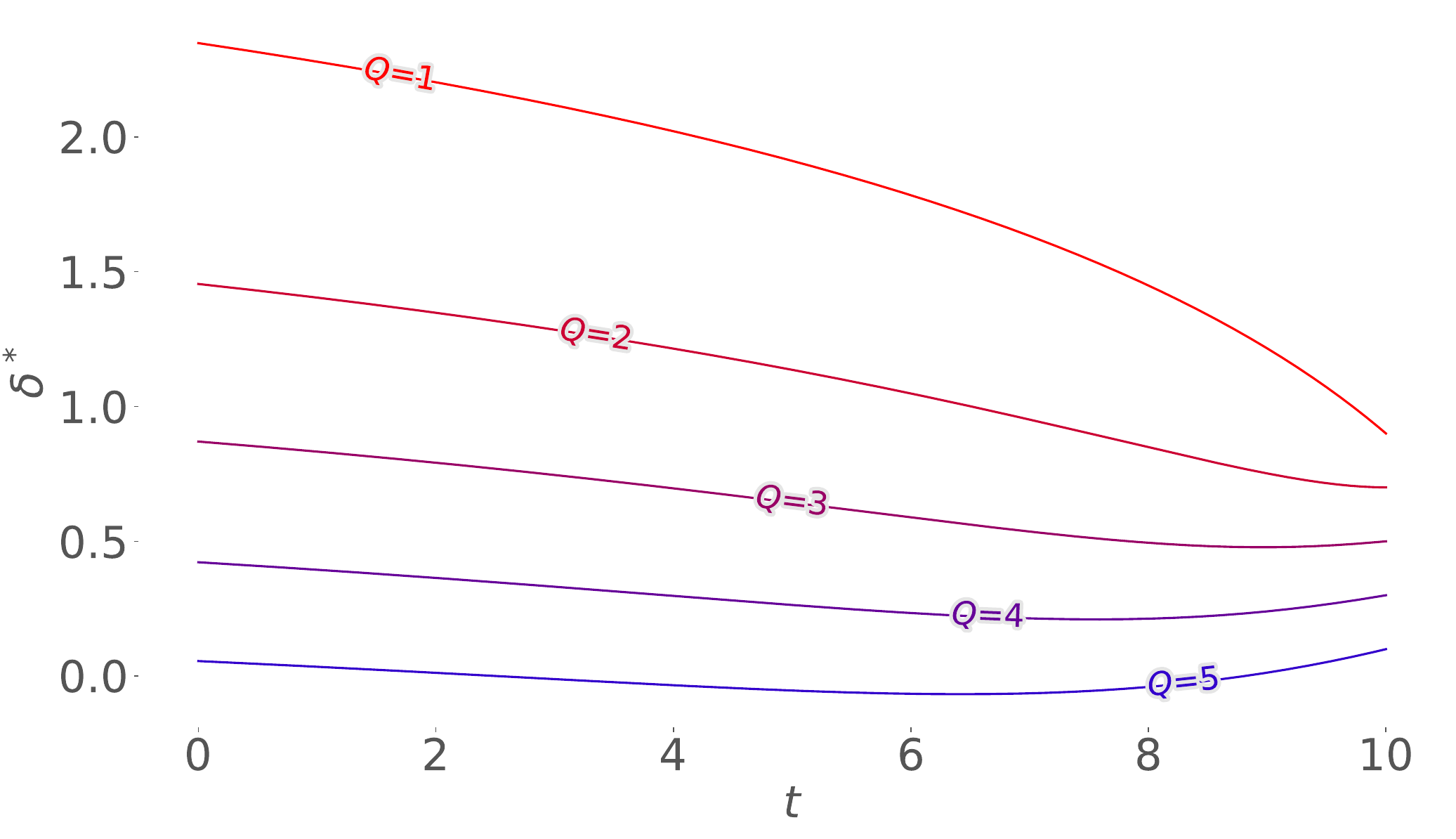}
		\caption{Optimal pricing strategy in reference model as a function of $t$ for various $Q$. Other parameters are $T=10$, $\overline{Q}=5$, $\alpha=0.1$, $\kappa=1$, $\phi=0.03$, $A=1$, and $\underline{B}=-10$.}
		\label{fig:MultiQuoteReference}
	\end{figure}
	
	Figure \ref{fig:MultiQuoteReference} shows the optimal spreads when equation $\eqref{eqn:HJBSingle}$ is numerically solved after substituting the optimal feedback controls in Proposition \ref{prop:SolutiontoHJBEquationSingle}. We observe that the optimal spreads are decreasing in inventory level monotonically. This is sensible as the agents with more inventory have more urgency to sell the products and avoid the terminal penalty
	and uncertainty of the underlying price. In addition, for lower inventory levels, the optimal spreads are decreasing in time. However, for higher inventory levels, the agent would choose a smaller spread in the early time interval to sell fast, and then increase her spread near the terminal time $T$. This is because at the final time, the agent is more willing to benefit from a trading opportunity through proposing a higher quote and receiving a lower probability to sell. Further technical properties in the single agent setting, such as the growth rate of the optimal spread, can be found in \cite{gueant2012optimal}.
	
	We also see that optimal spread can become negative at some times for particular inventory levels. This happens when there is a need to liquidate products at a very fast speed to avoid a terminal penalty or underlying price risk. Depending on the model parameters, this could result in a negative quoted price. One method of resolving this situation would be to make the lower bound of admissible strategies be time-dependent based on the reference price, such as $\delta_t \in [-S_t,\infty)$, but we do not pursue this further.
	
	\section{Model with Competition}\label{sec:competition_model}
	
	In this section we introduce a model with multiple agents, specify the dynamics of each agent's inventory and wealth processes based on the prices they quote, and specify the optimization problem that each agent attempts to solve. Additionally, and distinct from the previous section with only one agent, we give the dynamics of the distribution of inventory across all agents in the mean-field limit, and finally we define the corresponding notion of equilibrium. This model is similar to that of \cite{gallego2014dynamic}, and while we initially consider a finite number of agents we will only search for equilibrium in the mean-field limit of an infinite number of agents because this lowers the complexity of numerical computation, thus allowing further analysis of the solution.
	
	We consider a finite collection of agents indexed by $i\in\{1,\dots,M\}$ aiming to liquidate shares of a given product. The reference price process is the same as in equation \eqref{eqn:ReferenceProcess}.
	
	\subsection{Inventory, Wealth, and Performance Criterion}
	
	Each agent chooses a spread process $\delta^i = (\delta^i_t)_{t\in[0,T]}$, but only actively posts a price when they hold non-zero inventory. Once their inventory reaches zero, their inventory state remains constant until time $T$, effectively leaving the market. The mean spread chosen by all agents in the market is denoted by $\overline{\delta} = (\overline{\delta}_t)_{t\in[0,T]}$. Each agent has an associated inventory and wealth process denoted by $Q^{\delta^i,\overline{\delta}}_{i} = (Q^{\delta^i,\overline{\delta}}_{i,t})_{t\in[0,T]}$ and $X^{\delta^i,\overline{\delta}}_{i} = (X^{\delta^i,\overline{\delta}}_{i,t})_{t\in[0,T]}$, respectively, and we assume all agents begin at time $t=0$ with the same maximum inventory $Q^{\delta^i,\overline{\delta}}_{i,0} = \overline{Q}$. The superscripts of $\delta^i$ and $\overline{\delta}$ are to indicate that the inventory and wealth processes of a single agent depend not only on their own strategy, but also on the mean strategy across all agents\footnote{We assume that dynamics depend on the distribution of spreads only through the mean spread, but other dependencies on the distribution of spreads across agents could also be implemented.}. Keeping in mind that spreads are only posted by agents with non-zero inventory, the mean spread at time $t$ is equal to
	\begin{align}\label{eqn:delta_bar_M}
		\overline{\delta}_t &= \frac{\sum_{i=1}^M \delta^i_t\,\mathcal{X}_{Q_{i, t}^{\delta^i,\overline{\delta}}>0}}{\sum_{i=1}^M \mathcal{X}_{Q_{i,t}^{\delta^i,\overline{\delta}}>0}}\,.
	\end{align}
	The inventory and wealth of agent $i$ will change based on the arrivals of a counting process denoted by $N^{\lambda^i}_{i} = (N^{\lambda^i}_{i,t})_{t\in[0,T]}$ which is defined similarly to the single agent reference model. Given a collection of non-negative bounded processes $\lambda^i = (\lambda^i_t)_{0\leq t \leq T}$ for each $i$, let $\{u^i_n\}_{i,n=1}^\infty$ be a collection of independent standard uniforms. Define
	\begin{align*}
		\tau^i_0 &= 0\,,\\
		\tau^i_n &= \inf\{t\geq\tau_{n-1}^i: e^{-\int_{\tau^i_{n-1}}^t\lambda^i_u\,du} \leq u^i_n\}\,,\\
		N^{\lambda^i}_{i,t} &= \sup\{n\geq0: t\geq \tau^i_n\}\,.
	\end{align*}
	Then each $N^{\lambda^i}_i$ is a doubly stochastic Poisson process with intensity process $\lambda^i$. Subsequently, we let the intensity process of agent $i$ depend on her own spread and the mean spread of all agents through the relation $\lambda^i_t = \lambda(\delta_t^i,\overline{\delta}_t)$ for some function $\lambda$ to be specified later, and we write $N^{\lambda(\delta^i,\bar{\delta})}_{i}$ to denote the number of items sold by agent $i$.
	
	Thus, the inventory of agent $i$ changes according to the dynamics
	\begin{align}
		\label{eqn:InventoryProcess}
		dQ_{i,t}^{\delta^i,\overline{\delta}} = -\mathcal{X}_{Q_{i,t^-}^{\delta^i,\overline{\delta}} > 0}\,dN^{\lambda(\delta^i,\overline{\delta})}_{i,t}\,.
	\end{align}
	When a transaction is executed with agent $i$ at time $t$ it occurs at a price of $S_t+\delta^i_{t^-}$ so that their wealth changes according to
	\begin{align}
		\label{eqn:CashProcess}
		dX_{i,t}^{\delta^i,\overline{\delta}} = \mathcal{X}_{Q_{i,t^-}^{\delta^i,\overline{\delta}} > 0}\,(S_t + \delta^i_{t^-})\,dN^{\lambda(\delta^i,\overline{\delta})}_{i,t}\,.
	\end{align}
	with given initial cash $X_{i,0}^{\delta^i,\overline{\delta}} = x_0^i$. For a single agent with index $i$, given fixed strategies for each other agent, their performance criterion is given by
	\begin{align}
		\label{eqn:FiniteObjectiveFunctional}
		J^M_i(\delta^i;\delta^{-i}) = \mathbb{E}\biggl[X_{i,T}^{\delta^i,\overline{\delta}} + Q_{i,T}^{\delta^i,\overline{\delta}}\,(S_{T} - \alpha\,Q_{i,T}^{\delta^i,\overline{\delta}}) - \phi\int_0^T (Q_{i,t}^{\delta^i,\overline{\delta}})^2\,dt\biggr]\,,
	\end{align}
	where $\delta^{-i}\coloneqq (\delta^1, \cdots, \delta^{i-1}, \delta^{i+1}, \cdots, \delta^M)$ is the collection of spreads excluding agent $i$.
	
	\subsection{Mean-Field Population Dynamics}
	
	We now proceed to the mean-field limit $M\rightarrow\infty$. Because the agents are homogeneous with respect to their inventory dynamics, wealth dynamics, and performance criteria, it is expected in equilibrium that they will employ Markov strategies with the same feedback form.\footnote{As mentioned previously, the restriction to Markov strategies generally does not reduce performance. See \cite{oksendal2007applied}.} Thus, we may consider a representative agent for all subsequent computations which allows us to suppress the index $i$ from all quantities. Suppose all agents choose their spread process $\delta^f=(\delta^f_t)_{t\in[0,T]}$ according to the function $f$, continuous in its first argument, such that
	\begin{align}
		\delta^{f}_t &= f(t,Q_t^{\delta^{f},\overline{\delta}^f})\,,\label{eqn:delta_f}
	\end{align}
	where $\overline{\delta}^f=(\overline{\delta}^f_t)_{t\in[0,T]}$ represents the mean spread process posted by all agents.\footnote{This form of feedback control is inspired by the optimizer in the single agent model. We have been unable to prove in the mean-field setting that equilibria do not depend on the common noise $S$, but the classification we give below results in a bona fide mean-field Nash equilibrium.} An important quantity to track is the proportion of all agents that hold any particular value of inventory. For the inventory level $q\in\{0,1,\dots,\overline{Q}\}$, denote this proportion by $P_q^{f} = (P_{q,t}^{f})_{t\in[0,T]}$ which is equal to
	\begin{align*}
		P_{q,t}^{f} &= \lim_{M\rightarrow\infty} \frac{1}{M}\sum_{i=1}^M \mathcal{X}_{Q_{i, t}^{\delta^f,\overline{\delta}^f}=q}\,.
	\end{align*}
	With all agents using the same Markov feedback strategy, the collection of inventory processes is an exchangeable Markov mean-field particle system, and the limit above exists by the propagation of chaos and convergence of the empirical measure of the inventory processes (see \cite{chaintron2022propagation} for more details). Additionally, the empirical measure converges to the law of the representative agent's inventory process $Q^{\delta^{f},\overline{\delta}^f}$, giving the relation
	\begin{align*}
		\mathbb{P}(Q^{\delta^{f},\overline{\delta}^f}_t = q) &= P^f_{q,t}\,.
	\end{align*}

		The mean spread $\overline{\delta}^f$ can then be written as a weighted average
		\begin{align}
			\overline{\delta}^f_t &= \frac{\sum_{q=1}^{\overline{Q}} f(t,q)\, P^f_{q, t}}{1 -  P^f_{0, t}}\,.\label{eqn:delta_bar_f}
		\end{align}
		Further, any single agent with inventory $q$ leaves that state with intensity $\lambda(f(t,q), \overline{\delta}_t^f)$, which means there is a flow of agents from state $q$ to state $q-1$ at a rate of $\lambda(f(t,q), \overline{\delta}_t^f)\,  P^f_{q,t}$. Thus, the dynamics of each $P^f_q$ can be written (see \cite{yang2013nonatomic} and \cite{sun2006exact}) as
		\begin{subequations}\label{eqn:dP}
			\begin{align}
				&dP^f_{\overline{Q},t} = -P^f_{\overline{Q},t}\,\lambda\left(f(t,\overline{Q}), \overline{\delta}_t^f\right)\, dt\,,\\
				&dP^f_{q,t} = P^f_{q+1,t}\,\lambda\left(f(t,q+1), \overline{\delta}_t^f\right)\,dt -P^f_{q,t}\,\lambda\left(f(t,q), \overline{\delta}_t^f\right)\,dt\,,\qquad q\neq 0, \,\overline{Q}\,,\\
				&dP^f_{0,t} = P^f_{1,t}\,\lambda\left(f(t,1), \overline{\delta}_t^f\right)\,dt\,,\\
				\nonumber&\hspace{0mm}\text{and because each agent begins with maximum inventory $\overline{Q}$ the initial conditions are given by}\\
				&\hspace{40mm}P^f_{\overline{Q},0} = 1\,, \quad P^f_{q,0} = 0\,, \qquad q\neq \overline{Q}\,.
			\end{align}
		\end{subequations}
		
		 Inspection of \eqref{eqn:delta_bar_f} and \eqref{eqn:dP} reveals that for a fixed function $f$, the resulting $\overline{\delta}^f$ is deterministic and continuous on $[0,T]$ and therefore also bounded.
		
		\subsection{Mean-Field Performance Criterion and Equilibrium}
		
		Each agent still wishes to maximize their expected terminal wealth subject to all other agents fixing their strategy. Inspection of equations \eqref{eqn:delta_bar_f} and \eqref{eqn:dP} show that for a fixed $f$, the process $\overline{\delta}^f$ is deterministic, and in this case the dynamics of a representative agent's wealth and inventory do not have explicit dependence on $P^f$, but only on $\overline{\delta}^f$. Thus, when specifying the performance criterion of a representative agent, it also will not depend on the entire flow of the population distribution $P$, but only on the mean spread $\overline{\delta}$. For a given deterministic mean spread, we write the performance criterion as
		\begin{align*}
			J(\delta^f; \overline{\delta}) = \mathbb{E}\biggl[X_T^{\delta^f,\overline{\delta}} + Q_T^{\delta^f,\overline{\delta}}\,(S_T - \alpha\,Q_T^{\delta^f,\overline{\delta}}) - \phi\int_0^T (Q_t^{\delta^f,\overline{\delta}})^2\,dt\biggr]\,.
		\end{align*}
		
		\begin{definition}[Markov Mean-Field Nash Equilibrium]\label{def:mean_field_equilibrium}
			A Markov mean-field Nash equilibrium is a pair of functions $f$ and $P$ such that
			\begin{enumerate}[i)]
				\item $P = P^f$, where $P^f$ satisfies \eqref{eqn:dP} with $\overline{\delta}^f$ given by \eqref{eqn:delta_bar_f},
				\item $J(\delta^{f}; \overline{\delta}^f) \ge J(\delta^{g}; \overline{\delta}^f)$, for all functions $g\neq f$, with $\overline{\delta}^f$ given by \eqref{eqn:delta_bar_f}.
			\end{enumerate}
		\end{definition}
		The interpretation of the equilibrium given in Definition \ref{def:mean_field_equilibrium} is that we seek a function $f$ (along with its induced mean-field population distribution, $P$) which determines the strategy of all agents through the relation \eqref{eqn:delta_f} such that no particular agent can increase their performance by deviating from that strategy. The analytical tractability of this problem over the finite-agent case is due to the fact that in the mean-field setting, no single agent can affect the dynamics of the whole population, as determined by equation \eqref{eqn:dP} by changing their own strategy. Finding the mean-field Nash equilibrium thus comes down to solving a single optimization problem along with checking a consistency condition. Specifically, we take a mean spread process $\overline{\delta}$ as given and solve for the optimal strategy for a representative agent in feedback form. Then we check for the consistency relation that if every agent uses the computed strategy then the mean spread process, as determined by equations \eqref{eqn:delta_bar_f} and \eqref{eqn:dP} is given by $\overline{\delta}$, the same process initially given. If this holds then the feedback control and the corresponding population distribution is an equilibrium.
		
		\subsection{Mean-Field Optimization Problem}
		
		The intensity function $\lambda(\delta,\overline{\delta})$ for the representative agent states that her sales depend on both her own spread and the mean spread of all agents. We assume that $\lambda$ is twice differentiable in both inputs. According to \cite{dockner1988optimal}, the intensity $\lambda$ should satisfy the following assumptions:
		\begin{subequations}
			\label{eqn:IntensityConditions}
			\begin{align}
				\label{eqn:DerivativeAgainstOwnQuote}
				\frac{\partial\lambda}{\partial\delta} &< 0\,,\\
				\label{eqn:DerivativeAgainstOtherQuote}
				\frac{\partial\lambda}{\partial\overline{\delta}} &> 0\,,\\
				\label{eqn:DerivativeSum}
				\frac{\partial\lambda}{\partial\delta}+\frac{\partial\lambda}{\partial\overline{\delta}} &<0\,,\\
				\label{eqn:DerivativeProduct}
				\frac{\partial^2\lambda}{\partial\delta\partial\overline{\delta}} &\le 0\,,\\
				\label{eqn:TechniqueOne}
				\lambda\frac{\partial^2\lambda}{\partial\delta^2} &<2 \left(\frac{\partial\lambda}{\partial\delta}\right)^2\,.
			\end{align}
		\end{subequations}
		
		Conditions \eqref{eqn:DerivativeAgainstOwnQuote}-\eqref{eqn:DerivativeProduct} are standard assumptions in competition theory. Condition \eqref{eqn:DerivativeAgainstOwnQuote} is equivalent to assuming a downward-sloping demand curve, that is, with a fixed mean spread a lower spread quoted from the representative agent leads to higher demand. Condition \eqref{eqn:DerivativeAgainstOtherQuote} states that an increase in the mean spread causes the sales of the representative agent to rise. Condition \eqref{eqn:DerivativeSum} implies that if all agents raise their spreads by the same amount, their sales will decrease simultaneously. Condition \eqref{eqn:DerivativeProduct} implies that the higher the mean spread, the easier it is for the representative agent to increase her probability of selling by reducing her own spread. Conversely, if the mean spread is higher, the representative agent will lose market share more quickly when she raises her spread. Condition \eqref{eqn:TechniqueOne} is a technical condition, originating from the strict concavity of the Hamiltonian with respect to spread (i.e., the control variable).
		
		We assume that the instantaneous intensity $\lambda$ is of the following form  similar to \cite[(4.2)]{chenavaz2021dynamic}
		\begin{align}
			\begin{split} \label{eqn:PoissonIntensityAssumption} \lambda(\delta, \overline{\delta}) &= A\,\exp\left\{-\kappa\, \delta+\beta\,(\overline{\delta}-\delta)\right\}\\
				&=A\,\exp\left\{-(\kappa+\beta)\,\delta+\beta\,\overline{\delta}\right\},
			\end{split}
		\end{align}
		for constants $A,\kappa,\beta > 0$. This assumption satisfies all the conditions stated in equation \eqref{eqn:IntensityConditions}. Compared to the single-agent intensity function defined in equation \eqref{eqn:PoissonIntensityAssumptionSingle}, there is an extra term in the exponent that represents the competitiveness in the market whose strength is characterized by $\beta$. Under this assumption, there are two drivers for faster execution: lower quoted price of one's own, and higher mean quoted price of other market participants. This is consistent with the expected sales loss to competitors when one's price becomes higher relative to the market price for the same product.
		
		In equation \eqref{eqn:DerivativeSum}, we introduce that total sales of the market decreases when all agents raise their spreads by the same amount. Now we investigate the change of total sales of the market in another scenario where a representative agent, denoted by index $i$, fixes her spread $\delta^i$, and other agents increase their spreads by identical amounts. First we consider the finite $M$-player game and then take $M$ to infinity. According to equation \eqref{eqn:PoissonIntensityAssumption}, the intensity $\lambda^i$ for the $i$-th agent is
		\begin{align*}
			\lambda^i&=A\exp{\left\{-\left(\kappa+\beta\right)\delta^i+\frac{\beta}{M}\sum_{j=1}^M\delta^j\right\}}\\
			&=A\exp{\left\{-\left(\kappa+\frac{\beta(M-1)}{M}\right)\delta^i+\frac{\beta}{M}\sum_{j\neq i}\delta^j\right\}}\,.
		\end{align*}
		The change in total market intensity then depends on the sum of the derivatives of individual intensities with respect to spreads of all other agents, which is
		\begin{align*}
			\sum^M_{j=1}\sum_{k\neq i}\frac{\partial\lambda^j}{\partial\delta^k} &= \sum_{j\neq i}\frac{\partial\lambda^i}{\partial\delta^j}+\sum_{j\neq i}\sum_{k\neq i}\frac{\partial\lambda^j}{\partial\delta^k}\\
			&=\frac{\beta (M-1)}{M}\lambda^i+\sum_{j\neq i}\left(-\left(\kappa+\frac{\beta(M-1)}{M}\right)\lambda^j+\frac{\beta(M-2)}{M}\lambda^j\right)\\
			&=\beta\left(1-\frac{1}{M}\right)\lambda^i-\left(\kappa+\frac{\beta}{M}\right)\sum_{j\neq i} \lambda^j\\
			&\le \beta\left(1-\frac{1}{M}\right)\lambda^i-\left(\kappa \left(M-1\right)+\beta\left(1-\frac{1}{M}\right)\right)\min_{j\neq i} \lambda^j\,.
		\end{align*}
		For sufficiently large $M$, this quantity is negative. This means that if every agent except for the representative agent raises their spreads by the same amount, although the intensity of selling for the representative agent increases, the total sales across all agents decrease which is economically sensible.
		
		In the mean-field game setting, for a given deterministic function $\overline{\delta}$, the representative agent optimizes
		\begin{align}
			\label{eqn:ControlProblem}
			H(t,s,x,q;\overline{\delta})=\sup_{(\delta_u)_{t\le u\le T}\in\mathcal{A}} \mathbb{E}_{t,s,x,q} \left[ X_T^{\delta,\overline{\delta}} + Q_T^{\delta,\overline{\delta}}\,(S_T-\alpha\, Q_T^{\delta,\overline{\delta}})-\phi\,\int^T_t (Q_u^{\delta,\overline{\delta}})^2du \right],
		\end{align}
		where the set of admissible controls $\mathcal{A}$ is given by
		\begin{align*}
			\mathcal{A} = \left\{ \delta\, \Big\vert\, \delta_t = f(t,Q_t^{\delta,\overline{\delta}})\,, \quad \underline{B}\le f, \quad f \mbox{ is continuous in } t \right\}\,,
		\end{align*}
		for some large negative constant $\underline{B}$. Similar to Section \ref{sec:reference_model}, this optimization problem has an associated HJB equation and terminal condition given by (see again \cite{gueant2012optimal})
		\begin{align}    \label{eqn:HJBOriginal}
			\begin{split}
				&\partial_t H + \frac{1}{2}\,\sigma^2\,\partial_{ss}H - \phi\, q^2+\sup_{\delta\ge \underline{B}}\lambda(\delta, \overline{\delta}_t) \left[ H(t,s,x+s+\delta,q-1;\overline{\delta})-H(t,s,x,q;\overline{\delta})\right]\,\mathcal{X}_{q>0} = 0\,,\\
				&H(T,s,x,q;\overline{\delta}) = x+q\,(s-\alpha\,q)\,.
			\end{split}
		\end{align}
		We use a similar ansatz as in equation \eqref{eqn:ExecessValueAnsatz} to solve the HJB equation.
		
		\begin{proposition}[Solution to HJB Equation]
			\label{prop:SolutiontoHJBEquation}
			The HJB equation \eqref{eqn:HJBOriginal} admits the ansatz $H(t,s,x,q;\overline{\delta}) = x+q\,s+h_q(t;\overline{\delta})$, where $h$ satisfies
			\begin{align}
				\label{eqn:HJB}
				\partial_t h_q -\phi\,q^2+ \sup_{\delta\ge \underline{B}} A\,\exp{\left\{-(\kappa+\beta)\,\delta+\beta\, \overline{\delta}_t\right\}}\left[\delta+h_{q-1}(t;\overline{\delta})-h_q(t;\overline{\delta}) \right]\,\mathcal{X}_{q>0} = 0\,,
			\end{align}
			subject to $h_q(T;\overline{\delta}) = -\alpha\, q^2$. The optimum in equation \eqref{eqn:HJB} is achieved at
			\begin{align}
				\label{eqn:OptimalFeedbackControls}
				\delta^*(t,q;\overline{\delta}) = \max\left\{  \frac{1}{\kappa+\beta}+h_q(t;\overline{\delta})-h_{q-1}(t;\overline{\delta})\,,\,\, \underline{B}\right\}\,, \quad q\neq0\,.
			\end{align}
			Furthermore, equation \eqref{eqn:HJB} has an unique classical solution.
		\end{proposition}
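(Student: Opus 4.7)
My plan proceeds in three stages --- reducing the HJB PDE to the ODE system via the ansatz, identifying the pointwise maximiser in the sup, and establishing well-posedness of the resulting ODE system by backward recursion on $q$. For the first stage, substitute the trial form $H(t,s,x,q;\overline{\delta}) = x + qs + h_q(t;\overline{\delta})$ into \eqref{eqn:HJBOriginal}. Then $\partial_t H = \partial_t h_q$ and $\partial_{ss}H = 0$, while a one-line calculation gives $H(t,s,x+s+\delta,q-1;\overline{\delta}) - H(t,s,x,q;\overline{\delta}) = \delta + h_{q-1} - h_q$, so the $s$- and $x$-dependence cancels and the PDE reduces to \eqref{eqn:HJB} with terminal condition $h_q(T;\overline{\delta}) = -\alpha q^2$.

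For the second stage, set $F(\delta) := A\,e^{-(\kappa+\beta)\delta + \beta\overline{\delta}}\,[\delta + h_{q-1} - h_q]$ and compute $F'(\delta) = A\,e^{-(\kappa+\beta)\delta + \beta\overline{\delta}}\{1 - (\kappa+\beta)[\delta + h_{q-1} - h_q]\}$, which has the unique zero $\tilde\delta = \tfrac{1}{\kappa+\beta} + h_q - h_{q-1}$. Since $F' > 0$ for $\delta < \tilde\delta$ and $F' < 0$ for $\delta > \tilde\delta$, $F$ is unimodal with global maximiser $\tilde\delta$. Imposing $\delta \geq \underline{B}$ then forces $\delta^* = \max\{\tilde\delta, \underline{B}\}$, which is \eqref{eqn:OptimalFeedbackControls}.

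For the third stage, the indicator $\mathcal{X}_{q>0}$ kills the sup at $q = 0$, so combined with $h_0(T) = 0$ we obtain $h_0 \equiv 0$. Proceed inductively, assuming $h_{q-1}$ has been constructed and is bounded on $[0,T]$. Substituting $\delta^*$ back into the sup recasts \eqref{eqn:HJB} as $\partial_t h_q = \phi q^2 - G_q(t, h_q)$, where $G_q$ is defined piecewise by the interior and boundary branches of $\delta^*$, agrees continuously on the common boundary, is strictly positive and strictly decreasing in $h_q$ with $G_q \to \infty$ as $h_q \to -\infty$ and $G_q \to 0^+$ as $h_q \to +\infty$, and is locally Lipschitz in $h_q$ uniformly in $t$. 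Picard--Lindel\"of therefore produces a unique $C^1$ solution in a neighbourhood of $t = T$.

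The main obstacle is promoting this local solution to a global one on $[0,T]$, because the interior branch of $G_q$ is exponential in $-h_q$ and could in principle permit finite-time blow-up under backward integration. To control this I would derive an a priori two-sided bound on $h_q$. The lower bound comes for free from $G_q > 0$: integrating $\partial_t h_q \leq \phi q^2$ backward from $T$ yields $h_q(t) \geq -\alpha q^2 - \phi q^2(T-t)$. For the upper bound, monotonicity and surjectivity of $h \mapsto G_q(t, h)$ onto $(0, \infty)$ supply a unique $M(t)$ with $G_q(t, M(t)) = \phi q^2$, and a standard ODE comparison then gives $h_q(t) \leq \max\{-\alpha q^2,\, \sup_{s \in [0,T]} M(s)\}$, with the right-hand side depending only on $\|h_{q-1}\|_\infty$ and $\|\overline{\delta}\|_\infty$. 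On the resulting compact region $G_q$ is globally Lipschitz in $h_q$, so the Picard solution concatenates to a unique classical solution on all of $[0,T]$, and the induction terminates at $q = \overline{Q}$.
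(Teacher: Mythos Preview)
Your argument is correct, and the ansatz verification and maximiser computation coincide with the paper's. For existence and uniqueness, however, the paper takes a shorter route. Rather than inducting on $q$ and deriving a priori bounds, it treats the full vector $\mathbf{h}=(h_0,\dots,h_{\overline{Q}})$ as solving $\partial_t\mathbf{h}=\mathbf{F}(\mathbf{h})$ and shows $\mathbf{F}$ is \emph{globally} Lipschitz: with $f(x,y)=\sup_{\delta\ge\underline{B}}A\,e^{-(\kappa+\beta)\delta+\beta\overline{\delta}}(\delta+x-y)$, one checks $|\partial_x f|=|\partial_y f|\le A\,e^{-(\kappa+\beta)\underline{B}+\beta\overline{\delta}}$ on both the interior region and the constrained region, the bound coming precisely from the floor $\delta\ge\underline{B}$. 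Picard--Lindel\"of then yields a global solution on $[0,T]$ immediately, with no comparison argument needed. In other words, the blow-up you guard against cannot actually occur here, because the constraint $\delta\ge\underline{B}$ already caps the derivative of $G_q$; your machinery is sound but heavier than the problem requires. On the other hand, your approach buys some robustness: if the control were unconstrained ($\underline{B}=-\infty$), the interior branch of $G_q$ would be genuinely only locally Lipschitz in $h_q$ and the paper's global-Lipschitz computation would fail, whereas your a priori lower bound $h_q(t)\ge-\alpha q^2-\phi q^2(T-t)$ (which does not use $\underline{B}$) still traps the solution in a compact set and closes the argument.
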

		\begin{proof}
			See Appendix \ref{proof:SolutiontoHJBEquation}.
		\end{proof}
		
		We see in Proposition \ref{prop:SolutiontoHJBEquation} that the feedback form of the optimal strategy is similar in form to the single agent case. There are two differences, one being a change based on immediate effects of competition which sees $\kappa$ replaced $\kappa+\beta$, and the other is the fact that the change in expected future risk-adjusted P\&L, represented by $h_{q-1}(t;\overline{\delta}) - h_q(t;\overline{\delta})$, will also be different from the corresponding term in Proposition \ref{prop:SolutiontoHJBEquationSingle}. The terminal condition $h_q(T;\overline{\delta}) = -\alpha q^2$ determines the optimal spreads at time $T$ regardless of equilibrium considerations, and inspection shows that these spreads will be smaller than the single agent case, as would be expected with the effect of competition. The same cannot necessarily be said for all $t<T$ without solving for equilibrium.
		
		To show that the solution to the HJB equation is indeed the solution to the control problem, we prove the following verification theorem. In particular, this establishes that the strategy computed in the iterative algorithm of Section \ref{sec:Algorithm} is optimal given an assumed mean spread process.
		
		\begin{theorem}[Verification Theorem]
			\label{theo:VerificationTheorem}
			Given continuous $\overline{\delta}:[0,T]\rightarrow\mathbb{R}$, let $h_q(t;\overline{\delta})$ be the solution to equation \eqref{eqn:HJB}. Then $H(t,s,x,q;\overline{\delta}) = x+q\,s+h_q(t;\overline{\delta})$ is the value function
			to the agent’s control problem \eqref{eqn:ControlProblem} and the optimal controls are given by equation \eqref{eqn:OptimalFeedbackControls} in feedback form.
		\end{theorem}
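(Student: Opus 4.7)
The plan is to execute the standard verification argument for a jump-diffusion control problem: apply It\^o's formula with jumps to the candidate function $H(t, S_t, X_t^{\delta,\overline{\delta}}, Q_t^{\delta,\overline{\delta}};\overline{\delta})$ along an arbitrary admissible control $\delta\in\mathcal{A}$, use Equation \eqref{eqn:HJB} to control the drift, take expectations to obtain $H\ge J$, and then verify that the feedback $\delta^*$ of Proposition \ref{prop:SolutiontoHJBEquation} achieves equality.

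Because of the ansatz $H(t,s,x,q;\overline{\delta}) = x + qs + h_q(t;\overline{\delta})$, the candidate is linear in $(s,x)$ and $C^1$ in $t$ (in particular $\partial_{ss}H=0$), so the It\^o computation is direct. The continuous part contributes $\partial_t h_{Q_{t^-}}\,dt + \sigma Q_{t^-}\,dW_t$, and at a jump of $N^{\delta,\overline{\delta}}$ the function changes by $\delta_{t^-} + h_{Q_{t^-}-1}(t;\overline{\delta}) - h_{Q_{t^-}}(t;\overline{\delta})$ when $Q_{t^-}>0$. Writing $dN^{\delta,\overline{\delta}}_t = \lambda(\delta_{t^-},\overline{\delta}_t)\,dt + dM_t$ for a local martingale $M$, and substituting $\partial_t h_q = \phi q^2 - \sup_{\delta'\ge \underline{B}}\lambda(\delta',\overline{\delta})[\delta' + h_{q-1} - h_q]\mathcal{X}_{q>0}$ from Equation \eqref{eqn:HJB} into the drift, I obtain
\begin{align*}
dH_t = \phi\,Q_{t^-}^2\,dt - R_t\,dt + \sigma\,Q_{t^-}\,dW_t + \mathcal{X}_{Q_{t^-}>0}\left[\delta_{t^-} + h_{Q_{t^-}-1} - h_{Q_{t^-}}\right]dM_t\,,
\end{align*}
where $R_t \ge 0$ is the HJB gap between the pointwise supremum at time $t$ and its value at $\delta'=\delta_{t^-}$, vanishing precisely when $\delta_{t^-} = \delta^*(t,Q_{t^-};\overline{\delta})$. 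Integrating from $t$ to $T$, using the terminal condition $h_q(T;\overline{\delta}) = -\alpha q^2$ so that $H(T,S_T,X_T,Q_T;\overline{\delta}) = X_T + Q_T(S_T - \alpha Q_T)$, and taking expectations yields $H(t,s,x,q;\overline{\delta}) \ge J(\delta;\overline{\delta})$ for every admissible $\delta$; specializing to $\delta=\delta^*$ sends $R_t\equiv 0$ and produces equality.

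The main obstacle is the technical work to upgrade the two stochastic integrals from local to true martingales and to justify the exchange of limit and expectation in the localization. The $\sigma Q_{t^-}\,dW_t$ term is immediate since $Q_t\in\{0,1,\dots,\overline{Q}\}$ is bounded. For the compensated jump integral, the exponential form of $\lambda$ together with the admissibility bound $\delta\ge \underline{B}\ge 0$ and the continuity of $\overline{\delta}$ on $[0,T]$ give a uniform bound $\lambda(\delta_{t^-},\overline{\delta}_t) \le A\exp\{\beta\sup_{[0,T]}\overline{\delta}\}$; combined with the fact that $x\mapsto x^2 e^{-(\kappa+\beta)x}$ is bounded on $[\underline{B},\infty)$ and that the $h_q$'s are bounded on $[0,T]$ (arguing recursively from $h_0\equiv 0$ via the system of ODEs established in Proposition \ref{prop:SolutiontoHJBEquation}), this yields an $L^2$ bound on the integrand and hence a genuine square-integrable martingale. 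Admissibility of $\delta^*$ itself is automatic from its $\max$-form in Equation \eqref{eqn:OptimalFeedbackControls}, which is predictable as a function of $Q_{t^-}$ and bounded below by $\underline{B}$.
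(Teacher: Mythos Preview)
Your proposal is correct and follows essentially the same verification-by-It\^o approach as the paper: apply It\^o with jumps to the ansatz, use the HJB equation \eqref{eqn:HJB} to show the drift has a sign for any admissible $\delta$, take expectations to get $H\ge J$, and then observe that the feedback $\delta^*$ attains equality. Your handling of the integrability of the compensated jump integral---via the bound on $x\mapsto x^2 e^{-(\kappa+\beta)x}$ on $[\underline{B},\infty)$ together with the boundedness of $\overline{\delta}$ and of the $h_q$---is in fact more explicit than the paper's, which appeals somewhat loosely to ``boundedness of $\delta$'' even though admissible controls are only bounded from below.
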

		\begin{proof}
			See Appendix \ref{proof:VerificationTheorem}.
		\end{proof}
		
		In the statement of Theorem \ref{theo:VerificationTheorem}, we make the technical assumption that the given mean spread, $\overline{\delta}$, is continuous. In fact, we only require that it is bounded, for technical reasons needed in the proof. However, this is not a restrictive assumption because any optimal strategy with a feedback form given by equation \eqref{eqn:OptimalFeedbackControls} will be continuous, and therefore bounded, and the resulting $\overline{\delta}$ arising from all agents using this feedback control will also have these properties. Thus, the assumption is without loss of generality.
		
		\section{Numerical Experiments}
		\label{sec:NumericalExperiments}
		
		In this section, we numerically solve for the equilibrium, and illustrate the behaviour of the optimal strategy $\delta^*$, the mean spread $\overline{\delta}$, the inventory distribution across agents $P$, and other quantities of interest.
		
		\subsection{Algorithm}
		\label{sec:Algorithm}
		
		Inspired by \cite{li2024mean}, we use the following algorithm to numerically find an equilibrium.\footnote{The existence of an equilibrium according to Definition \ref{def:mean_field_equilibrium} has not been proven, but for a fixed set of parameters, all of our numerical experiments using the algorithm specified here converge to the same limit (within numerical tolerance) regardless of the choice of initialization function $\overline{\delta}^0$.} First, for a given mean spread we solve the control problem, the optimality of which is ensured by our verification theorem. Then, we use the optimal individual spread to find the population dynamics. With these two solutions, we calculate the corresponding new mean spread and repeat this process until convergence. The details are as follows:
		\begin{enumerate}
			\item We divide the entire time horizon $T$ into an equidistant time grid. Initialize with a mean spread $t \to \overline{\delta}^{(0)}_t$ for every time point.
			\item Given any mean spread, $\overline{\delta}^{(n)}$, we solve the optimal control problem numerically. This is done with a fully explicit backward finite difference method along the lines of \cite{leveque2007finite} to solve equation \eqref{eqn:HJB} starting from time $T$. This simultaneously gives the optimal control for the given mean spread, $\delta^*(\overline{\delta}^{(n)})$, in feedback form through equation \eqref{eqn:OptimalFeedbackControls}.
			\item With the optimal control $\delta^*(\overline{\delta}^{(n)})$ given in feedback form, we compute the population process $P^{\delta^*}$ using equation \eqref{eqn:dP}. This is also done numerically using a fully explicit forward finite difference method.
			\item We introduce a learning rate parameter $\gamma \in [0,1)$ to update the mean spread process. To reduce oscillations in searching for the equilibrium, we choose $\gamma$ to be a small number, and we update the mean spread according to
			\begin{align*}
				\overline{\delta}^{(n+1)}_t = (1-\gamma)\, \overline{\delta}^{(n)}_t + \gamma\,\frac{\sum_{q=1}^{\overline{Q}}\delta^*(t,q;\overline{\delta}^{(n)})\,P^{\delta^*}_{q,t}}{1-P_{0,t}^{\delta^*}}.
			\end{align*}
			These steps are repeated until convergence of the sequence $\{\overline{\delta}^{(n)}\}_{n\geq 0}$ to within a specified tolerance\footnote{In the following numerical examples, we identify the algorithm as converged if the standard error of $\overline{\delta}^{(n+1)}$ and $\overline{\delta}^{(n)}$ is not more than $10^{-12.5}$. We do not claim that equilibrium is guaranteed to exist or is unique, but for all sets of parameters considered, convergence was attained and was found not to depend on the initializing $\overline{\delta}$. More numerical details and explanations can be found in Appendix \ref{exp:NumericalUniqueness}.}, and we drop the counting index in the final mean spread process.
		\end{enumerate}
		Once convergence is attained, we will have found a function $f$, given by $f(t,q) = \delta^*(t,q;\overline{\delta})$, along with its corresponding process $P^f$ which satisfy the conditions of equilibrium given in Definition \ref{def:mean_field_equilibrium}.
		
		\subsection{Optimal Spreads in Equilibrium}
		
		\begin{figure}[!htp]
			\centering
			\includegraphics[width=0.7\textwidth]{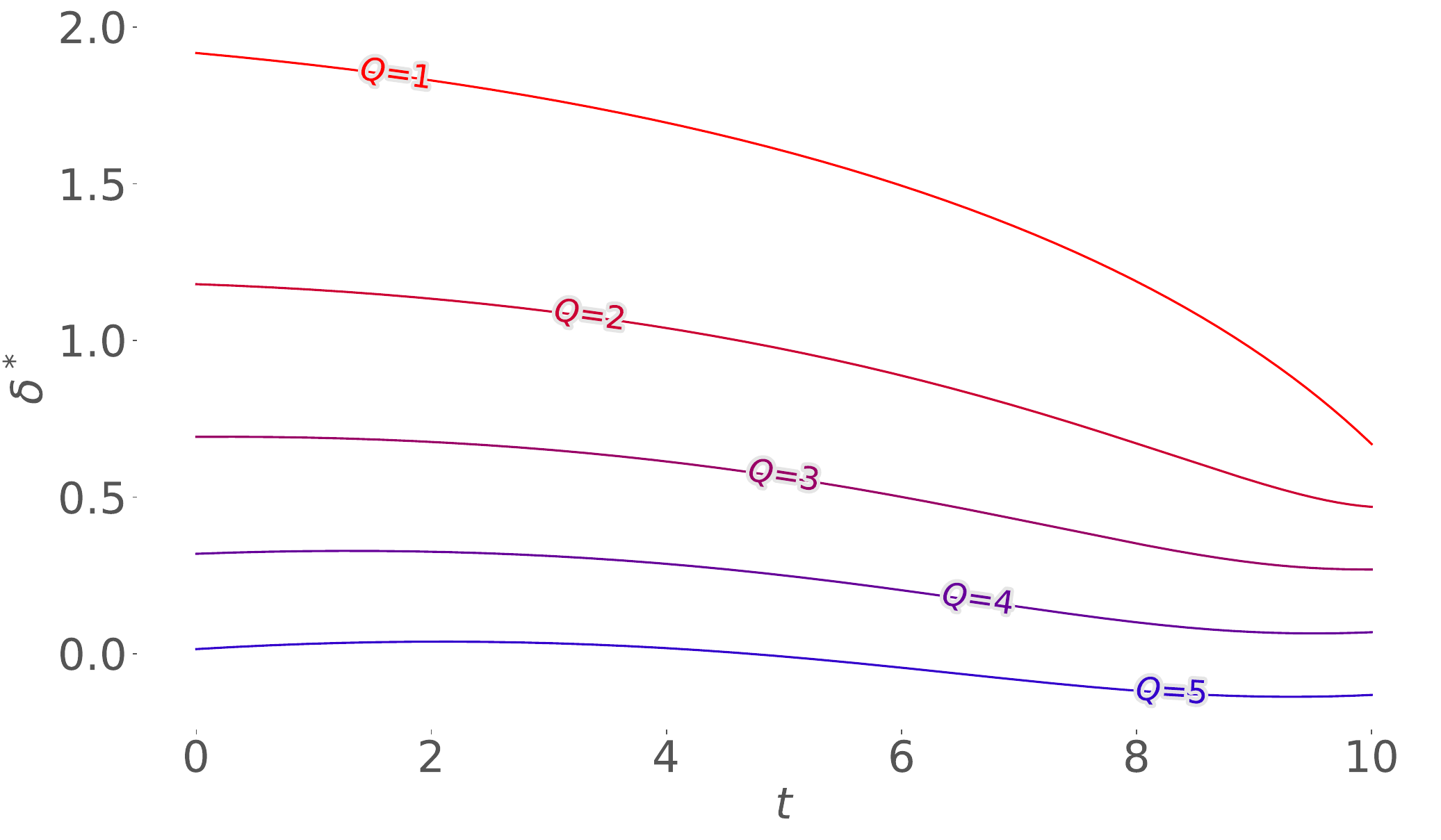}
			\caption{Optimal spreads in equilibrium as a function of $t$ for various $Q$. Other parameters are $T=10$, $\overline{Q}=5$, $\alpha=0.1$, $\kappa=1$, $\phi=0.03$, $A=1$, $\beta = 0.3$, $\gamma=0.1$, and $\underline{B}=-10$.}
			\label{fig:MultiQuote}
		\end{figure}
		
		The optimal spreads $\delta^*$ are presented in Figure \ref{fig:MultiQuote}, as well as the mean spread $\overline{\delta}$ in Figure \ref{fig:MeanQuote} and the dynamics of the proportion process $P$ in Figure \ref{fig:DynamicsOfP}. From Figure \ref{fig:MultiQuote}, we observe that the optimal spreads in the mean-field case share the same decreasing pattern against inventory level with ones in the one-agent case in Figure \ref{fig:MultiQuoteReference}. In addition, for lower inventory levels ($Q\le3$) the optimal spreads are decreasing with time, whereas for higher inventory levels ($Q\ge4$) the optimal spreads slowly increase early in the time interval and then decrease. This effect is caused by the changing nature of competition embedded in the mean spread $\overline{\delta}$ (see Figure \ref{fig:MeanQuote}). At early times, everyone is posting small spreads, but as some agents begin to sell the mean spread quickly increases. For an agent that remains in a high inventory state, this increase in average price in the market creates less competition, and they begin to benefit themselves by quoting slightly higher prices.
		
		\begin{figure}[!htp]
			\centering
			\includegraphics[width=0.7\textwidth]{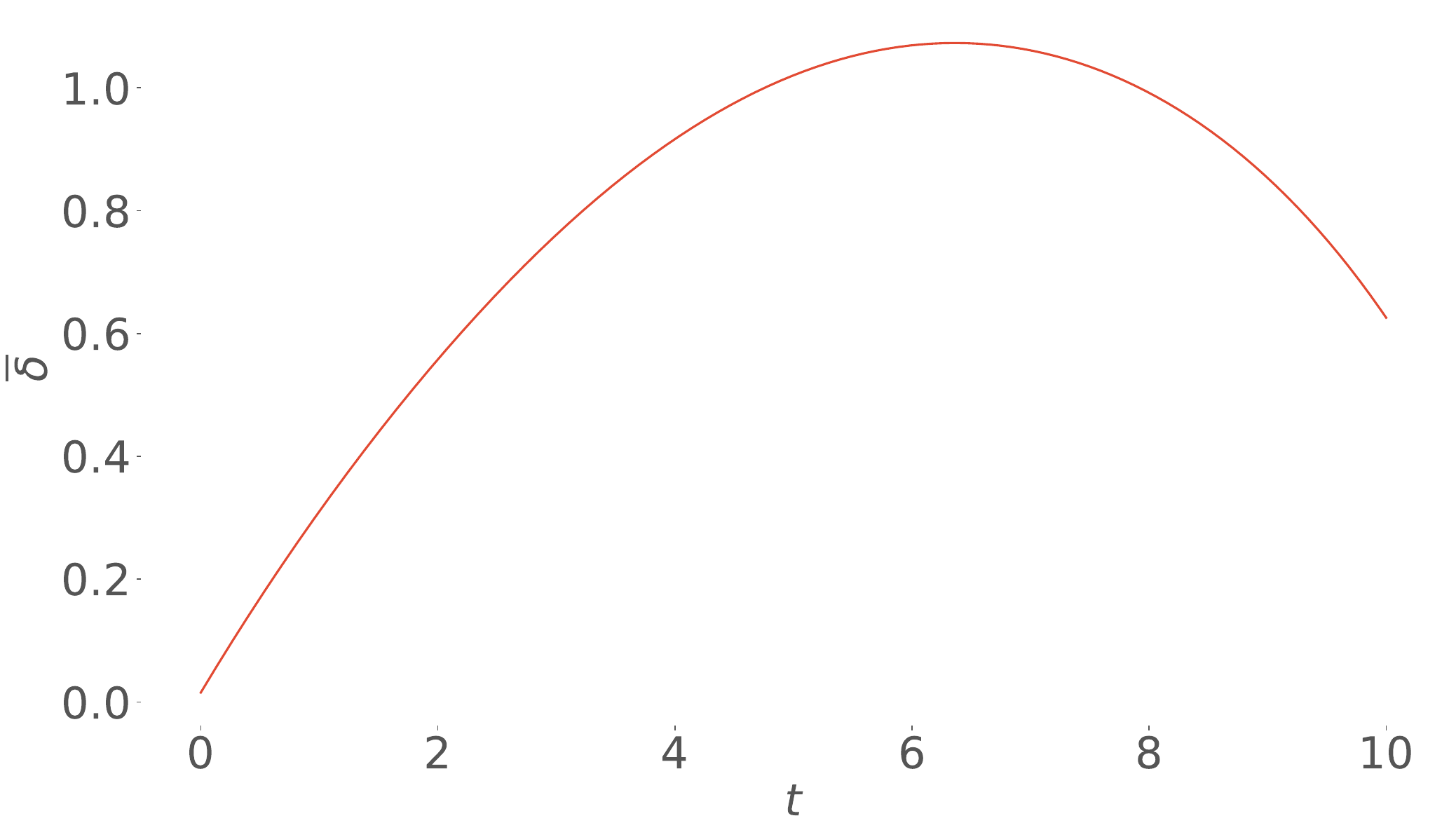}
			\caption{Mean spread in equilibrium as a function of $t$. Other parameters are $T=10$, $\overline{Q}=5$, $\alpha=0.1$, $\kappa=1$, $\phi=0.03$, $A=1$, $\beta = 0.3$, $\gamma=0.1$, and $\underline{B}=-10$.}
			\label{fig:MeanQuote}
		\end{figure}
		
		\begin{figure}[!htp]
			\centering
			\includegraphics[width=0.7\textwidth]{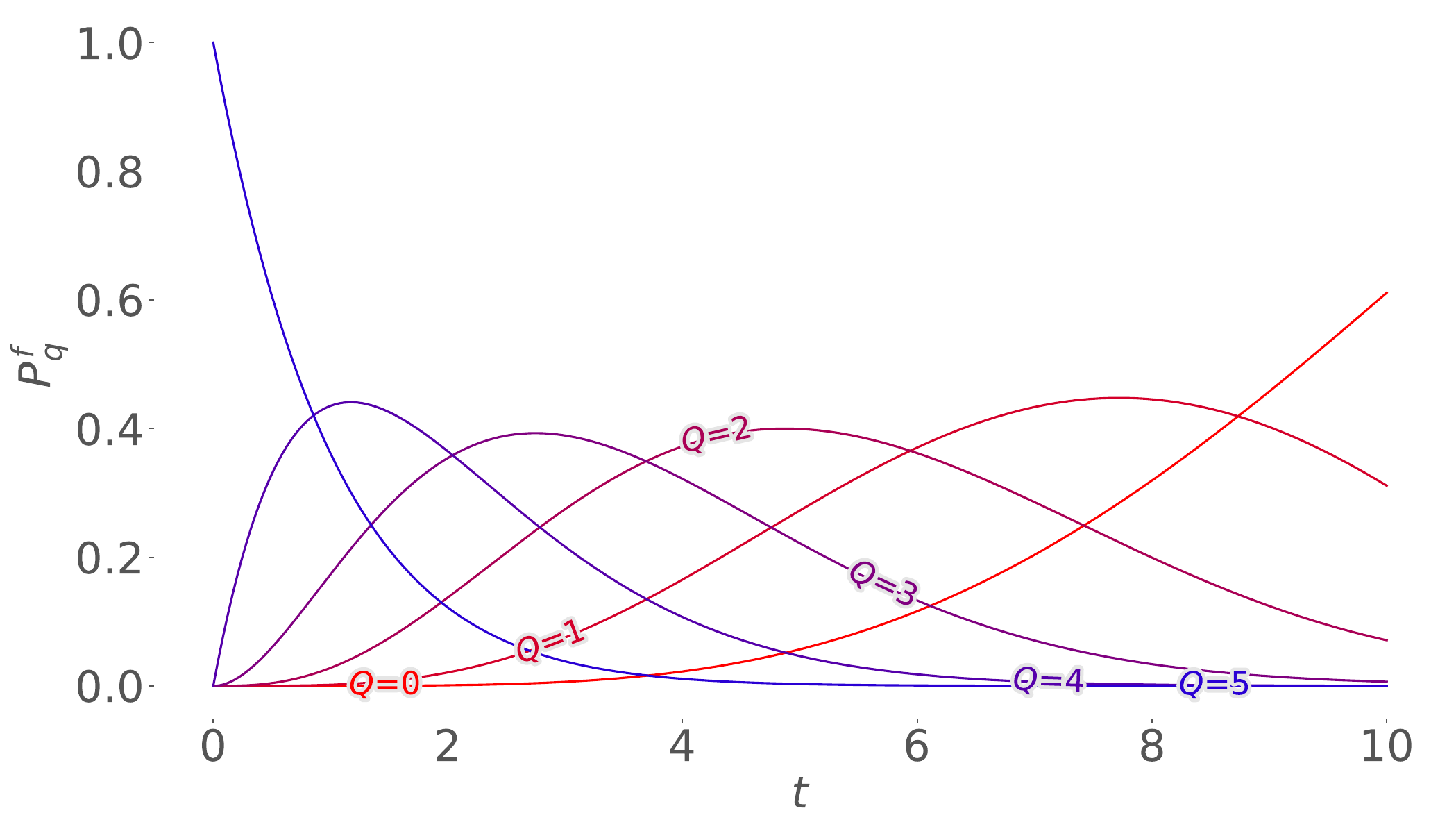}
			\caption{Dynamics of the population proportion process, $P_{q,t}$, in equilibrium as a function of $t$. Other parameters are $T=10$, $\overline{Q}=5$, $\alpha=0.1$, $\kappa=1$, $\phi=0.03$, $A=1$, $\beta = 0.3$, $\gamma=0.1$, and $\underline{B}=-10$.}
			\label{fig:DynamicsOfP}
		\end{figure}
		
		In Figure \ref{fig:MeanQuote}, because each agent begins with maximum inventory $\overline{Q}$, the mean spread begins at the optimal spread for that inventory level. It quickly rises as agents begin to sell inventory and move to lower inventory states, causing them to increase their quoted price. At later times the mean spread begins to decrease due to two effects. First, as time approaches the end of the trading period agents are incentivized to quote smaller prices to avoid the liquidation penalty. Second, more and more agents end up fully liquidating their positions, removing their large quoted prices from the market and thereby decreasing the mean spread. Figure \ref{fig:DynamicsOfP} shows the distribution, $P$, of the agents' inventories through time. For this set of parameters, the majority of agents sell all of their inventory by time $T$, indicated by the fact that $P_{0,T}>0.5$.
		
		\begin{figure}[h!]
			\centering
			\includegraphics[width=0.48\textwidth]{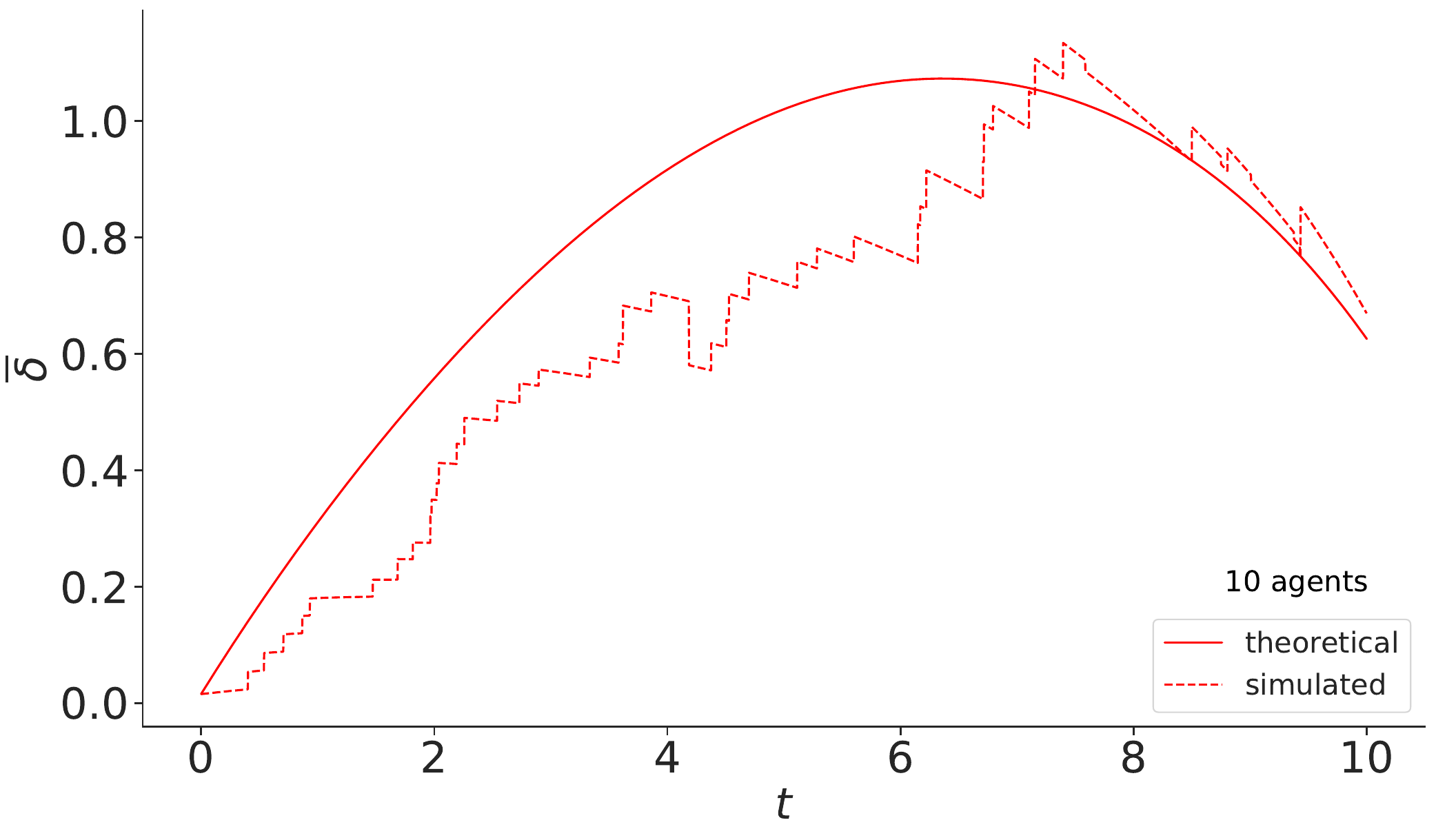}
			\includegraphics[width=0.48\textwidth]{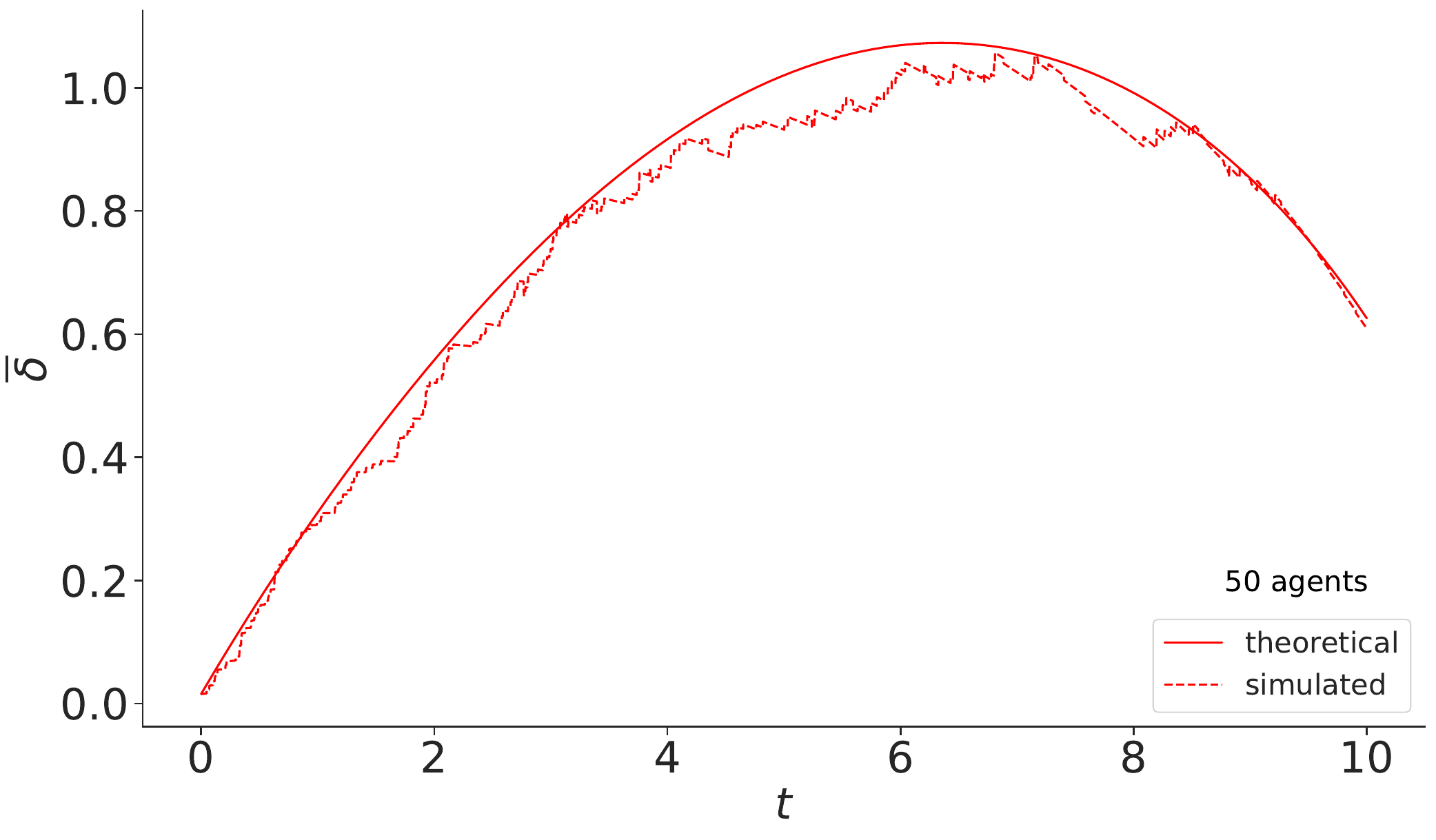}
			\includegraphics[width=0.48\textwidth]{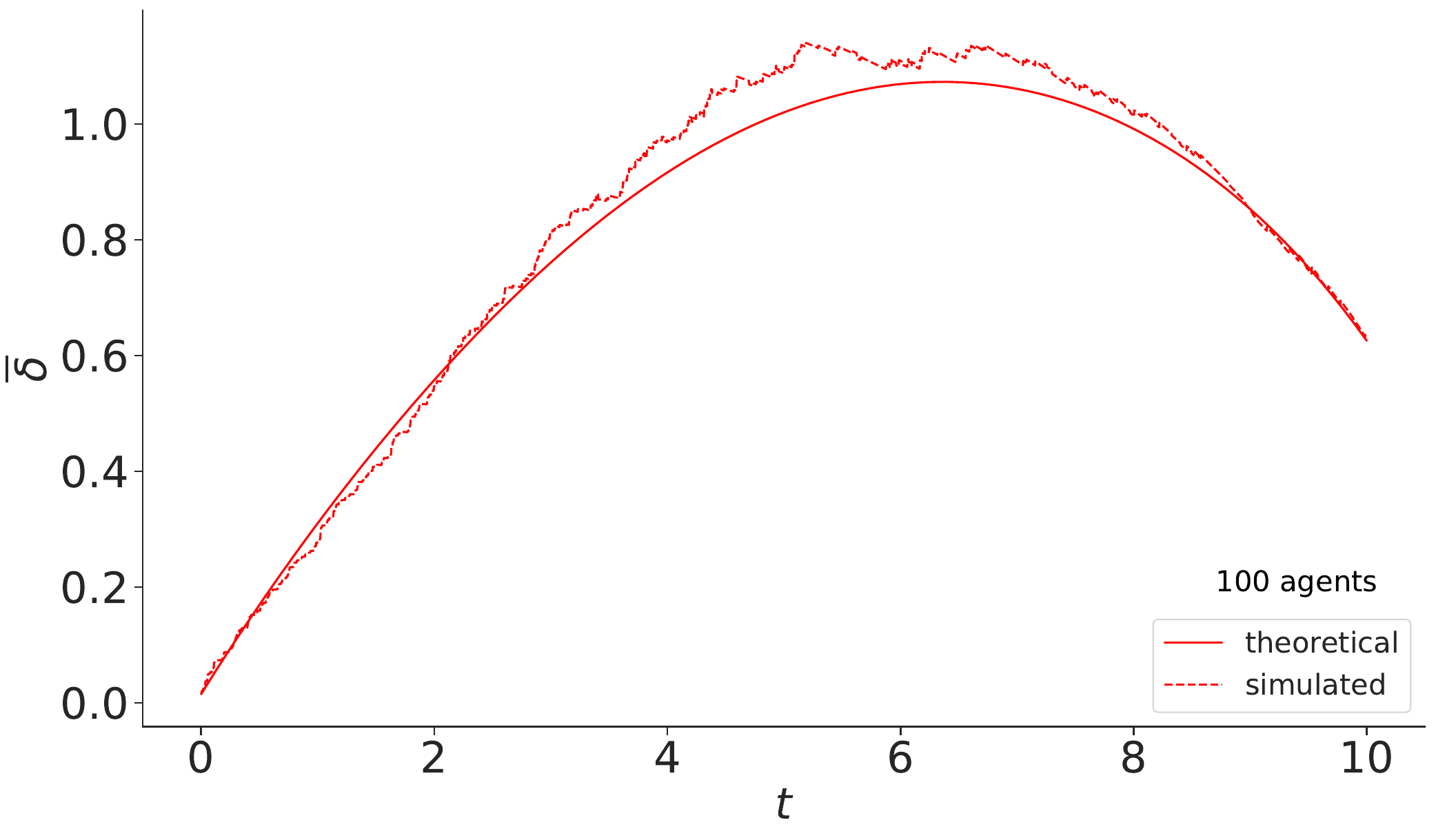}
			\includegraphics[width=0.48\textwidth]{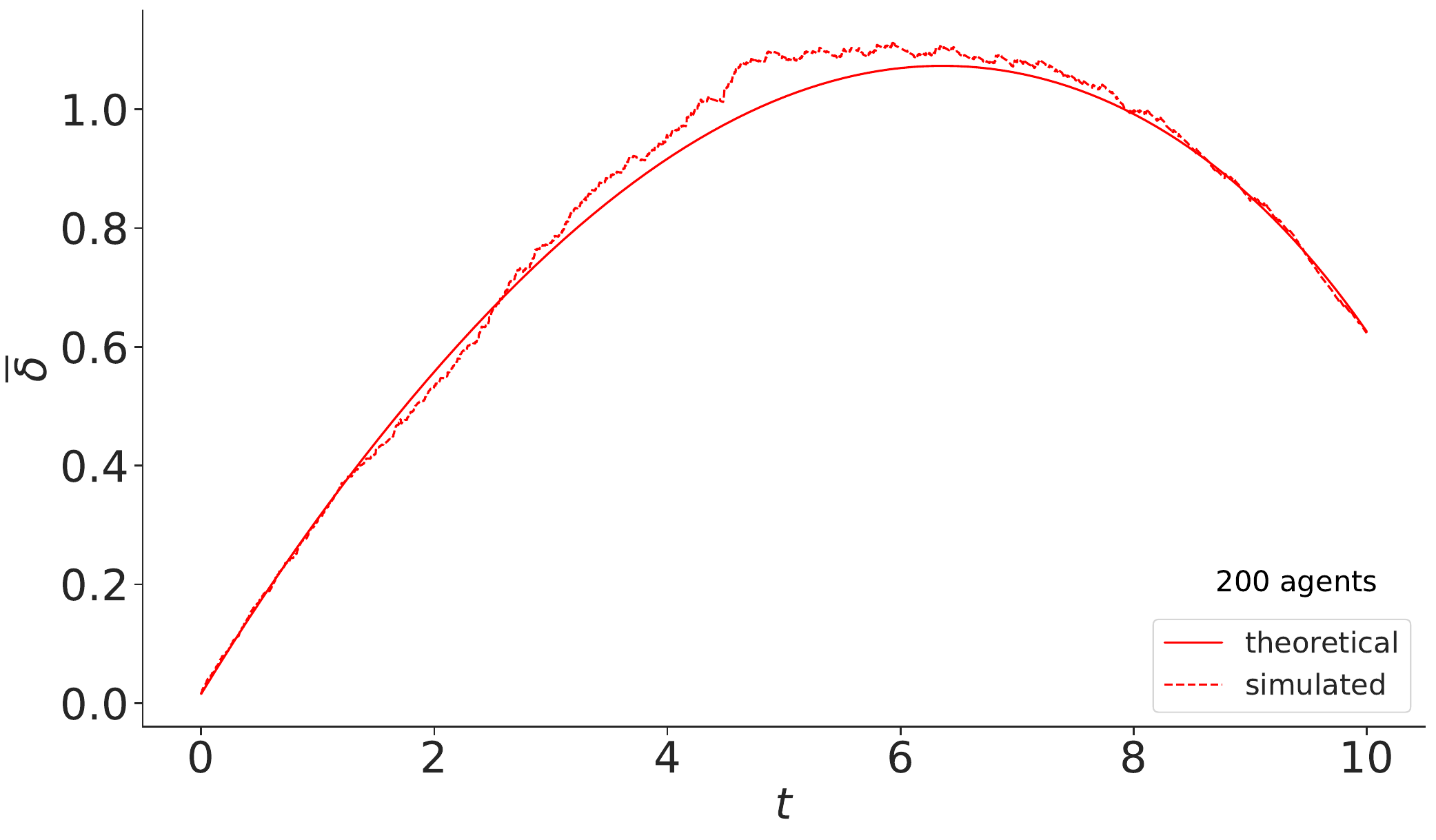}
			\caption{Mean spread process in mean-field equilibrium (solid curves) and simulated mean spreads fo finite number of agents (dotted curves). Other parameters are $T=10$, $\overline{Q}=5$, $\alpha=0.1$, $\kappa=1$, $\phi=0.03$, $A=1$, $\beta = 0.3$, $\gamma=0.1$, and $\underline{B}=-10$.}
			\label{fig:SimulatedMeanSpread}
		\end{figure}
		
		In order to illustrate the difference between the mean spread quantity computed in \eqref{eqn:delta_bar_M} with a finite number of agents and that computed in \eqref{eqn:delta_bar_f} in the mean-field setting, we simulate the inventory processes of multiple agents and plot both of these mean spreads in Figure \ref{fig:SimulatedMeanSpread}. The simulation is performed as follows:
			
			\begin{enumerate}
				\item On an equidistant time grid $0=t_0 < t_1 < \cdots < t_N = T$ perform the numerical iteration described in Section \ref{sec:Algorithm}, and denote the limiting deterministic function by $\overline{\delta}$ (this is the smooth function plotted in each panel of Figure \ref{fig:SimulatedMeanSpread}).
				\item At time $t_0=0$, compute the right hand side of \eqref{eqn:delta_bar_M} and denote this by $\overline{\delta}^M_{t_0}$.
				\item For each subsequent point on the time grid, $t_k$, simulate the inventory process of all $M$ agents using \eqref{eqn:InventoryProcess} where agent $i$ is subject to the intensity $\lambda(\delta^*(t,Q^{\delta^*,\overline{\delta}^M}_{i,t_{k-1}};\overline{\delta}),\overline{\delta}^M_{t_{k-1}})$ with $\lambda$ as in \eqref{eqn:PoissonIntensityAssumption} and $\delta^*$ as in \eqref{eqn:OptimalFeedbackControls}, and recompute $\overline{\delta}^M_{t_k}$ using \eqref{eqn:delta_bar_M}, iterating this step until the terminal time $T$.
			\end{enumerate}
			Each panel in Figure \ref{fig:SimulatedMeanSpread} shows the deterministic function $\overline{\delta}$ along with a single path of $\overline{\delta}^M$ for various values of $M$. The convergence of the finite agent mean spread to that of the mean-field equilibrium is a demonstration of the propagation of chaos.

		\subsection{Comparison with Single-Agent Reference Model}
		
		\begin{figure}[!htp]
			\centering
			\includegraphics[width=0.7\textwidth]{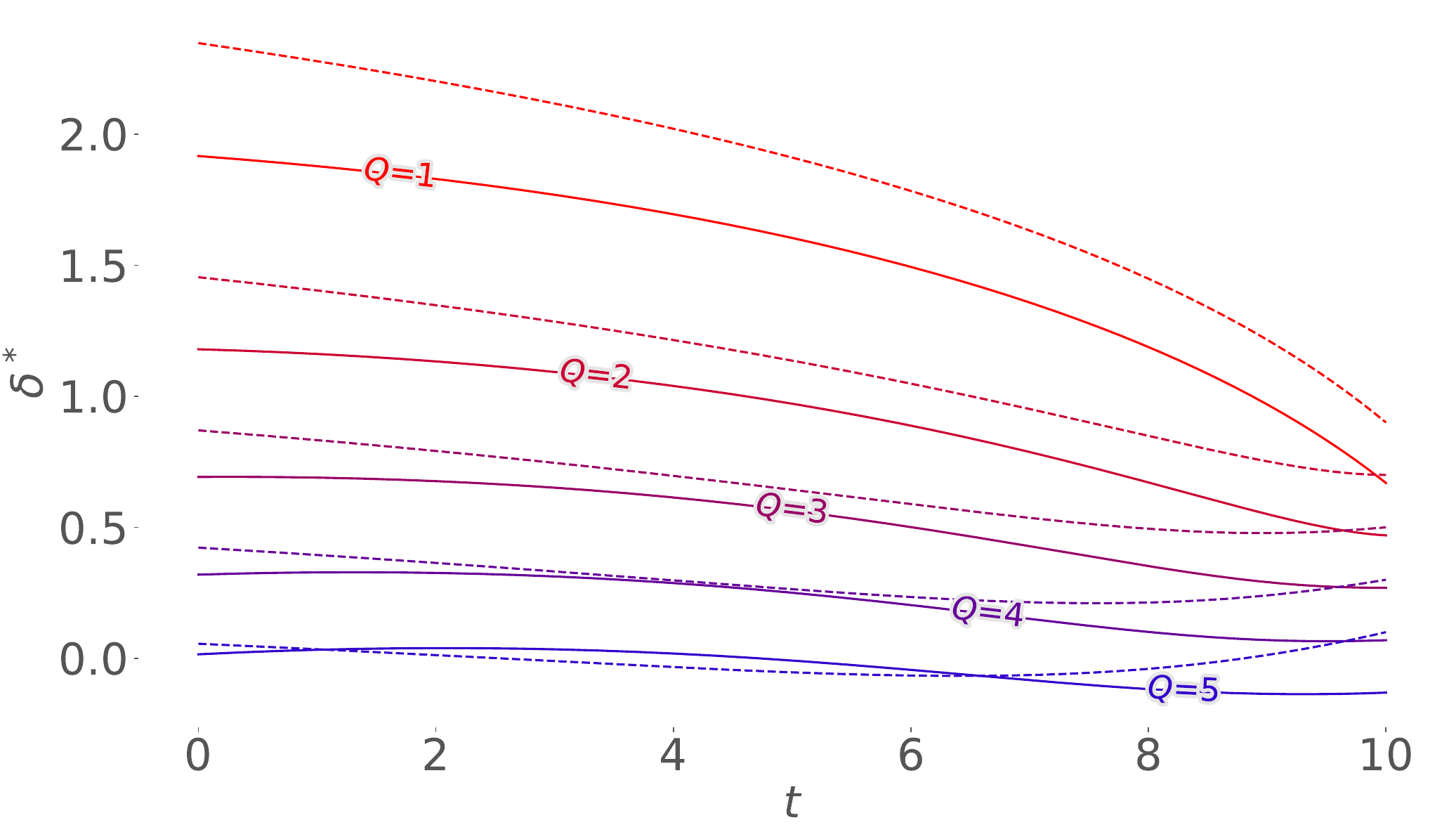}
			\caption{Optimal spreads in mean-field equilibrium (solid curves) compared to single-agent reference model (dotted curves). Other parameters are $T=10$, $\overline{Q}=5$, $\alpha=0.1$, $\kappa=1$, $\phi=0.03$, $A=1$, $\beta = 0.3$, $\gamma=0.1$, and $\underline{B}=-10$.}
			\label{fig:One-MultiComparison}
		\end{figure}
		
		Figure \ref{fig:One-MultiComparison} compares the optimal spreads in the monopoly and mean-field cases. For lower inventory levels $Q \le 4$, the optimal spreads in the mean-field case are consistently lower than the ones in the single-agent case. However, for the maximum inventory level $Q=5$, the optimal spreads in the mean-field case are sometimes higher than in the single-agent case. This trend is because the optimal spreads corresponding to higher inventory levels tend to be lower than the mean spread in the market. Thus, the items in the mean-field case have a higher probability to be executed compared to the single-agent case as per equation \eqref{eqn:PoissonIntensityAssumption}. As a result, the agent can be more aggressive by setting a higher spread in the mean-field case.
		
		\subsection{Effect of Competitiveness on Equilibrium}
		
		\begin{figure}[!htp]
			\centering
			\includegraphics[width=0.48\textwidth]{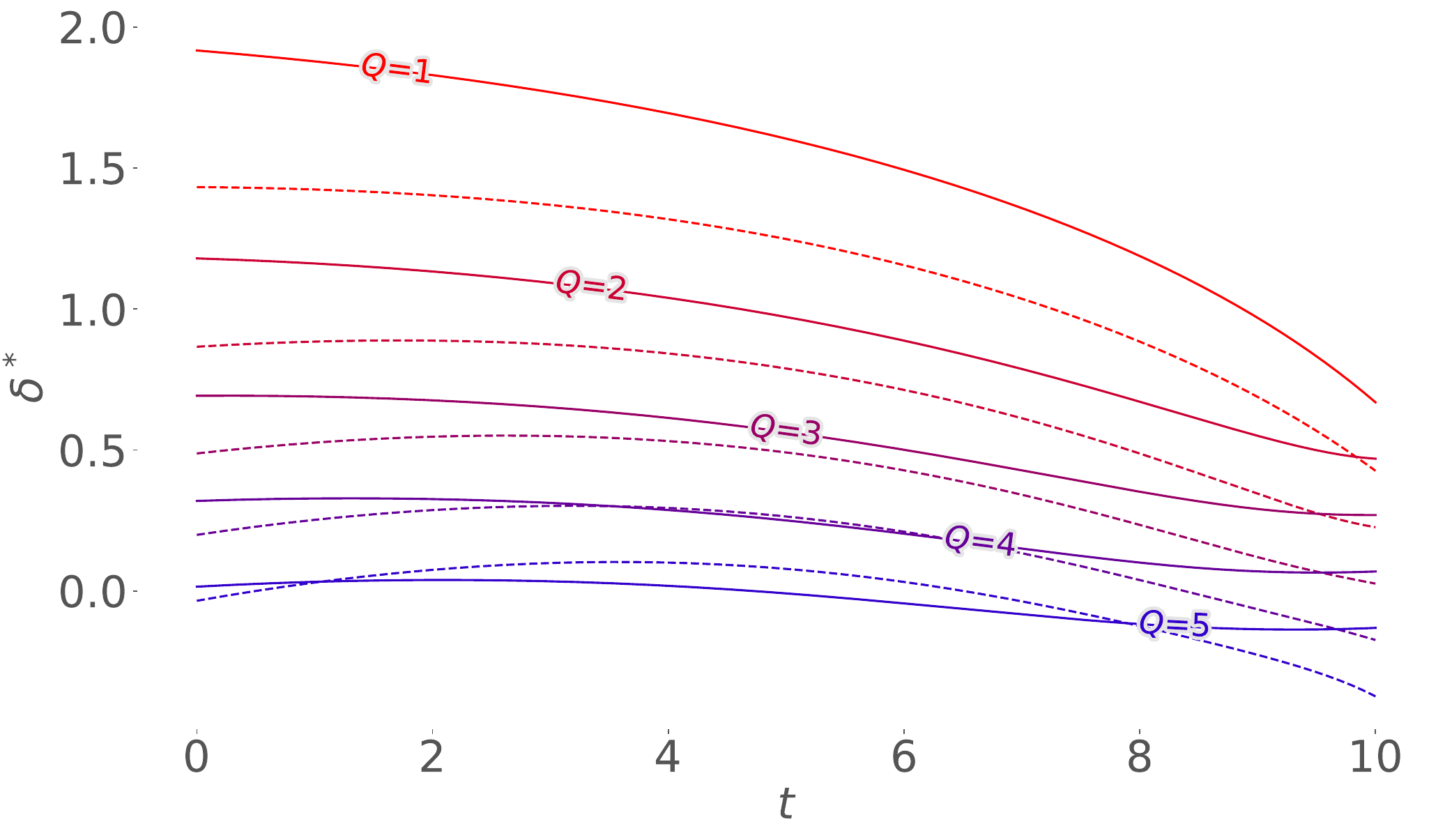}
			\includegraphics[width=0.48\textwidth]{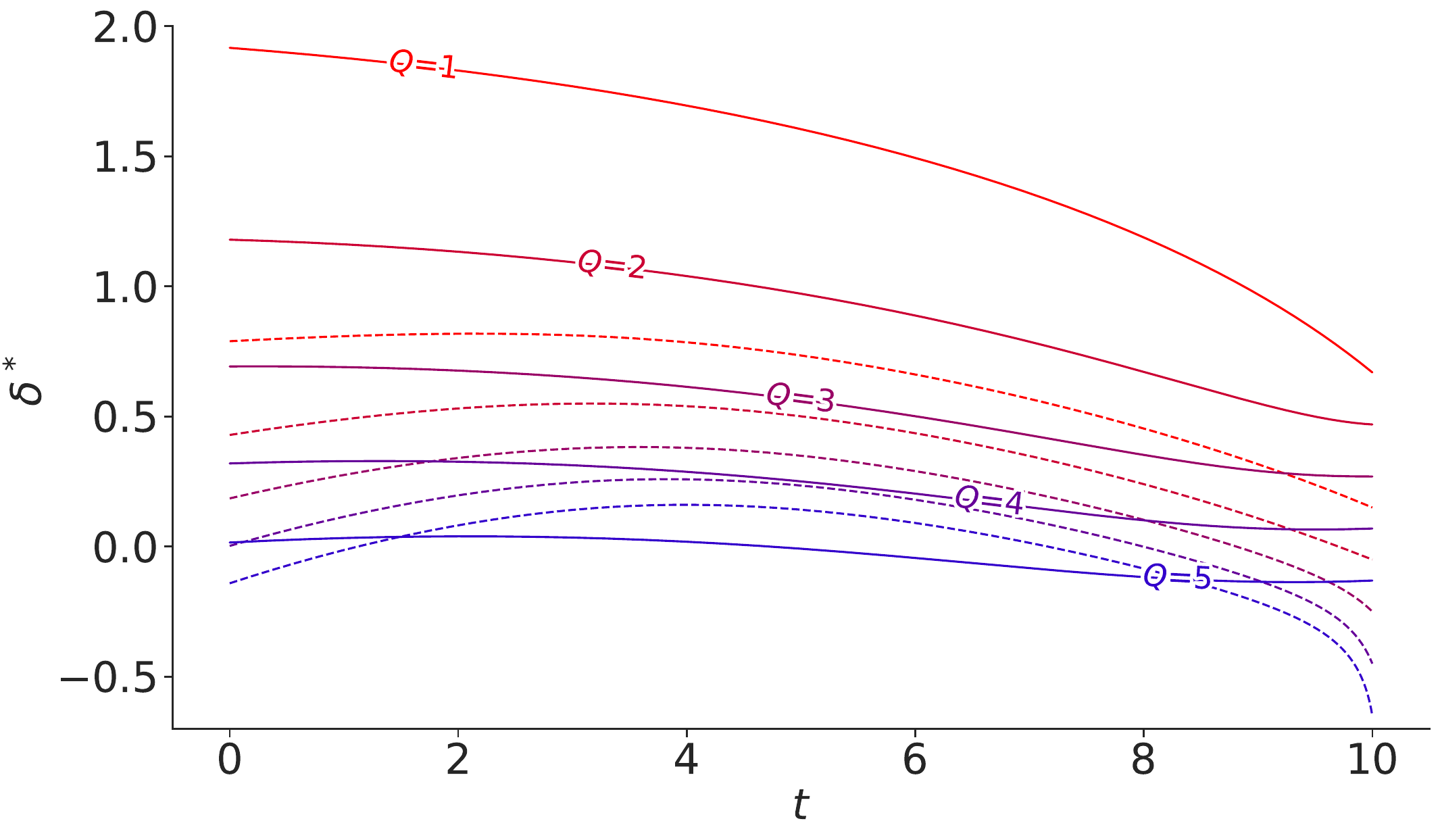}
			\caption{Optimal spreads in equilibrium for different values of competition parameter $\beta$. The solid curves represent a low level of competitive interaction ($\beta = 0.3$) and the dotted curves represent a high level of competitive interaction (Left: $\beta = 0.9$. Right: $\beta = 3$). Other parameters are $T=10$, $\overline{Q}=5$, $\alpha=0.1$, $\kappa=1$, $\phi=0.03$, $A=1$, $\gamma=0.1$, and $\underline{B}=-10$.}
			\label{fig:DifferentBeta}
		\end{figure}
		
		Figure \ref{fig:DifferentBeta} shows how the competitiveness parameter, $\beta$, affects the optimal spreads. In general, higher competitiveness of the market brings the optimal spreads of different inventories closer together. However, note that higher competitiveness does not always move the optimal spreads of different inventory levels in the same direction. For lower inventory levels, larger $\beta$ leads to lower optimal spreads, but for higher inventory levels the direction of price change depends on time. In Figure \ref{fig:DifferentBetaMeanQuote&P_T}, the mean spread decreases in $\beta$ at all points in time. This means that more intense competition brings a lower market quote price, which is economically sensible. In addition, we see that a larger proportion of agents end up selling their entire inventory when the competitiveness parameter is larger.
		
		\begin{figure}
			\centering
			\includegraphics[width=0.48\textwidth]{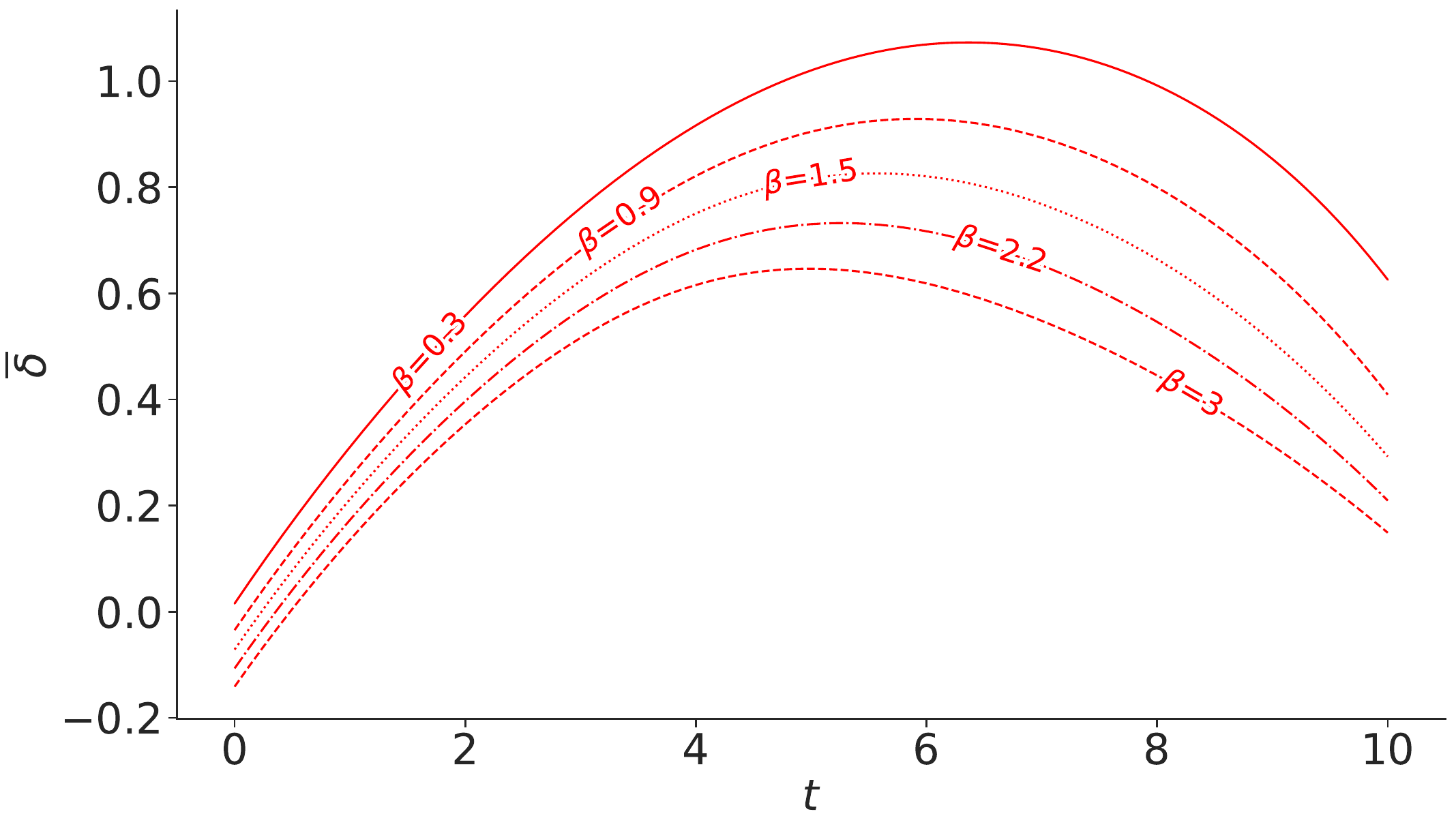}
			\includegraphics[width=0.48\textwidth]{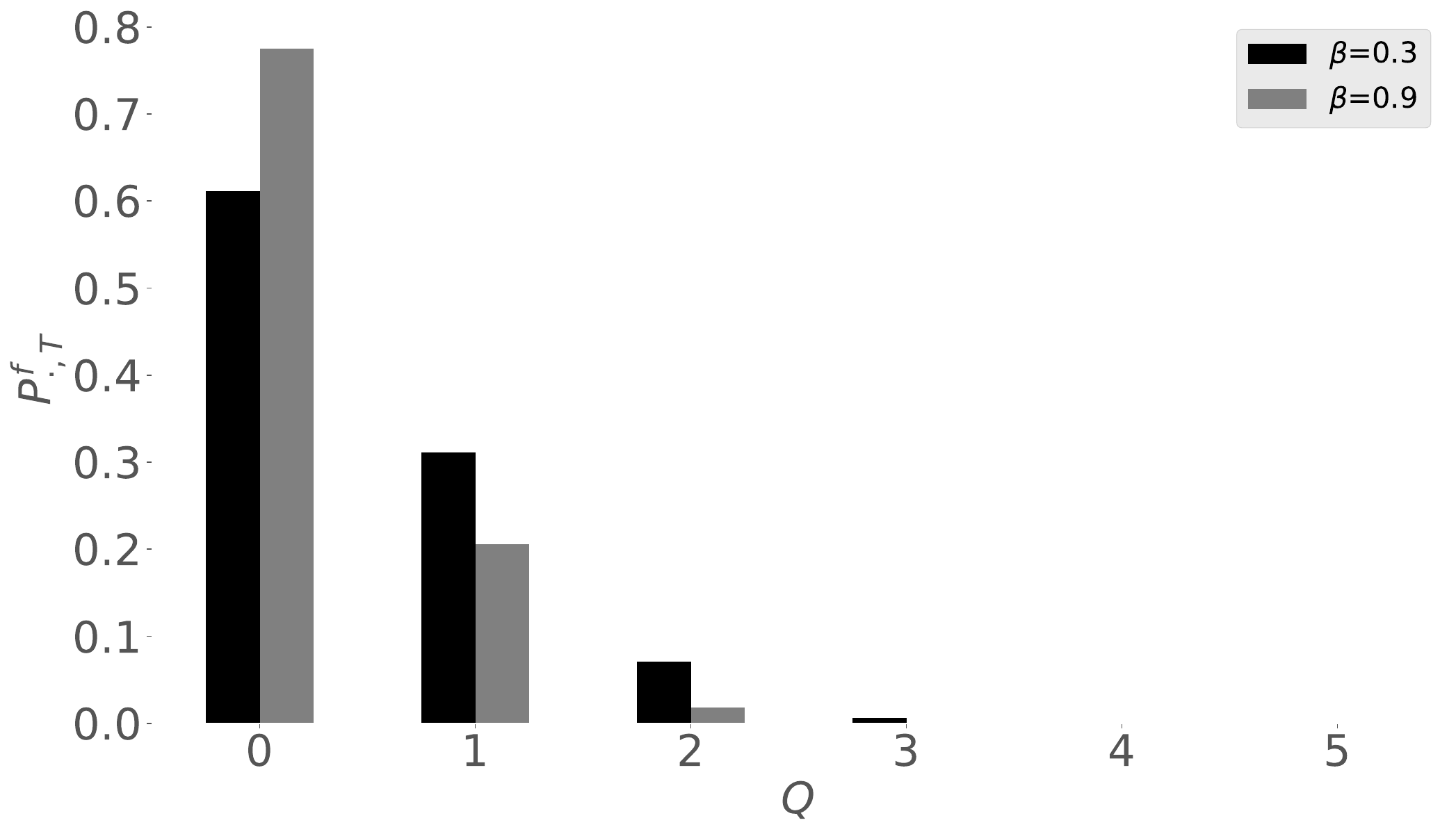}
			\caption{Mean spreads and the distribution of inventory across agents at terminal time in equilibrium with different values of competition parameter $\beta$. Other parameters are $T=10$, $\overline{Q}=5$, $\alpha=0.1$, $\kappa=1$, $\phi=0.03$, $A=1$, $\gamma=0.1$, and $\underline{B}=-10$.}
			\label{fig:DifferentBetaMeanQuote&P_T}
		\end{figure}
		
		To illustrate the effects of competitiveness on the consumers and agents, we define cumulative cost $C(t)$, cumulative revenue $R(t)$ and cumulative volume $V(t)$ up to time $t$ as
		\begin{align*}
			C(t) &= \sum_{q=1}^{\overline{Q}}\int_0^{t}f(u,q)\,P^f_{q,u}\,\lambda(f(u,q),\overline{\delta}_u)\,du\,, \\
			R(t) &= \sum_{q=1}^{\overline{Q}}\left(\int_0^{t}f(u,q)\,P^f_{q,u}\,\lambda(f(u,q),\overline{\delta}_u)\,du-\alpha\,P^f_{q,t}\, q^2\right)\,, \\
			V(t) &= \sum_{q=1}^{\overline{Q}}\int_0^{t}\,P^f_{q,u}\,\lambda(f(u,q),\overline{\delta}_u)\,du\,.
		\end{align*}
		Cumulative cost $C(t)$ corresponds to the total wealth paid by consumers up to time $t$, and cumulative revenue $R(t)$ represents the profits made by agents up to time $t$. Theoretically, cumulative revenue $R(t)$ should be equal to cumulative cost $C(t)$ until terminal time when the penalty proportional to $\alpha$ is realized, but to ensure $R(t)$ is continuous, we preemptively charge this penalty in the form of $\alpha\,P^f_{q,t}\, q^2$, as if all agents stopped trading at time $t$. With the above functions, we can compute the corresponding average transaction cost $K(t)$ by
		\begin{align*}
			K(t) = \frac{C(t)}{V(t)}\,,
		\end{align*}
		and the instantaneous average transaction cost $\overline{K}(t)$ by
		\begin{align*}
			\overline{K}(t) = \frac{\partial C(t)}{\partial t}/\frac{\partial V(t)}{\partial t}
			= \frac{\sum_{q=1}^{\overline{Q}}f(t,q)P^f_{q,t}\lambda(f(t,q),\overline{\delta}_t)}{\sum_{q=1}^{\overline{Q}}P^f_{q,t}\lambda(f(t,q),\overline{\delta}_t)}\,.
		\end{align*}
		The average transaction cost $K(t)$ represents the average price consumers have paid for one unit of products up to time $t$, and the instantaneous average transaction cost $\overline{K}(t)$ is the the average price to buy one unit at time $t$.
		
		\begin{figure}[h!]
			\centering
			\includegraphics[width=0.48\textwidth]{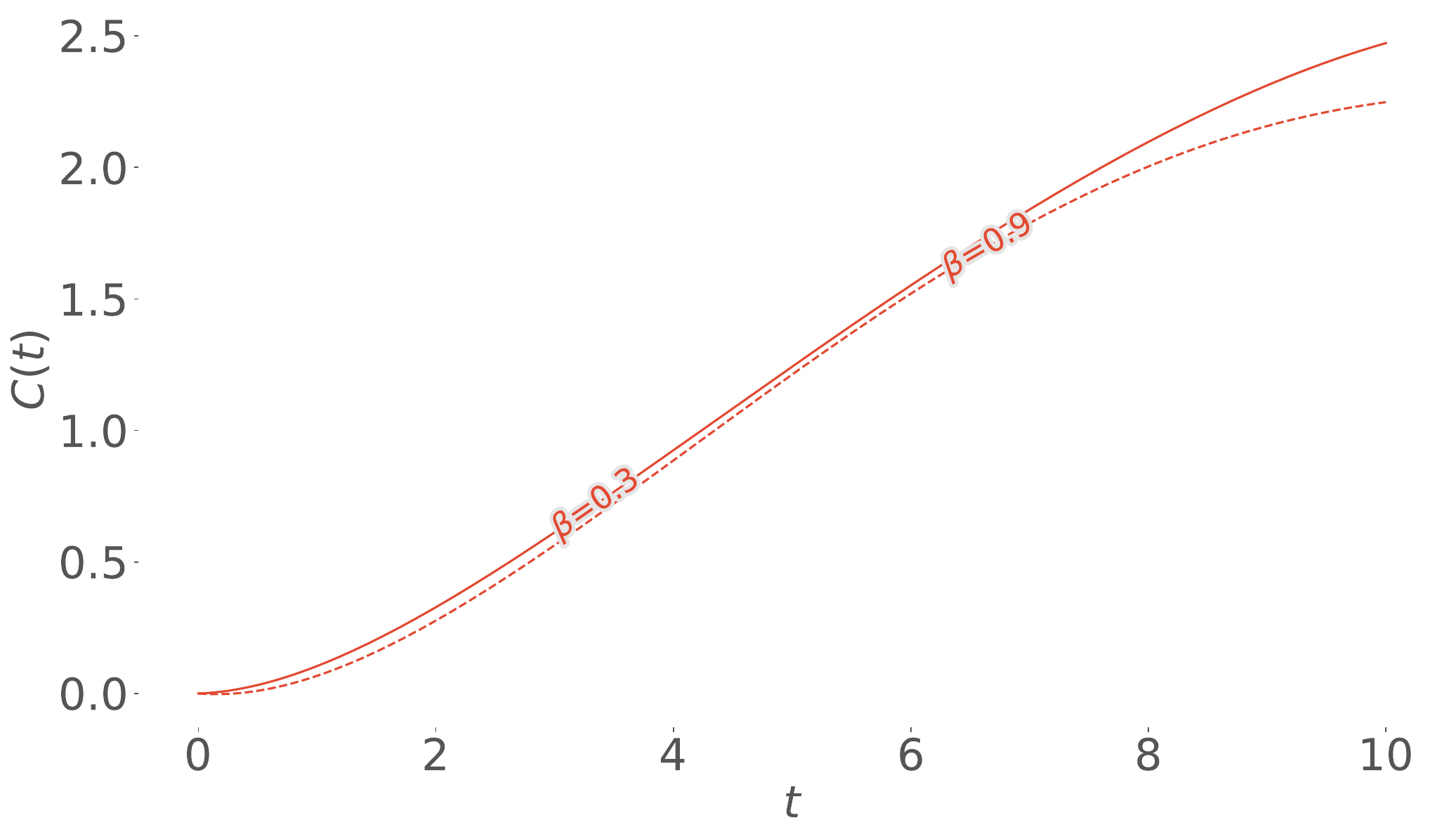}
			\includegraphics[width=0.48\textwidth]{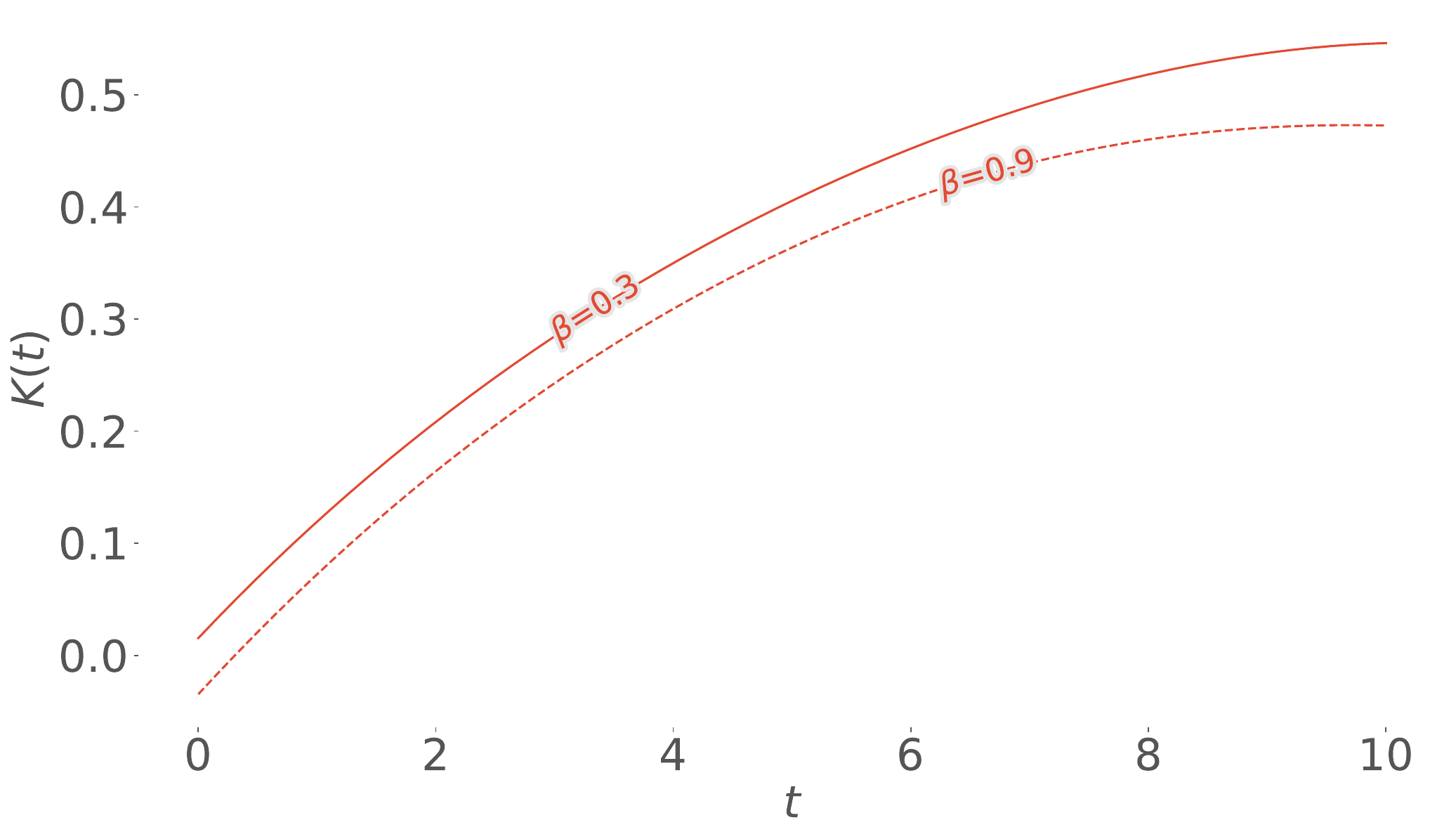}
			\includegraphics[width=0.48\textwidth]{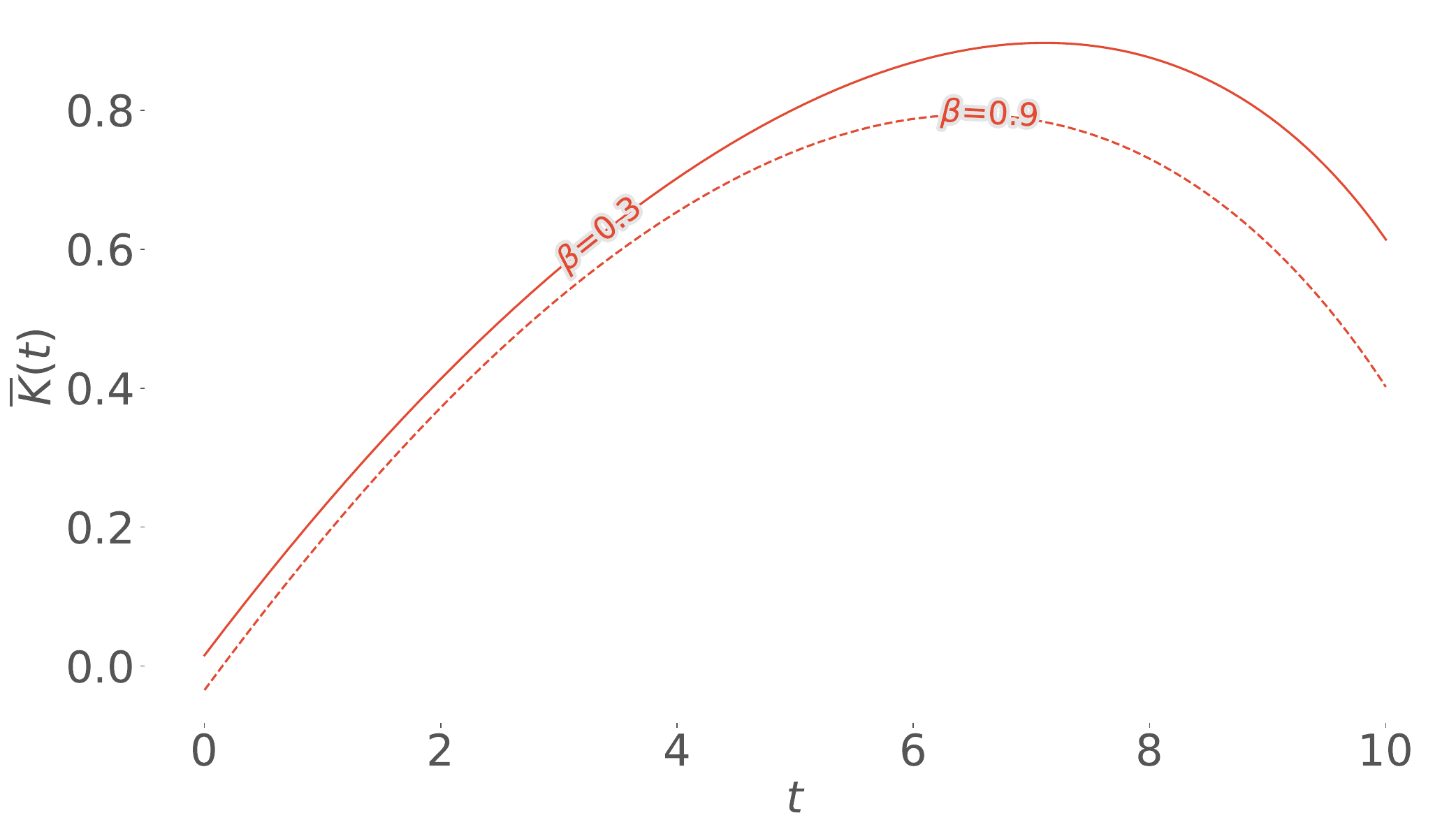}
			\caption{Cumulative cost (top left), average transaction cost (top right) and instantaneous average cost (bottom) with different values of competition parameter $\beta$. The solid curves represent a low level of competitive interaction ($\beta = 0.3$) and the dotted curves represent a high level of competitive interaction ($\beta = 0.9$). Other parameters are $T=10$, $\overline{Q}=5$, $\alpha=0.1$, $\kappa=1$, $\phi=0.03$, $A=1$, $\gamma=0.1$, and $\underline{B}=-10$.}
			\label{fig:DifferentBetaEcoFeatureBuyer}
		\end{figure}
		
		\begin{figure}[h!]
			\centering
			\includegraphics[width=0.48\textwidth]{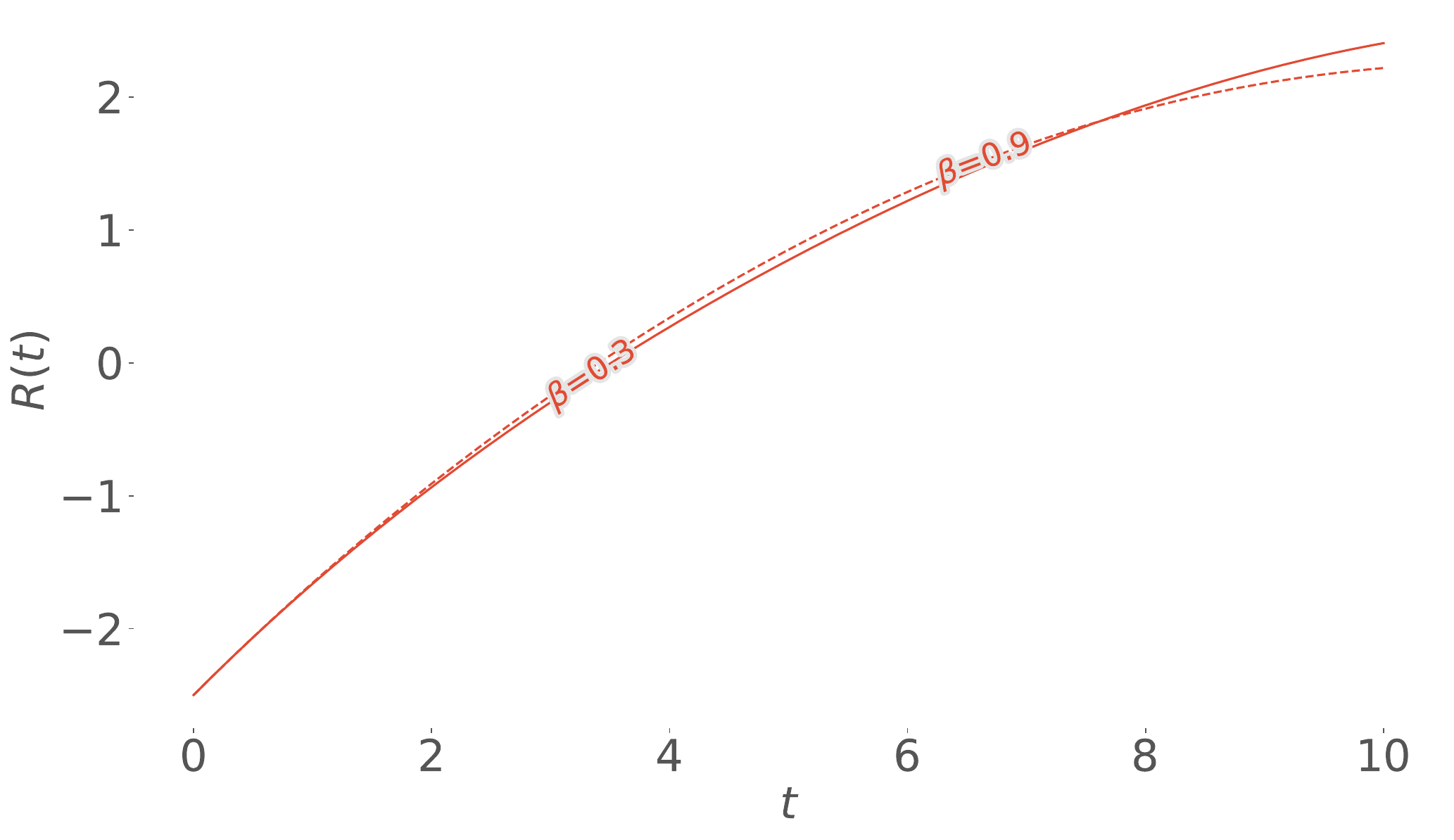}
			\includegraphics[width=0.48\textwidth]{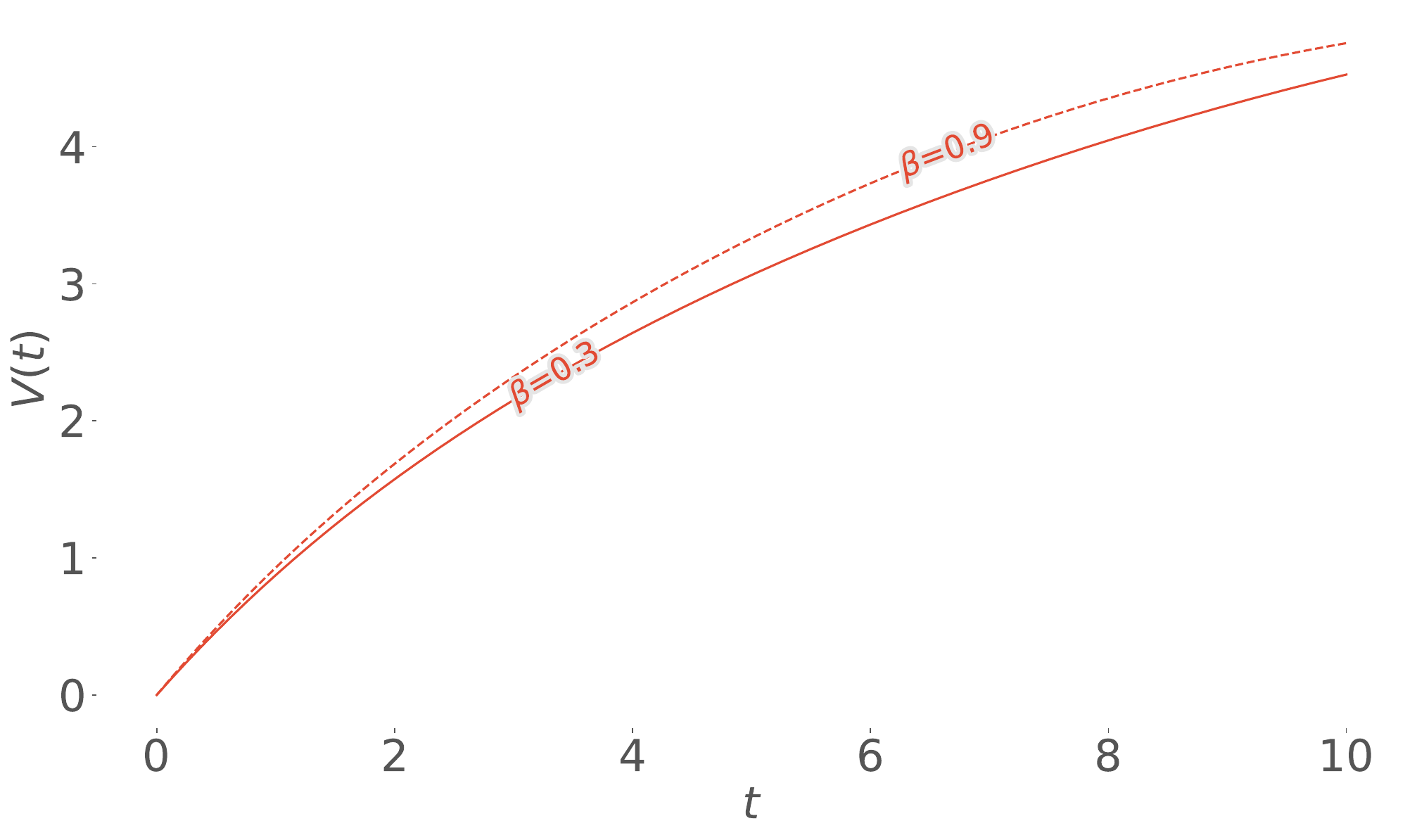}
			\caption{Cumulative revenue and volume with different values of competition parameter $\beta$. The solid curves represent a low level of competitive interaction ($\beta = 0.3$) and the dotted curves represent a high level of competitive interaction ($\beta = 0.9$). Other parameters are $T=10$, $\overline{Q}=5$, $\alpha=0.1$, $\kappa=1$, $\phi=0.03$, $A=1$, $\gamma=0.1$, and $\underline{B}=-10$.}
			\label{fig:DifferentBetaEcoFeatureSeller}
		\end{figure}
		
		In Figure \ref{fig:DifferentBetaEcoFeatureBuyer}, we can see that for this set of parameters, a higher value of competitiveness $\beta$ leads to lower cost, average cost and instantaneous average cost. This implies that consumers indeed benefit from a more competitive market with lower purchase prices. The instantaneous average cost $\overline{K}(t)$ shares a similar concave shape with the mean spread $\overline{\delta}$ in Figure \ref{fig:DifferentBetaMeanQuote&P_T}. This means that the best purchasing time for consumers is either the start or the end of the time horizon, consistent with the early bird price and closing sale in reality. Additionally, for the larger value of $\beta$, the maximum of the curves $\overline{K}(t)$ and $\overline{\delta}$ occur at earlier times. This means that the vendors will decrease price earlier in a more competitive market, and consumers indeed enjoy this lower market price earlier. In Figure \ref{fig:DifferentBetaEcoFeatureSeller}, we can see that for this set of parameters, a higher value of competitiveness $\beta$ leads to a higher volume, but lower revenue near the terminal time. One interesting phenomenon is that, before the terminal time, we enter into a win-win situation for both agents and consumers with a higher value of competitiveness $\beta$. We observe that for this time period, with higher demand of the market, a higher level of competitive interaction can boost the volume and cumulative revenue for agents and decrease the transaction cost for consumers simultaneously.
		
		\subsection{Effect of Upper Bound on Equilibrium}
		
		In this section, we impose an upper bound on admissible strategies. In reality, this price control is sometimes enforced by the government on the necessities, and we discuss how the behaviour of optimal strategy $\delta^*$, the mean spread $\overline{\delta}$, and the inventory distribution $P$ changes under an additional price ceiling.
		
		Suppose we define a new set of admissible controls by
		\begin{align*}
			\mathcal{A} = \left\{ \delta\, \Big\vert\, \delta_t = f(t,Q_t^{\delta,\overline{\delta}})\,, \quad \underline{B}\le f \leq \overline{B}, \quad f \mbox{ is continuous in } t \right\}\,,
		\end{align*}
		where $\underline{B}$ is a negative constant and $\overline{B}$ is a positive constant. Then the optimal feedback control becomes
		\begin{align*}
			\delta^*(t,q;\overline{\delta}) =\min\left\{\overline{B}\,,\,\, \max\left\{  \frac{1}{\kappa+\beta}+h_q(t;\overline{\delta})-h_{q-1}(t;\overline{\delta})\,,\,\, \underline{B}\right\}\right\}\,, \quad q\neq0\,.
		\end{align*}
		The proof of uniqueness of the solution to HJB equation and verification theorem still stands with minor adjustments.
		
		\begin{figure}[!htp]
			\centering
			\includegraphics[width=0.7\textwidth]{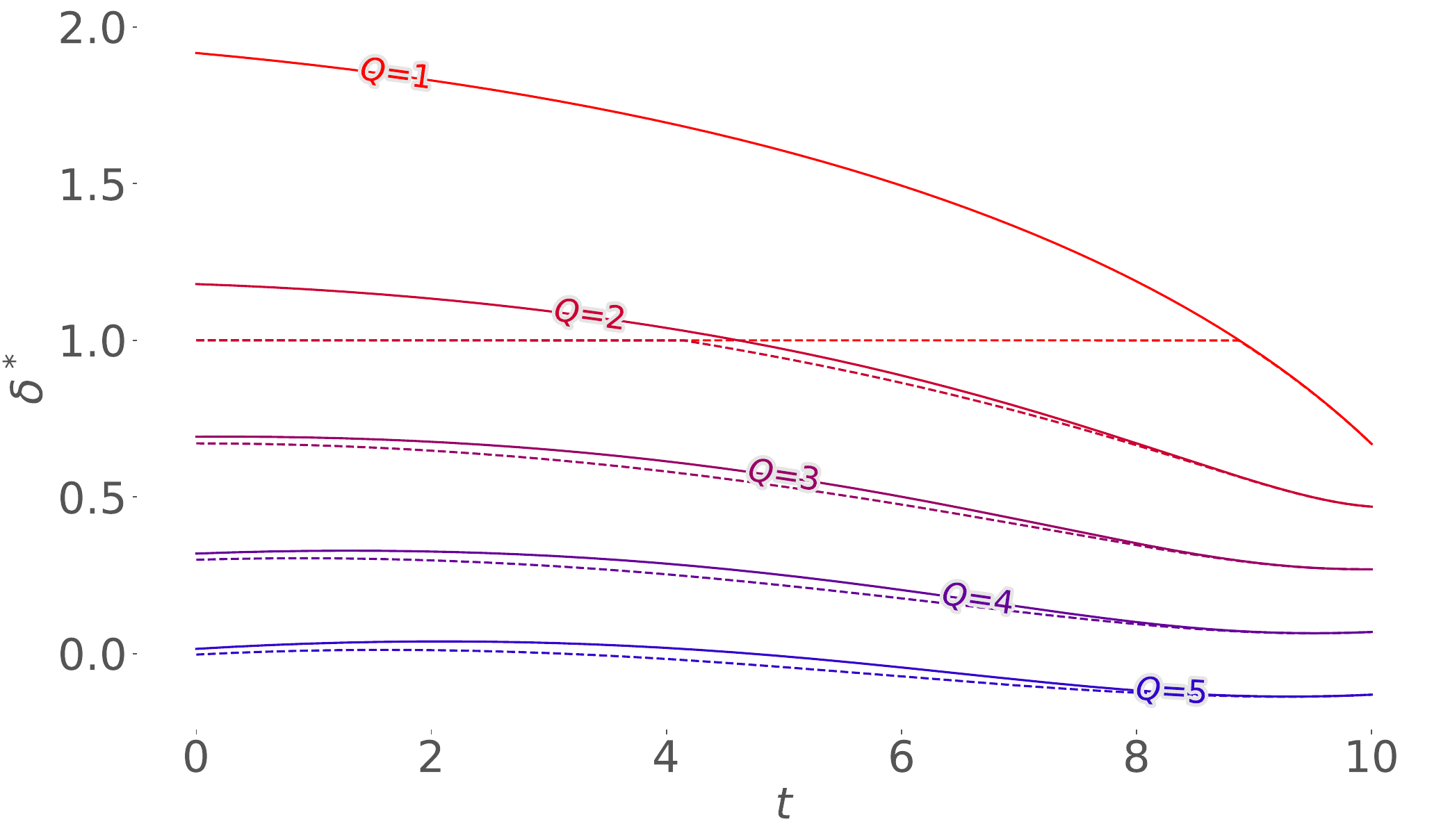}
			\caption{Optimal spreads in equilibrium with (dotted curves) and without (solid curves) an upper bound. Other parameters are $T=10$, $\overline{Q}=5$, $\alpha=0.1$, $\kappa=1$, $\phi=0.03$, $A=1$, $\beta = 0.3$, $\gamma=0.1$, $\underline{B}=-10$, and $\overline{B}=1$.}
			\label{fig:UpperboundMultiQuote}
		\end{figure}
		
		In Figure \ref{fig:UpperboundMultiQuote}, we can observe that optimal spreads with the upper bound decrease for all inventory levels. This is consistent with our expectation that the competition becomes more intense with an upper bound on price, as the agents with initially lower prices will also try to keep their market share while others are lowering their prices. For lower inventory levels $Q\le2$, the differences are more significant. This is due to the fact that agents with lower holdings tend to propose a higher spread, and that is where the restriction on pricing comes in. The optimal spreads converge at the terminal time, because the agents lower their price as much as possible for quick sales, at the level lower than the constraint.
		
		\begin{figure}[!htp]
			\centering
			\includegraphics[width=0.48\textwidth]{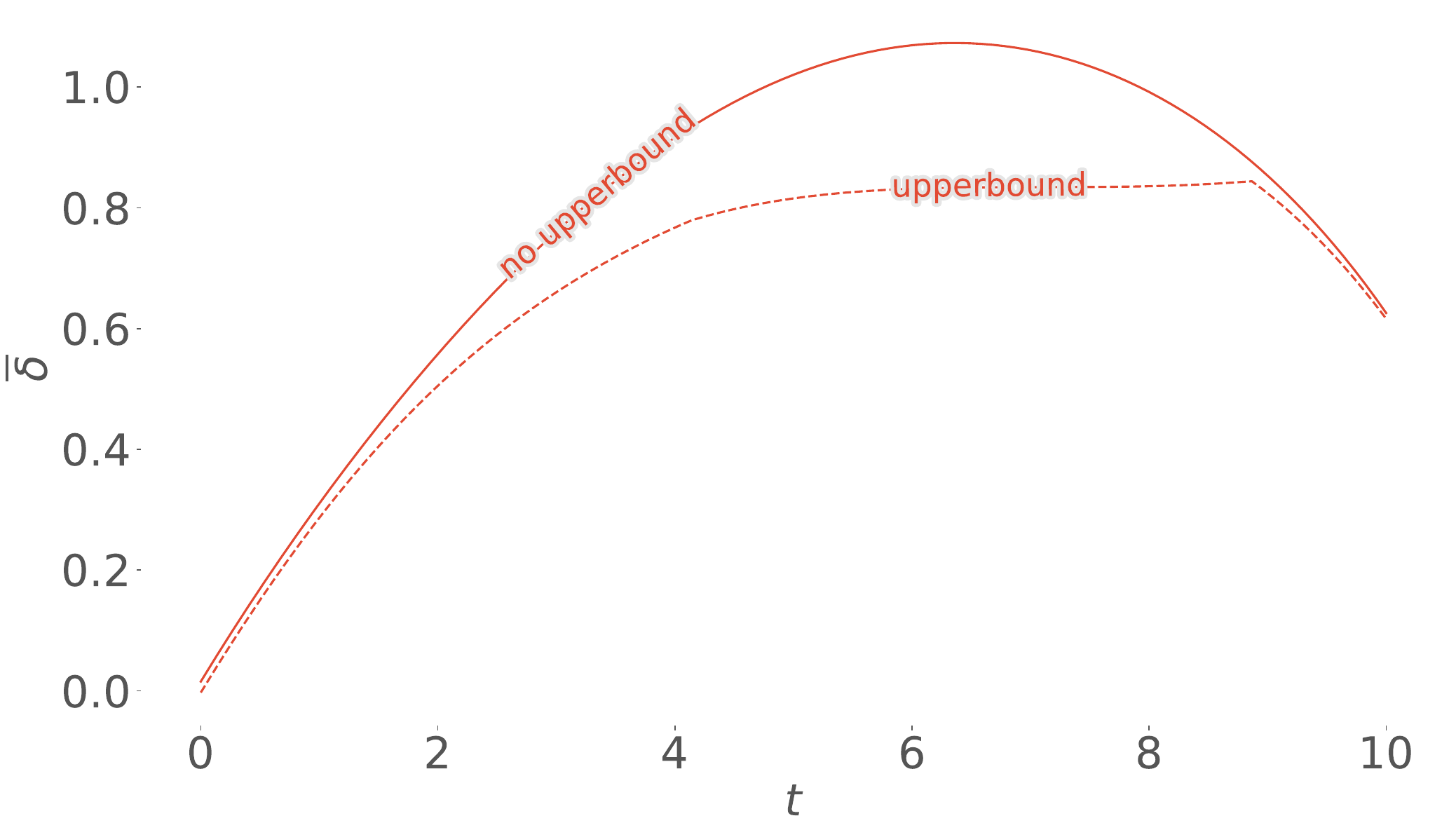}
			\includegraphics[width=0.48\textwidth]{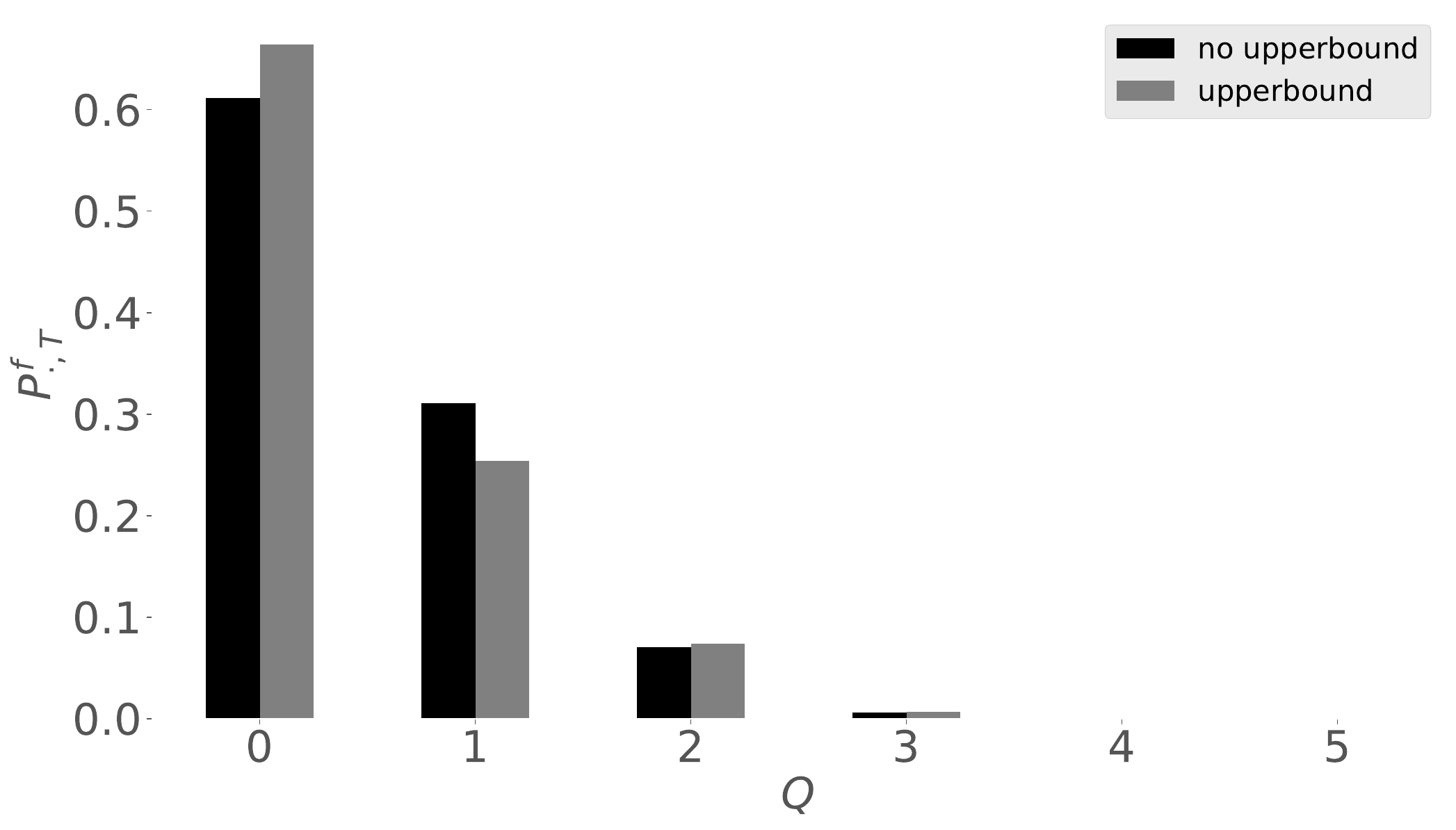}
			\caption{Mean spread and the distribution of inventory across agents at terminal time in equilibrium with and without an upper bound. Parameters are $T=10$, $\overline{Q}=5$, $\alpha=0.1$, $\kappa=1$, $\phi=0.03$, $A=1$, $\beta = 0.3$, $\gamma=0.1$, $\underline{B}=-10$, and $\overline{B}=1$.}
			\label{fig:UpperboundMeanQuote&P_T}
		\end{figure}
		
		In Figure \ref{fig:UpperboundMeanQuote&P_T}, the mean spread with the upper bound is always lower. At the beginning, the two lines are very close but as time goes by, more agents come to lower inventory levels and receive price restrictions, thus resulting in bigger differences between the mean spreads. Closer to the terminal time $T$, the two lines approach each other again. This is because the distribution of inventory levels and the corresponding optimal spreads are similar near terminal time in these two scenarios, as shown in Figure \ref{fig:UpperboundMultiQuote} and Figure \ref{fig:UpperboundMeanQuote&P_T}.
		
		\begin{figure}[h!]
			\centering
			\includegraphics[width=0.48\textwidth]{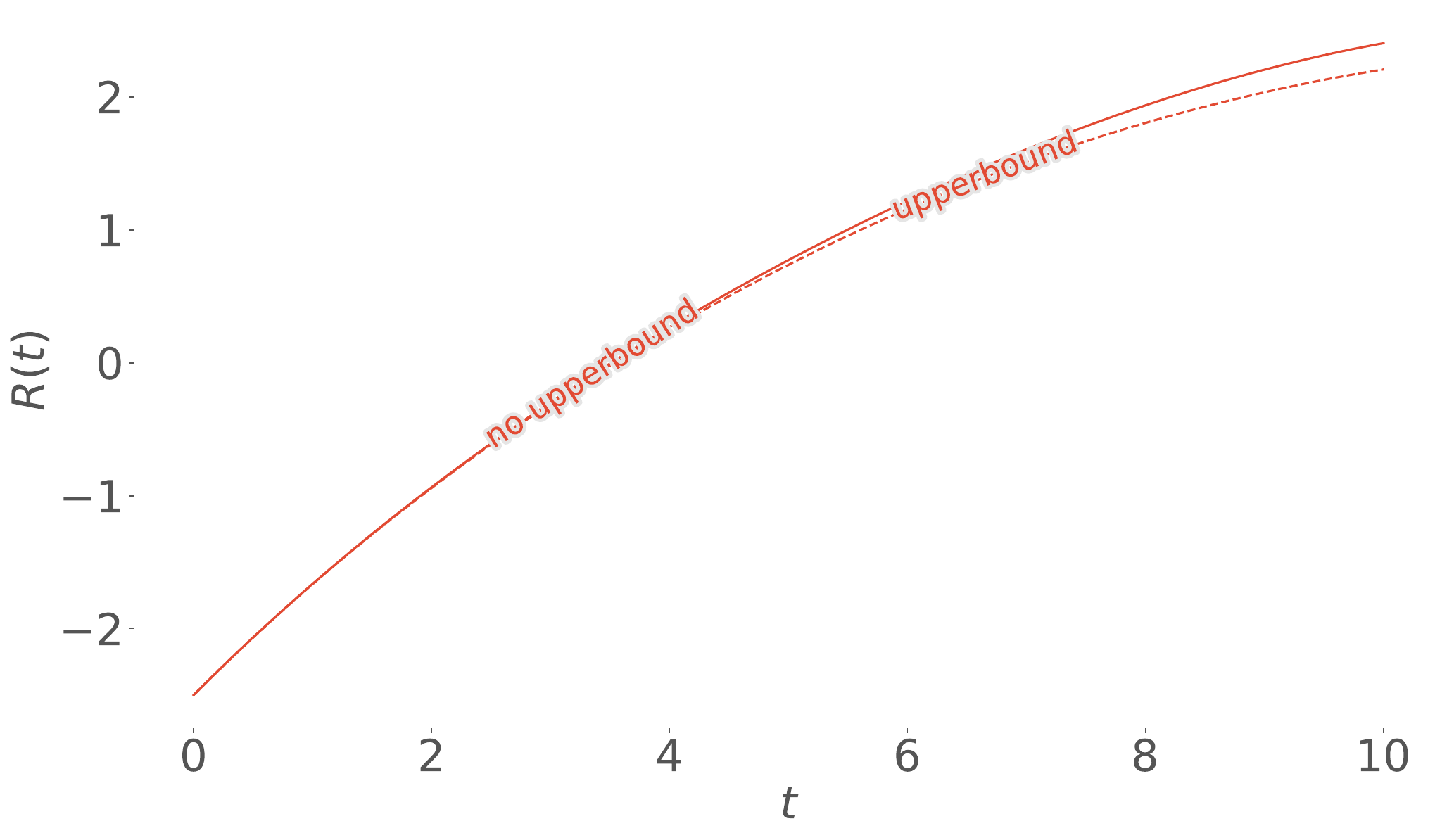}
			\includegraphics[width=0.48\textwidth]{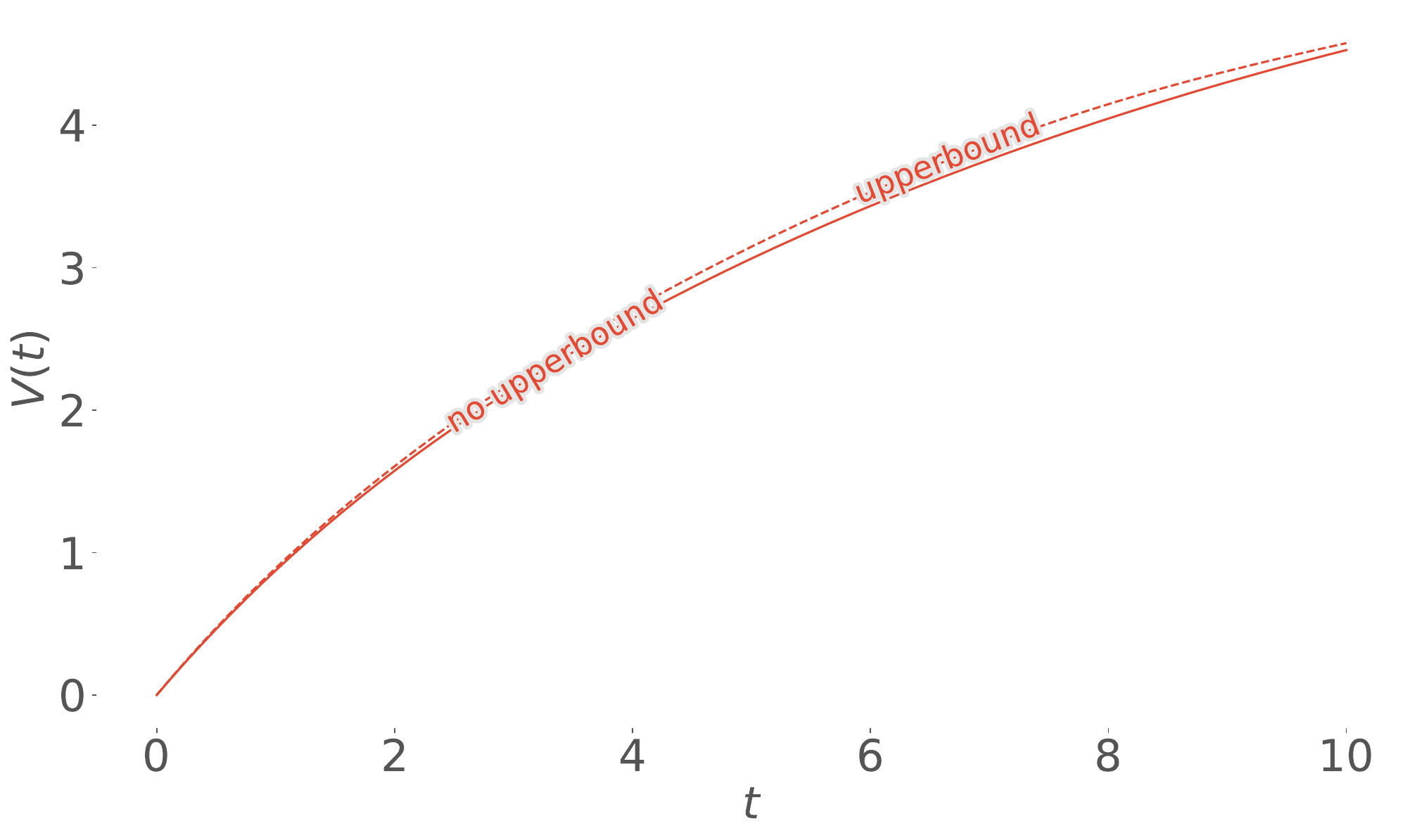}
			\includegraphics[width=0.48\textwidth]{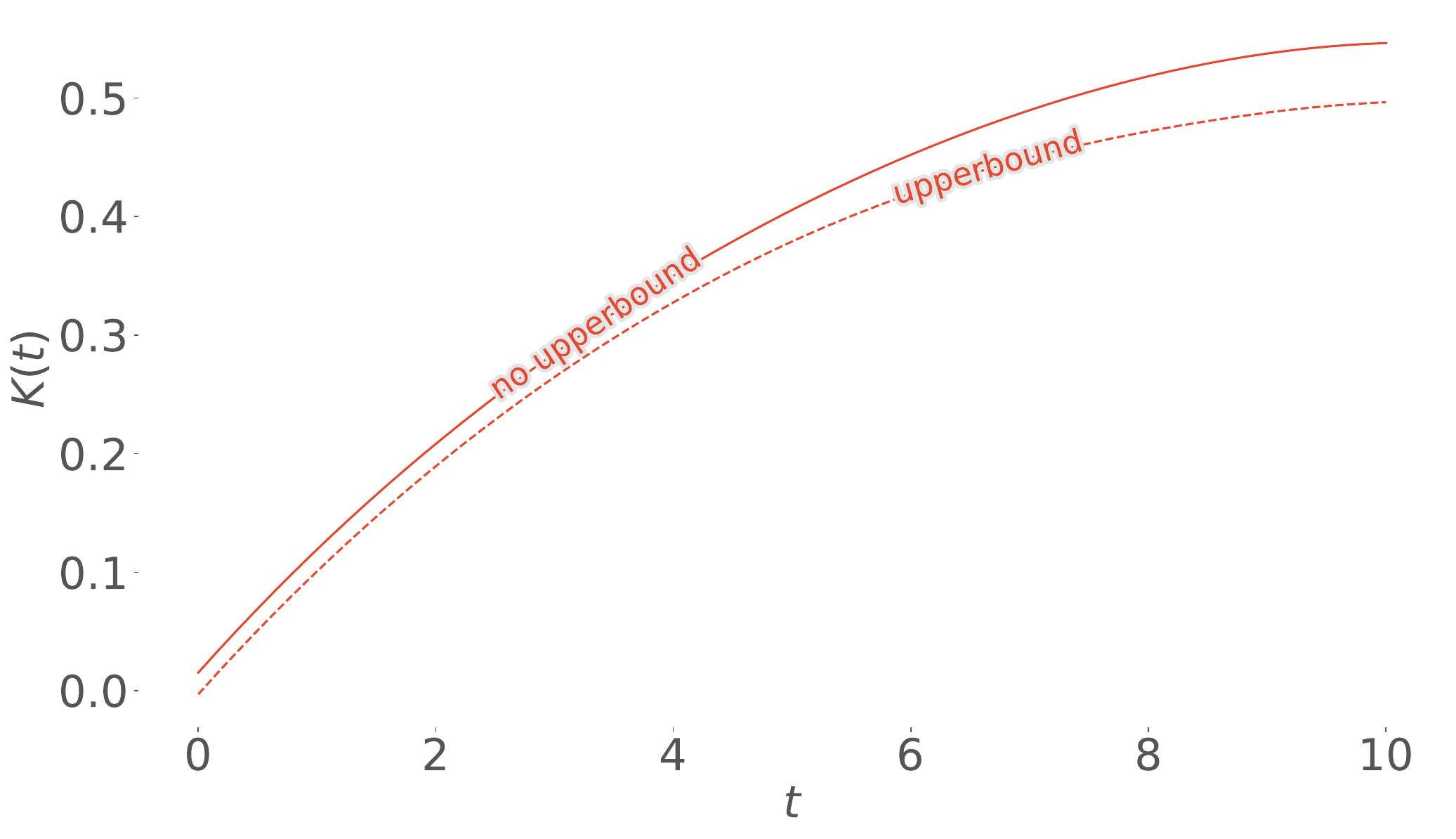}
			\includegraphics[width=0.48\textwidth]{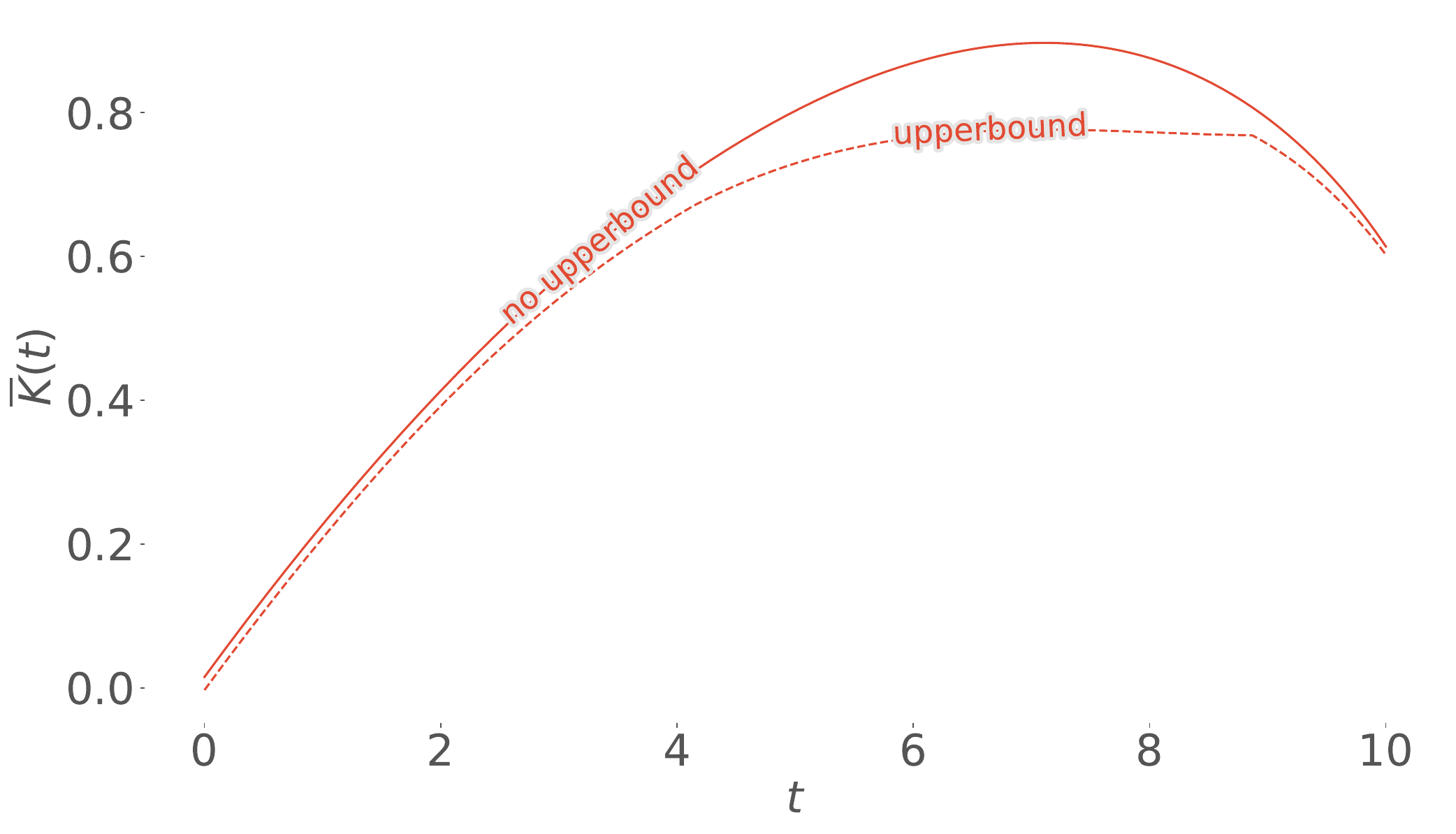}
			\caption{Cumulative revenue (top left), volume (top right), average transaction cost (bottom left) and instantaneous average cost (bottom right) with (dotted curves) and without (solid curves) an upper bound. Parameters are $T=10$, $\overline{Q}=5$, $\alpha=0.1$, $\kappa=1$, $\phi=0.03$, $A=1$, $\beta = 0.3$, $\gamma=0.1$, $\underline{B}=-10$, and $\overline{B}=1$.}
			\label{fig:UpperboundEcoFeature}
		\end{figure}
		
		Figures \ref{fig:UpperboundEcoFeature} shows that the price limit boosts the sales volume but reduces revenue for the agents. We can see that consumers buy products at a lower price with price ceiling, which means the restriction on high prices is effective in reducing consumer expenditure under our setting\footnote{In reality, price ceiling does not always work out as intended. If it leads to a severe imbalance between supply and demand, this can in turn cause shortages and underground markets.}.
		
		\subsection{Effect of Initial Inventory on Equilibrium}
		
		In the left panel of Figure \ref{fig:BiggerQOptimalQuote} we plot the feedback strategy in equilibrium for two different values of initial inventory, $\overline{Q}=5$ and $\overline{Q}=20$. The right panel of this figure shows the resulting mean spread for these initial values as well as for $\overline{Q}=50$ and $\overline{Q} = 110$. In the left panel, it is important to note that the curves corresponding to inventory levels $Q = 1$ to $Q = 5$ are different depending on what the initial inventory is. Specifically, each curve with initial inventory $\overline{Q} = 20$ is below the corresponding curve for $\overline{Q}=5$ (except at time $T$ where they coincide), even though they correspond to a pricing strategy for the same amount of remaining inventory. This is explained by the fact that average spreads as shown in the right panel are lower for large values of starting inventory. If everyone else starts with $\overline{Q}=20$, then the sell intensity of a representative agent is lower compared to if everyone else starts with $\overline{Q}=5$, all else being equal. To compensate for this decreased sell intensity, the representative agent lowers their own price.
		
		\begin{figure}[!htp]
			\centering
			\includegraphics[width=0.48\textwidth]{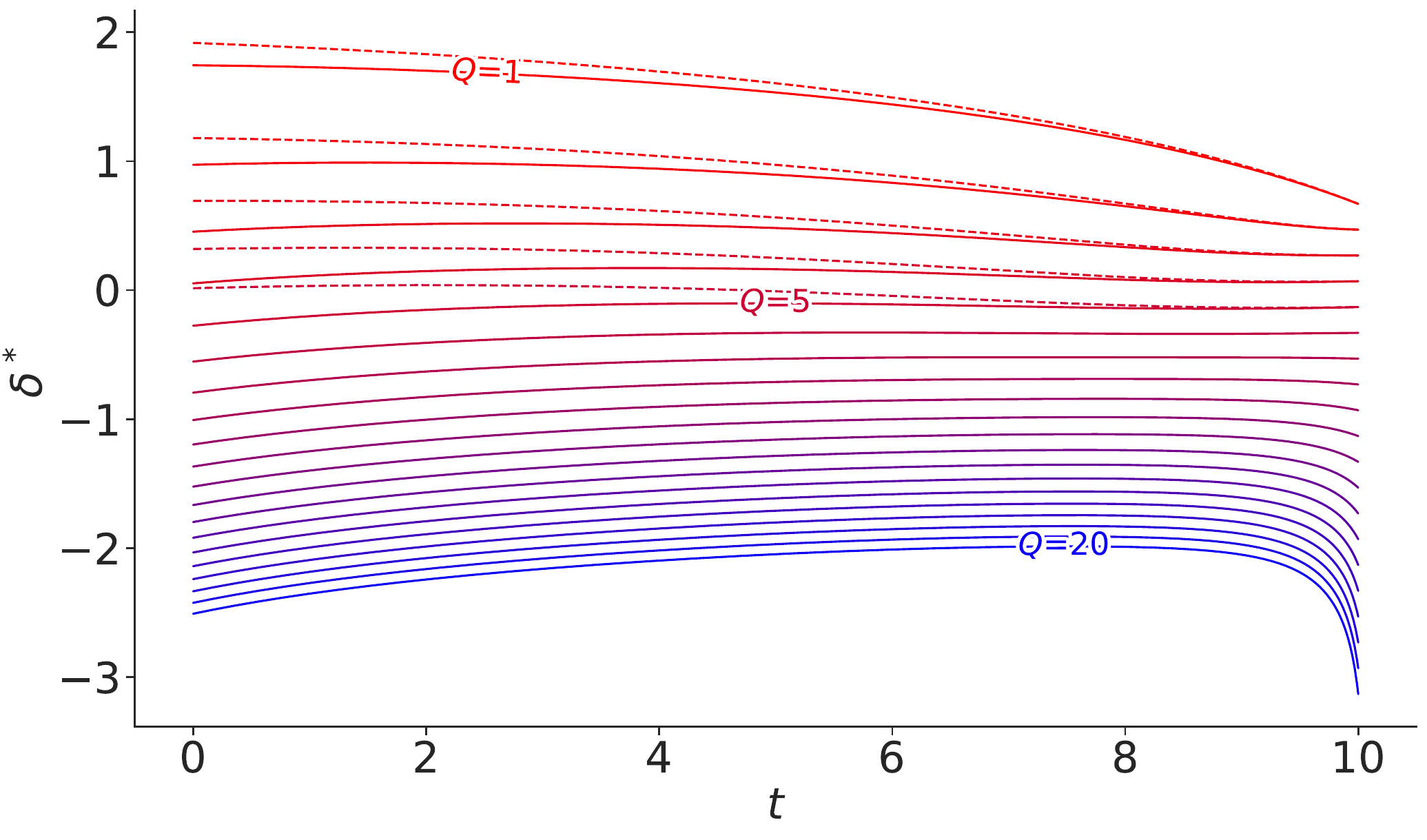}
			\includegraphics[width=0.48\textwidth]{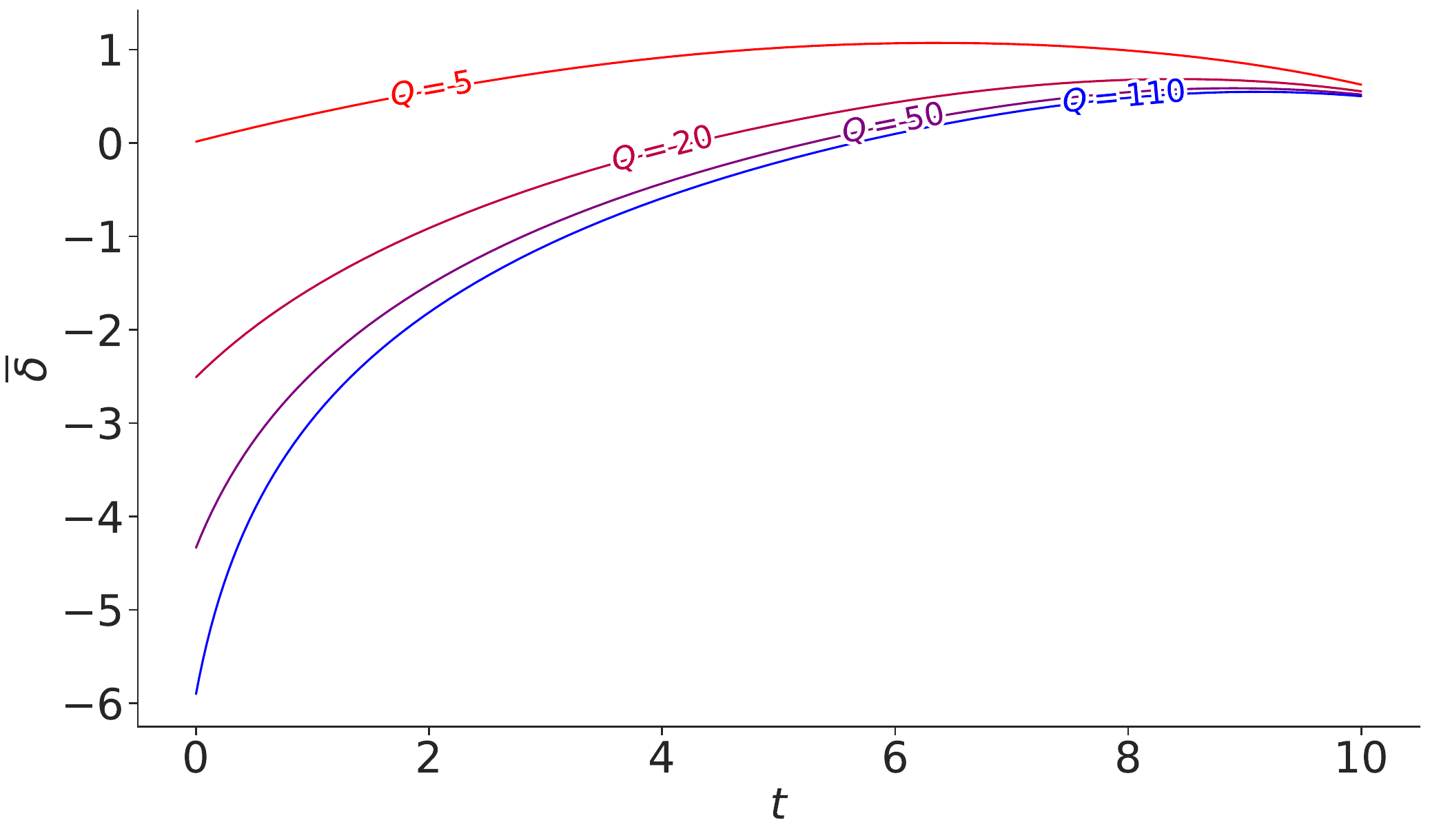}
			\caption{Left panel: Optimal spreads in mean-field equilibrium when $\overline{Q}=20$ (solid curves) compared to when $\overline{Q}=5$ (dotted curves). Right panel: Mean spread in equilibrium for various initial inventory levels. Other parameters are $T=10$, $\alpha=0.1$, $\kappa=1$, $\phi=0.03$, $A=1$, $\beta = 0.3$, $\gamma=0.1$, and $\underline{B}=-10$.}
			\label{fig:BiggerQOptimalQuote}
		\end{figure}

		\subsection{Effect of Overselling on Equilibrium}
		
		In this section, we study the effects of overselling where agents are allowed to sell more volume of inventory than they have physically available during the trading period, and they must pay extra cost to clear out the negative inventory at terminal time. This phenomenon frequently happens in the tourism industry. For example, airlines oversell air tickets anticipating some cancellations or no-shows of the customer bookings. If everyone shows up, airlines may refuse boarding to certain passengers but must pay compensation depending on jusrisdiction.
		
		We set the admissible inventory level set as $\{\underline{Q}, \underline{Q}+1,\cdots, \overline{Q}\}$, and $\underline{Q}$ is a non-positive constant. The new model under overselling setting is the same as the one in Section \ref{sec:competition_model}, except for the lower bound of inventory and the new value function
		\begin{align*}
			H(t,s,x,q;\overline{\delta})=\sup_{(\delta_u)_{t\le u\le T}\in\mathcal{A}} \mathbb{E}_{t,s,x,q} \biggl[ &X_T^{\delta,\overline{\delta}} + Q_T^{\delta,\overline{\delta}}\,S_T-\left(\alpha_1\,\mathcal{X}_{Q_T^{\delta,\overline{\delta}}\ge 0}+\alpha_2\,\mathcal{X}_{Q_T^{\delta,\overline{\delta}}< 0}\right)\left(Q_T^{\delta,\overline{\delta}}\right)^2\\
			&-\int^T_t\left(\phi_1\, \mathcal{X}_{Q_u^{\delta,\overline{\delta}}\ge 0}+\phi_2\,\mathcal{X}_{Q_u^{\delta,\overline{\delta}}< 0}\right)\left(Q_u^{\delta,\overline{\delta}}\right)^2 du \biggl]\,,
		\end{align*}
		where $\alpha_1<\alpha_2$ and $\phi_1<\phi_2$ are all positive constants. Compared to equation \eqref{eqn:ControlProblem}, we set different parameters $\alpha_1$ and $\alpha_2$ of terminal penalty for positive and negative inventory respectively. Here, the difference $\left(\alpha_2-\alpha_1\right)(Q_{T}^{\delta,\overline{\delta}})^2$ represents the extra expected compensation paid by agents in the case that consumers make a claim on the oversold units. Since overselling brings more inventory risk, we set parameters of running inventory penalty $\phi_1<\phi_2$ to discourage this behavior. Then the associated HJB equation is given by
		\begin{align*}
			\begin{split}
				&\partial_t H + \frac{1}{2}\,\sigma^2\,\partial_{ss}H-\left(\phi_1\mathcal{X}_{q\ge0}+\phi_2\mathcal{X}_{q<0}\right)\, q^2\\
				&\phantom{\partial_t H }+\sup_{\delta\ge \underline{B}}\lambda(\delta, \overline{\delta}) \left[ H(t,s,x+s+\delta,q-1;\overline{\delta})-H(t,s,x,q;\overline{\delta})\right]\mathcal{X}_{q>\underline{Q}} = 0\,,\\
				&H(T,s,x,q;\overline{\delta}) = x+q\,s-\left(\alpha_1\mathcal{X}_{q\ge0}+\alpha_2\mathcal{X}_{q<0}\right)\,q^2\,.
			\end{split}
		\end{align*}
		We use the same ansatz in Proposation \ref{prop:SolutiontoHJBEquation} to solve the above equations, and get the following ODEs
		\begin{align*}
			&\partial_t h_q - \left(\phi_1\mathcal{X}_{q\ge0}+\phi_2\mathcal{X}_{q<0}\right)\, q^2\\
			&\phantom{\partial_t h_q}+\sup_{\delta\ge \underline{B}} A\,\exp{\left\{-(\kappa+\beta)\,\delta+\beta\, \overline{\delta}\right\}}\left[\delta+h_{q-1}(t;\overline{\delta})-h_q(t;\overline{\delta}) \right]\mathcal{X}_{q>\underline{Q}} = 0\,,\\
			&h_q(T;\overline{\delta}) = -\left(\alpha_1\mathcal{X}_{q\ge0}+\alpha_2\mathcal{X}_{q<0}\right)\,q^2\,.
		\end{align*}
		As overselling does not affect the $\sup$ term in the ODEs, the feedback form of the optimal control is still the same as in equation \eqref{eqn:OptimalFeedbackControls}. We can show that the above ODEs of $h$ still has a unique solution under overselling setting, and the proof of verification theorem still stands with minor adjustments.
		
		\begin{figure}[h!]
			\centering
			\includegraphics[width=0.48\textwidth]{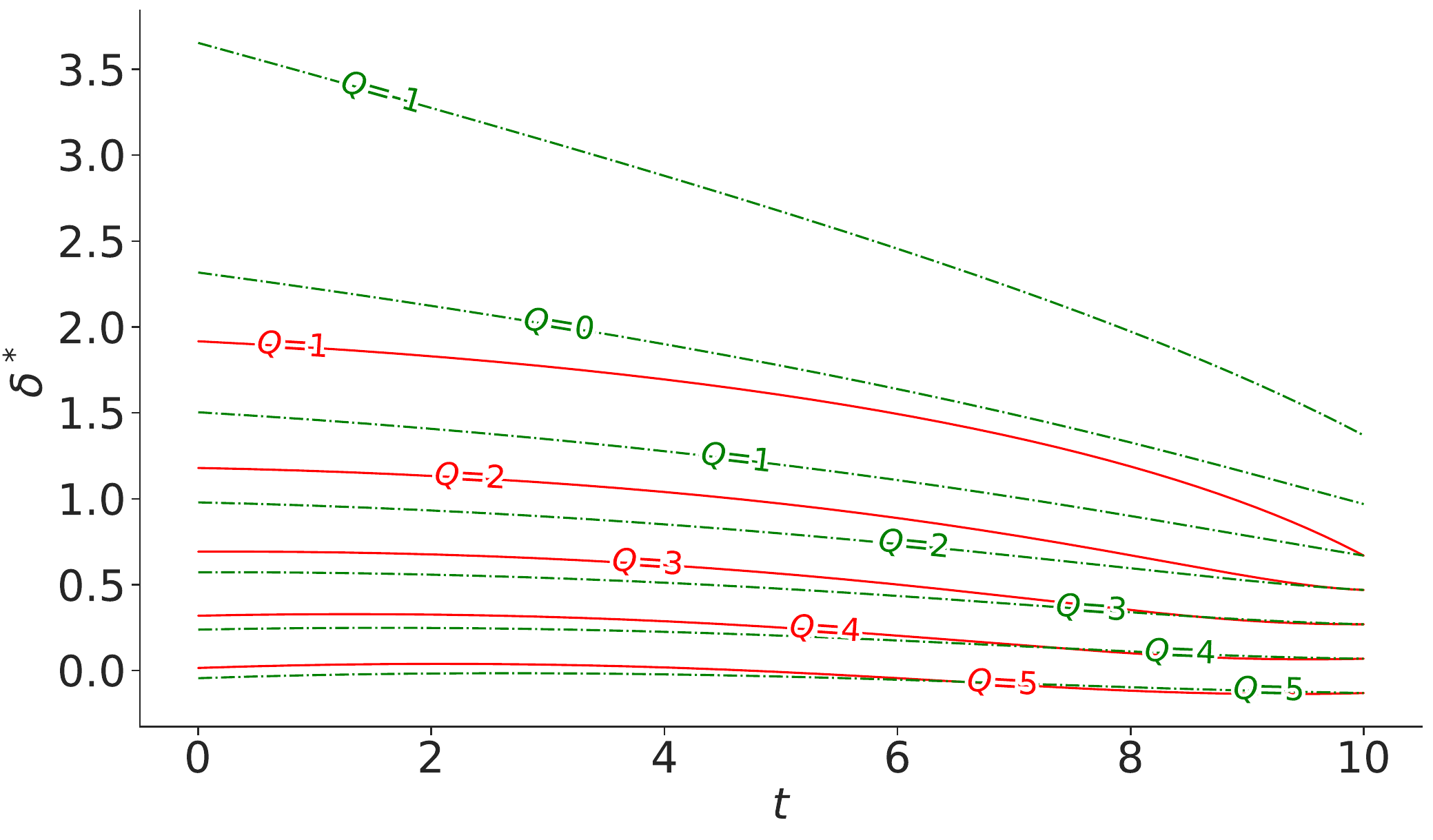}
			\includegraphics[width=0.48\textwidth]{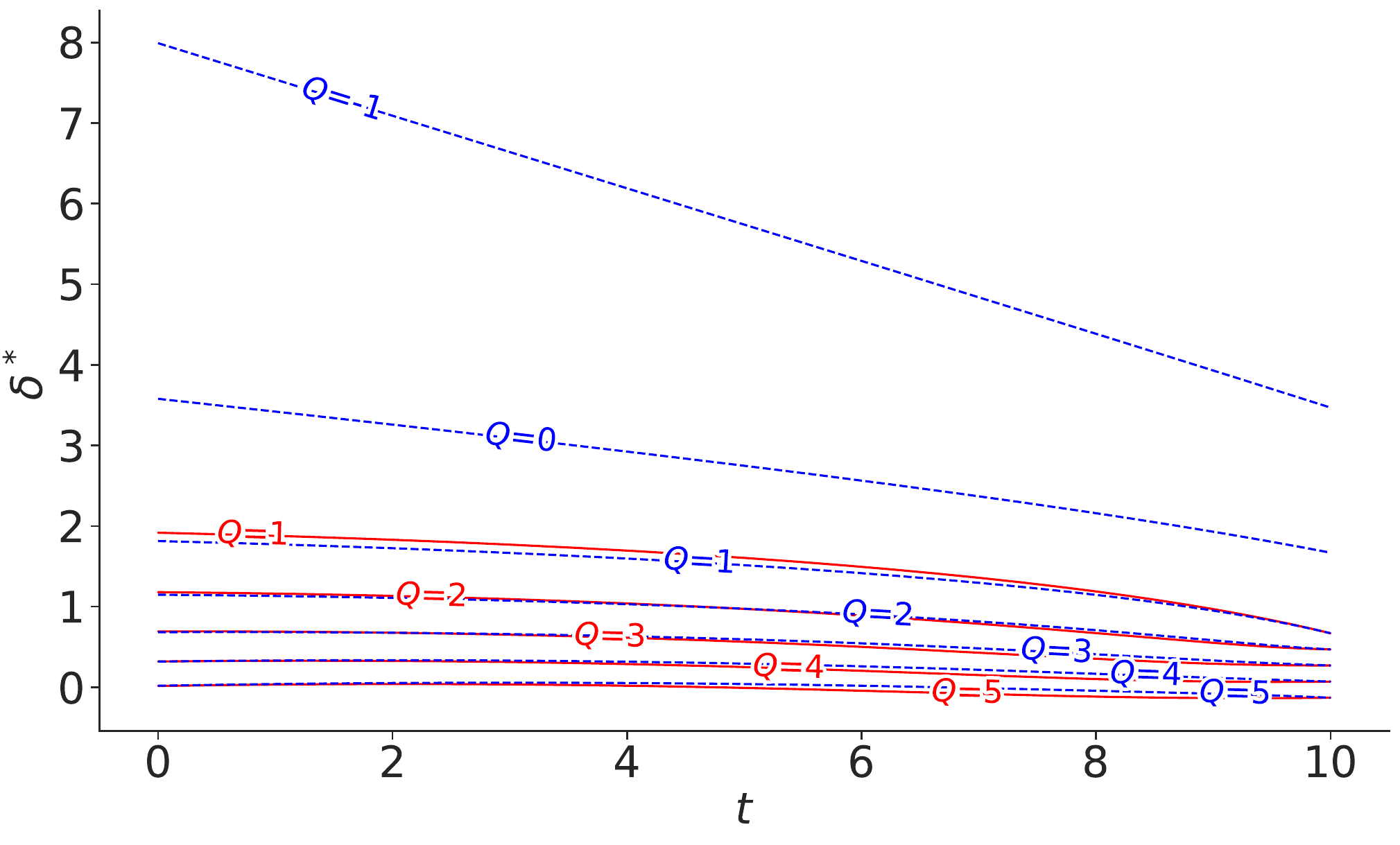}
			\caption{Optimal spreads in equilibrium with overselling (dotted curves) and without overselling (solid curve). The dotted curve on the left figure is with lower $\alpha_2=0.2$, $\phi_2=0.06$, and the one on the right figure is with higher $\alpha_2=0.9$, $\phi_2=0.15$. Other parameters are $T=10$, $\overline{Q}=5$, $\underline{Q}=-2$, $\alpha_1=0.1$, $\kappa=1$, $\phi_1=0.03$, $A=1$, $\beta = 0.3$, $\gamma=0.1$, $\underline{B}= -10$, and $\overline{B}=20$.}
			\label{fig:OversellMultiQuote}
		\end{figure}
		
		Figure \ref{fig:OversellMultiQuote} shows the change in optimal spreads when the ability to oversell is introduced, and it also indicates how the size of terminal penalty parameter $\alpha_2$ and running inventory penalty parameter $\phi_2$ for negative inventory affects the change in spreads. As expected, the optimal spreads under overselling setting are still decreasing in inventory level monotonically. On the left of Figure \ref{fig:OversellMultiQuote}, with smaller $\alpha_2$ and $\phi_2$, we can see that for all positive inventory levels, the optimal spreads with overselling are generally lower than the ones without overselling. On the right panel of Figure \ref{fig:OversellMultiQuote}, with larger values of $\alpha_2$ and $\phi_2$, the spreads for positive inventory still generally decrease when overselling is allowed, but the effect of this feature is less pronounced. This indicates that higher value of $\alpha_2$ and $\phi_2$ leads to higher optimal spreads across all inventory levels with overselling. Additionally, with stronger penalties for overselling, prices at early times are similar to prices when overselling is prohibited, but over time as more agents reach zero and negative inventory levels, the quoted spreads become very large because these agents are only willing to sell at high prices to recoup the large overselling penalty. 
		
		\begin{figure}[h!]
			\centering
			\includegraphics[width=0.48\textwidth]{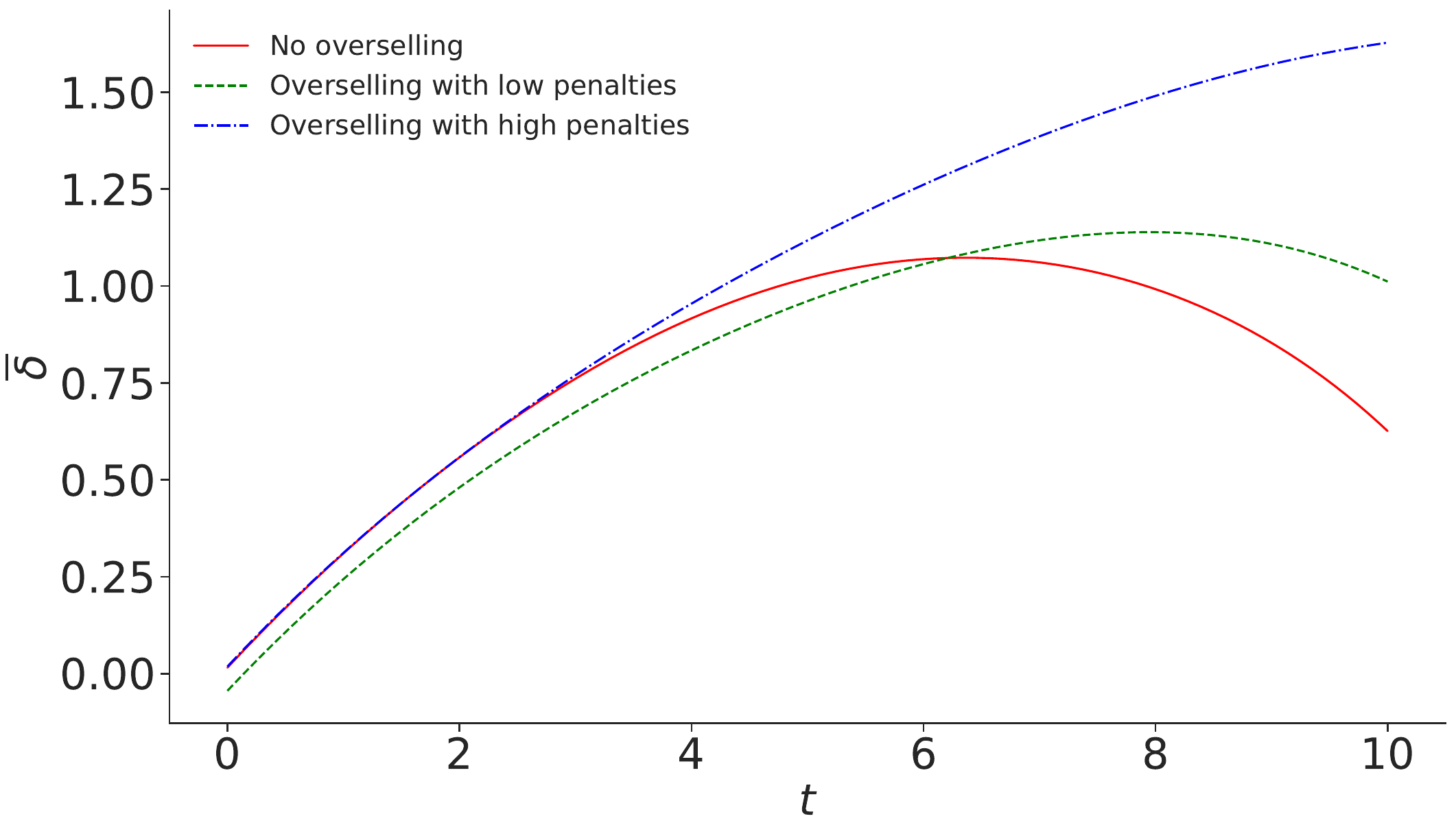}
			\includegraphics[width=0.48\textwidth]{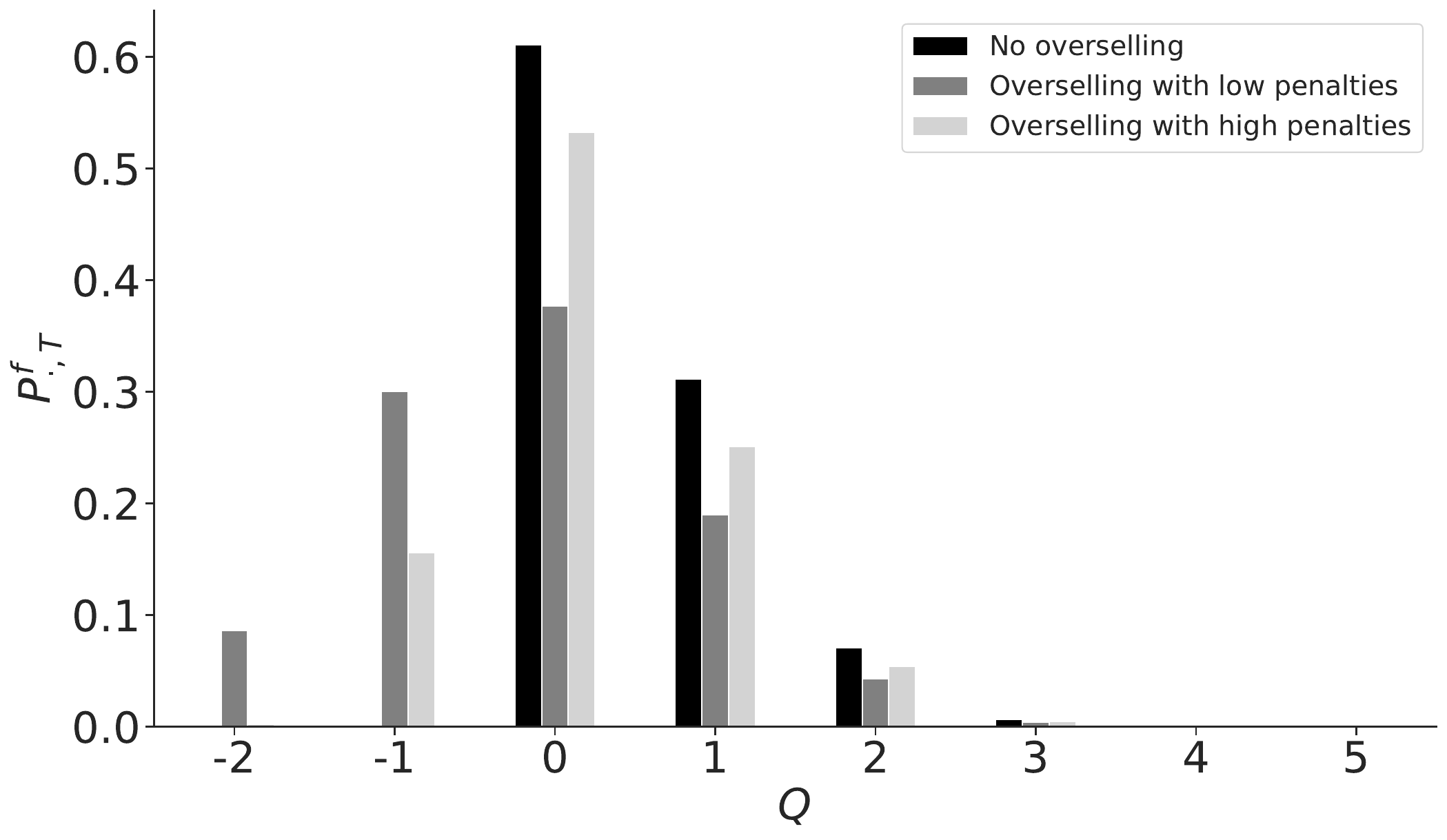}
			\caption{Mean spreads in equilibrium and the distribution of inventory at terminal time without overselling (solid curve) and with overselling (dotted and dash-dotted curves). The dotted curve with overselling is with lower $\alpha_2=0.2$, $\phi_2=0.06$, and the dash-dotted curve with overselling is with higher $\alpha_2=0.9$, $\phi_2=0.15$. Other parameters are $T=10$, $\overline{Q}=5$, $\underline{Q}=-2$, $\alpha_1=0.1$, $\kappa=1$, $\phi_1=0.03$, $A=1$, $\beta = 0.3$, $\gamma=0.1$, $\underline{B}=-10$, and $\overline{B}=20$.}
			\label{fig:OversellMeanQuote&P_T}
		\end{figure}
		
		The left of Figure \ref{fig:OversellMeanQuote&P_T} shows the change in mean spreads when the ability to oversell is introduced. The solid line stands for the mean spread without overselling, and the dotted and dash-dotted curves represent overselling for different values of $\alpha_2$ and $\phi_2$. The effect of overselling depends also heavily on the penalty parameters for negative inventory levels. The crossing of mean spread without overselling and mean spread with overselling with low penalty is consistent with the result about optimal spreads in the left of Figure \ref{fig:OversellMultiQuote}. In the beginning, agents who can oversell quote lower prices, thus resulting in a lower mean spread. As time goes by, the optimal spreads with overselling get closer to the ones without overselling. In the meantime, more agents achieve negative inventory levels with much higher spreads, so the mean spread with overselling exceeds the one without overselling closer to terminal time. For the mean spread without overselling and mean spread with overselling with high penalty, we can see that if the punishment for overselling is high enough, the mean spread with overselling can always be higher than the one without overselling. The right of Figure \ref{fig:OversellMeanQuote&P_T} compares the distributions of inventory at terminal time. The distribution of inventory at time $T$ with overselling follows a bell-shaped curve. With overselling, most agents stay at zero inventory, and there are more agents with negative inventory than the ones with positive inventory. With overselling, we can observe that higher terminal penalty parameter $\alpha_2$ and running inventory penalty parameter $\phi_2$ for shorting indeed reduces the proportion of agents ending at negative inventory levels.
		
		\begin{figure}[h!]
			\centering
			\includegraphics[width=0.48\textwidth]{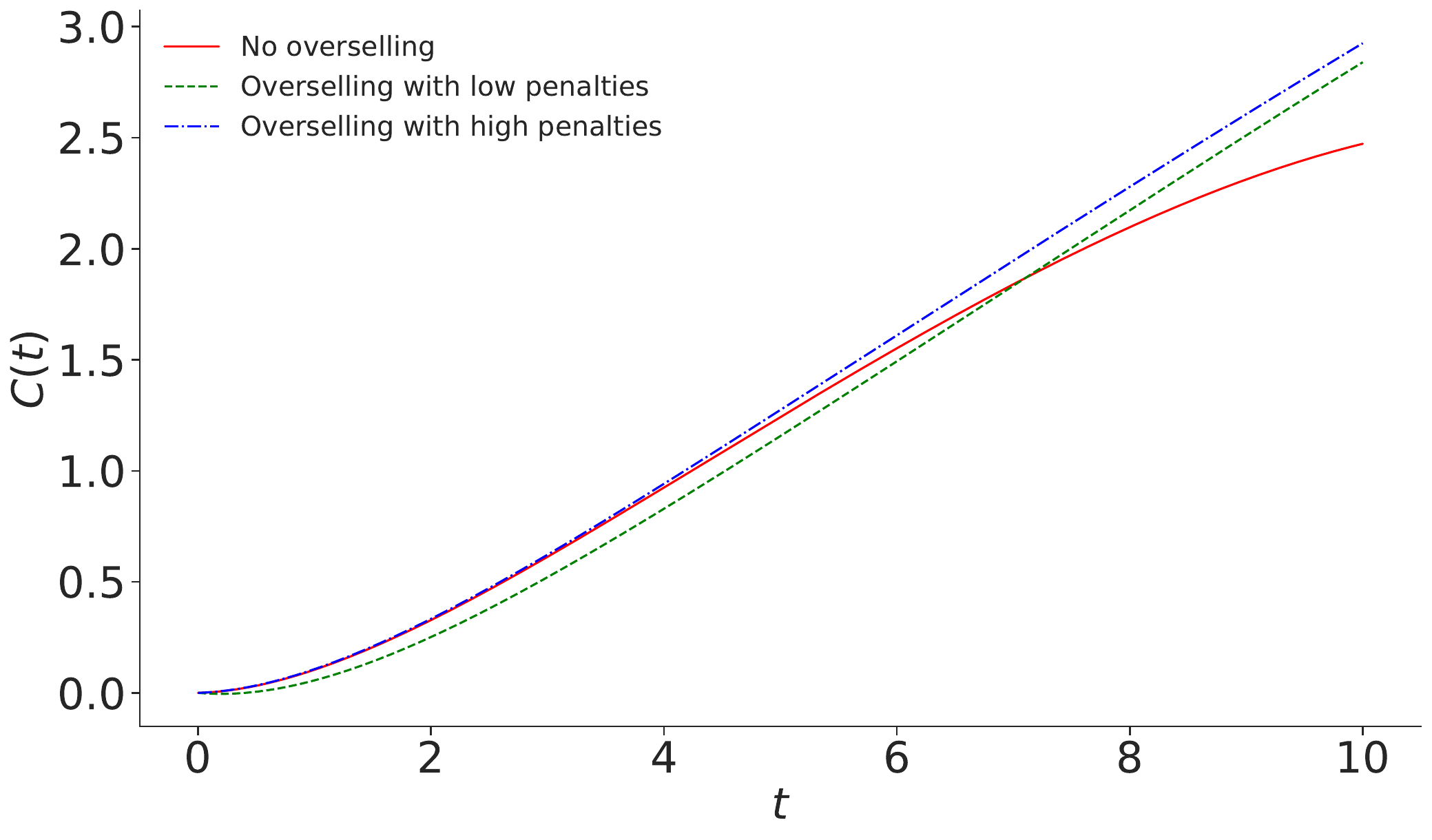}
			\includegraphics[width=0.48\textwidth]{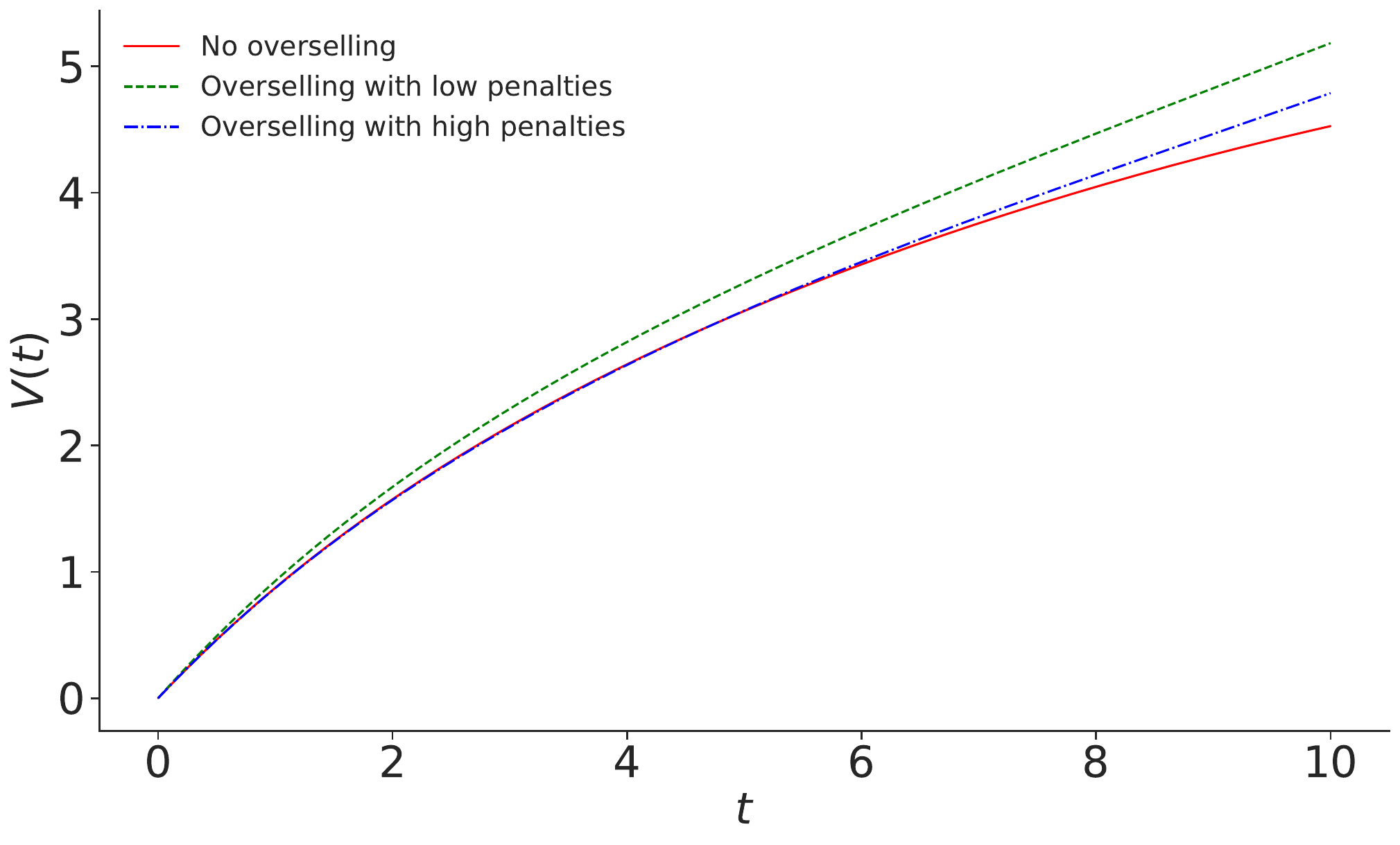}
			\includegraphics[width=0.48\textwidth]{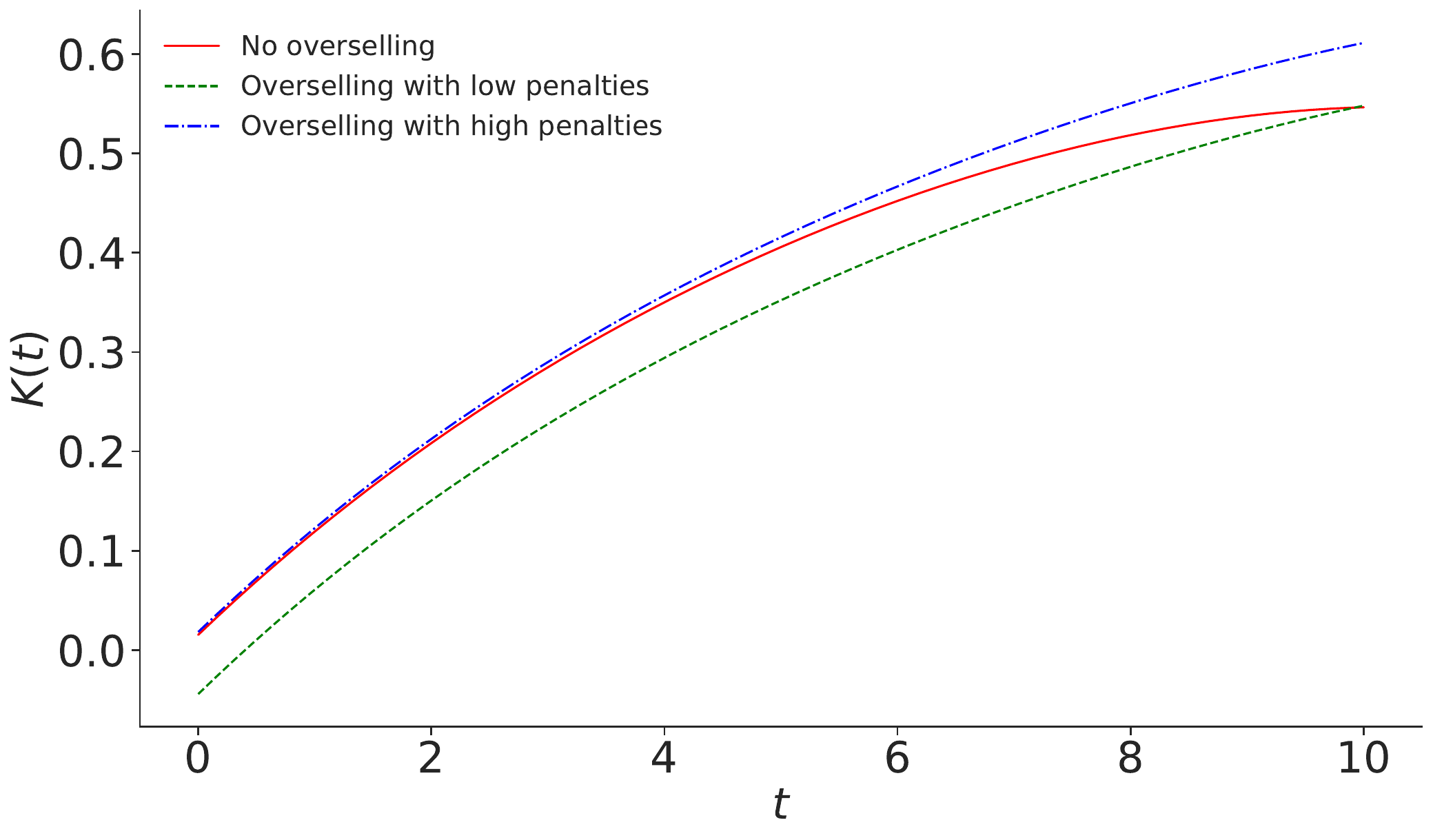}
			\includegraphics[width=0.48\textwidth]{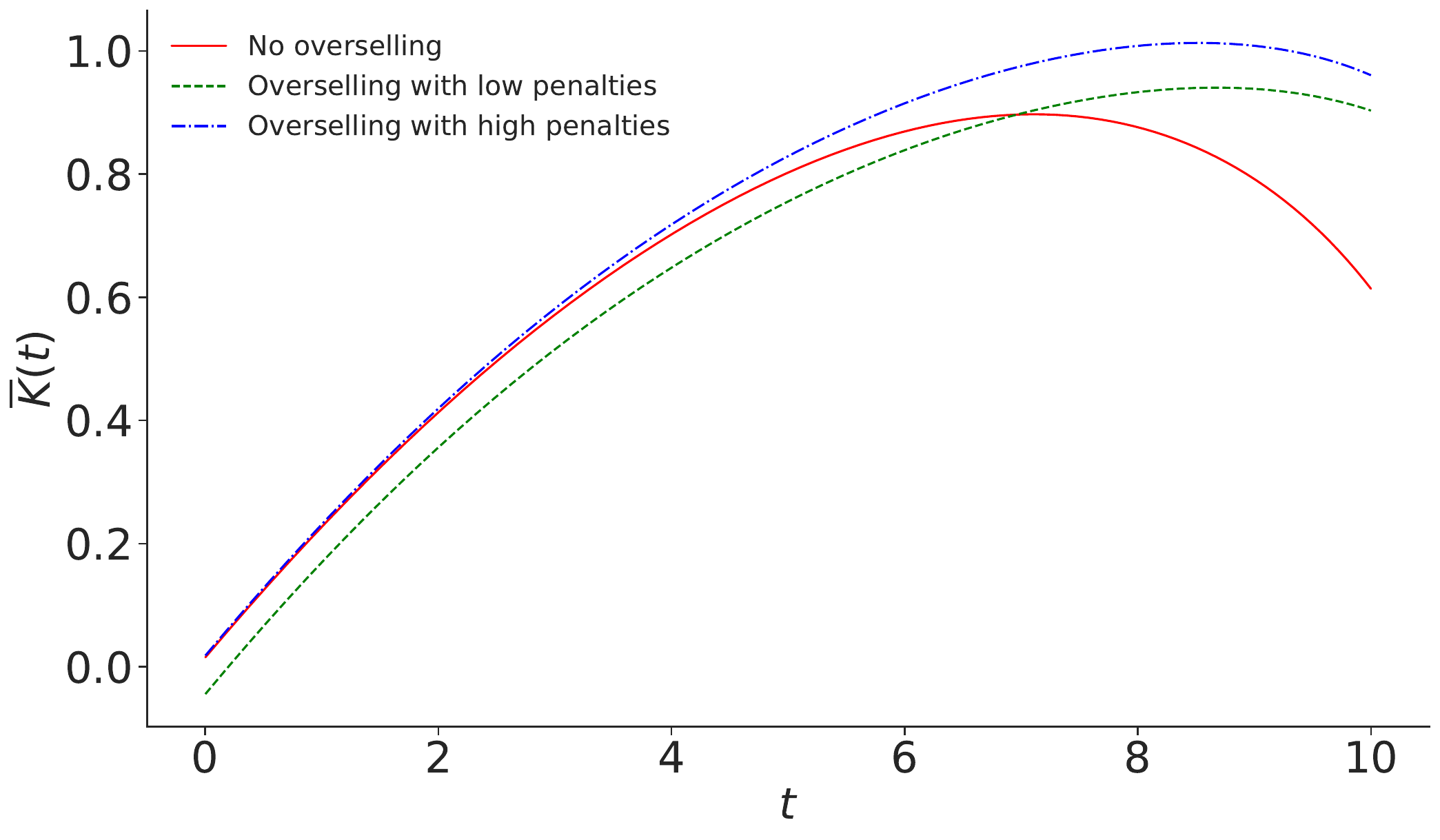}
			\caption{Cumulative revenue (top left), volume (top right), average transaction cost (bottom left) and instantaneous average cost (bottom right) without overselling (solid curve) and with overselling (dotted and dash-dotted curves). The dotted curve is with lower $\alpha_2=0.2$, $\phi_2=0.06$, and the dash-dotted curve is with higher $\alpha_2=0.9$, $\phi_2=0.15$. Other parameters are $T=10$, $\overline{Q}=5$, $\underline{Q}=-2$, $\alpha_1=0.1$, $\kappa=1$, $\phi_1=0.03$, $A=1$, $\beta = 0.3$, $\gamma=0.1$, $\underline{B}=-10$, and $\overline{B}=20$.}
			\label{fig:OversellEcoFeature}
		\end{figure}
		
		The cumulative cost $C(t)$, cumulative revenue $R(t)$, and cumulative volume $V(t)$ are slightly modified when overselling is introduced, and we write them as
		\begin{align*}
			C(t) &= \sum_{q=\underline{Q}+1}^{\overline{Q}}\int_0^{t}f(u,q)\,P^f_{q,u}\,\lambda(f(u,q),\overline{\delta}_u)\,du\,, \\
			R(t) &= \sum_{q=\underline{Q}+1}^{\overline{Q}}\left(\int_0^{t}f(u,q)\,P^f_{q,u}\,\lambda(f(u,q),\overline{\delta}_u)\,du-\alpha_1\,P^f_{q,t}\, q^2\,\mathcal{X}_{q>0}-\alpha_2\, P^f_{q,t}\, q^2\,\mathcal{X}_{q<0}\right)\,, \\
			V(t) &= \sum_{q=\underline{Q}+1}^{\overline{Q}}\int_0^{t}\,P^f_{q,u}\,\lambda(f(u,q),\overline{\delta}_u)\,du\,.
		\end{align*}
		Figure \ref{fig:OversellEcoFeature} shows that regardless of the level of penalty, cumulative cost near the terminal time with overselling is higher. From the top right panel we see that overselling leads to higher traded volume, with smaller penalties leading to higher total volume traded. In the bottom left panel, the average transaction cost without overselling lies between the ones with overselling. When overselling with high penalties, overselling leads to higher average transaction cost. However, with low penalty, consumers indeed pay less on average to buy more because of overselling. This indicates that with the appropriate levels of penalty parameters $\alpha_2$ and $\phi_2$ for overselling, agents can sell more products, while consumers pay almost the same average transaction cost at terminal time compared to the non-overselling case. In the bottom right panel, we can observe that the instantaneous average cost with overselling is always high close to the terminal time. This is due to the fact that the instantaneous fraction of oversold products is higher near the terminal time, thus raising up the unit price. It is generally more beneficial for consumers to purchase earlier to avoid higher price due to overselling.
		
		\begin{figure}[h!]
			\centering
			\includegraphics[width=0.7\textwidth]{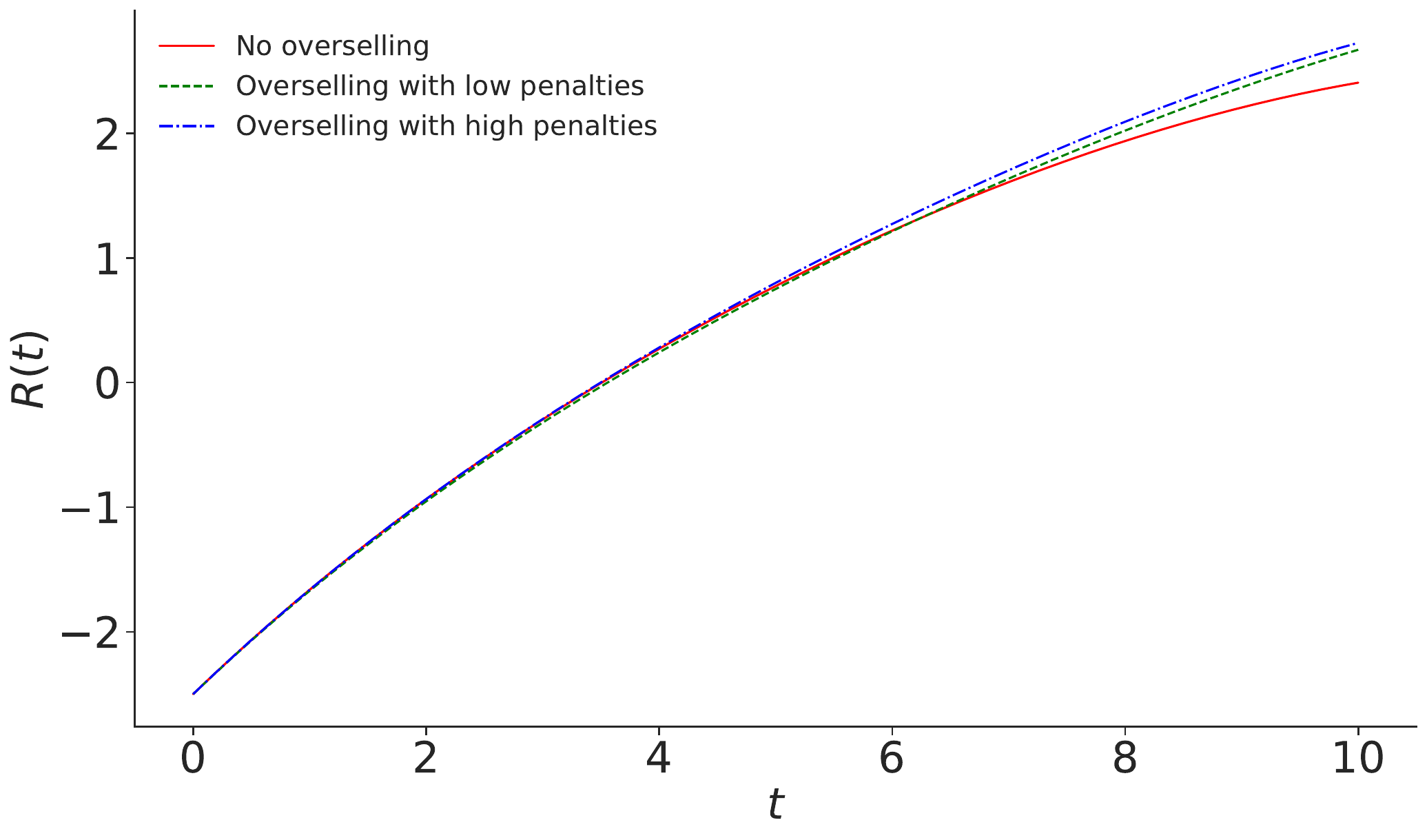}
			\caption{Cumulative revenue without overselling (solid curve) and with overselling (dotted and dash-dotted curves). The dotted curve is with lower $\alpha_2=0.2$, $\phi_2=0.06$, and the dash-dotted curve is with higher $\alpha_2=0.9$, $\phi_2=0.15$. Other parameters are $T=10$, $\overline{Q}=5$, $\underline{Q}=-2$, $\alpha_1=0.1$, $\kappa=1$, $\phi_1=0.03$, $A=1$, $\beta = 0.3$, $\gamma=0.1$, $\underline{B}=-10$, and $\overline{B}=20$.}
			\label{fig:OversellTotalRevenue}
		\end{figure}
		
		Perhaps counterintuitive is the result that a higher overselling penalty can result in greater total revenue by agents, as seen in Figure \ref{fig:OversellTotalRevenue}. However, as the discussion of Figure \ref{fig:OversellMultiQuote} has shown, when overselling penalties are large, agents spend much of the time interval quoting prices as if they are not allowed to oversell. These prices are higher than the situation with weak overselling penalties, because those agents wish to lower prices and accelerate the rate of liquidating their inventory. With a large penalty, when most agents have sold off their positive positions, the market moves into an overselling regime where prices are much higher. This is why we see a deviation of the solid (no overselling) and dash-dotted (highly penalized overselling) curves in Figures \ref{fig:OversellMeanQuote&P_T} (left panel) and \ref{fig:OversellEcoFeature} (all panels) which becomes apparent around $t=T/2$.
		
		Our final investigation is on the effect of overselling on consumers from the perspective of cancellation. If more units of the product are sold than are physically available, then some consumers will end up empty handed. Depending on the industry, the compensation policy, and the consumer herself, she may be indifferent between having the product versus the compensation, but nevertheless the probability of being in this situation is of interest.
		
		For one particular consumer at terminal time, we define event $E$ as the scenario of her product being cancelled due to overselling, and events $E_q$ as the scenario of her buying from agents who oversold $q$ share, $\forall q \in \mathbb{Z}^+$. We assume that at the terminal time, agents who oversold randomly cancel orders due to shortage uniformly across all consumers they transacted with, so the cancellation probability for a particular consumer is given by
		\begin{align*}
			\mathbb{P}(E) = \sum_{q=1}^{-\underline{Q}} \mathbb{P}(E_q)\, \mathbb{P}(E\vert E_q) =\sum_{q=1}^{-\underline{Q}} \frac{P^f_{-q,T}}{1-P^f_{\overline{Q},T}}\,\frac{q}{\overline{Q}+q}\,.
		\end{align*}
		
		\begin{figure}[h!]
			\centering
			\includegraphics[width=0.48\textwidth]{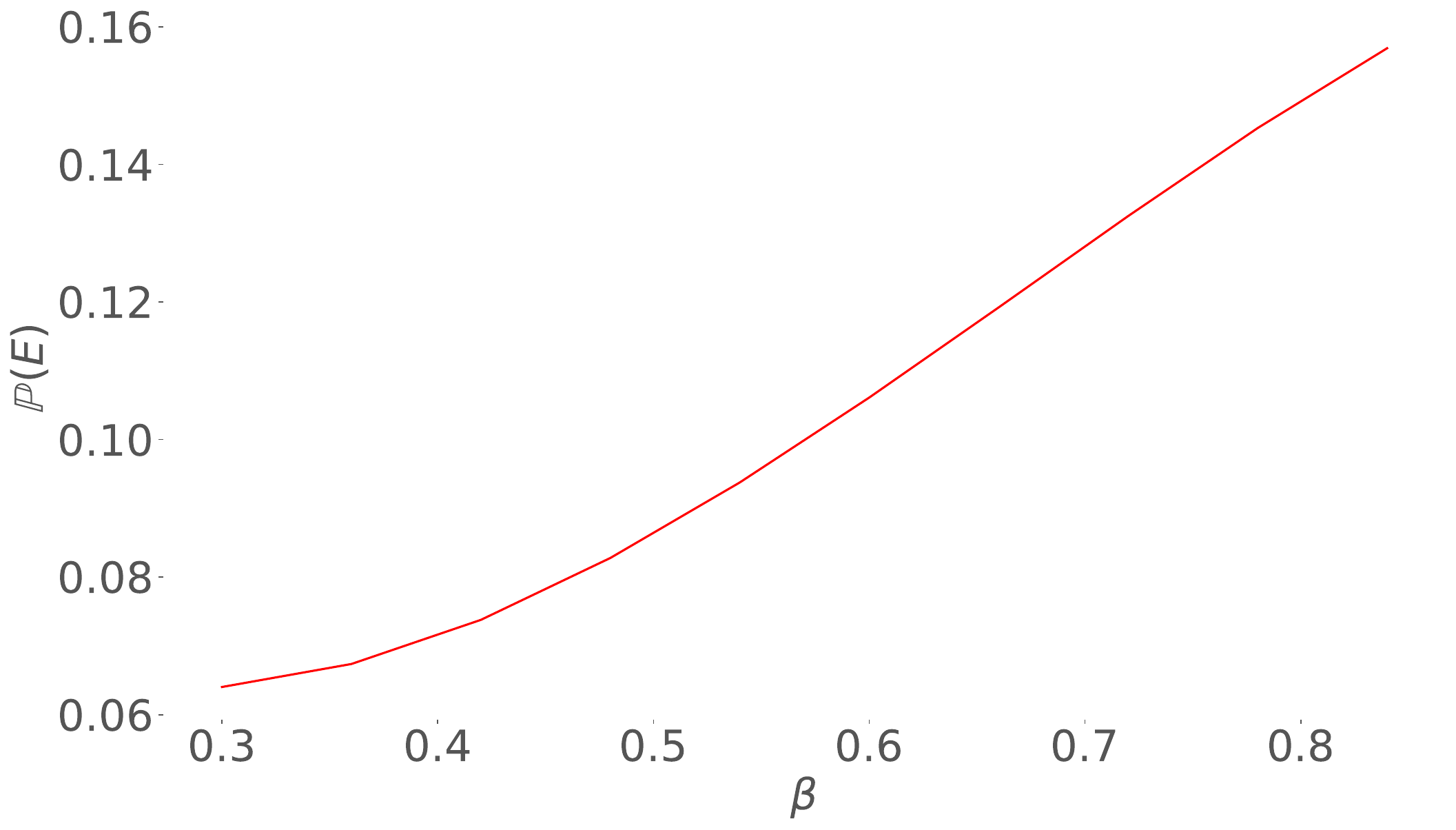}
			\includegraphics[width=0.48\textwidth]{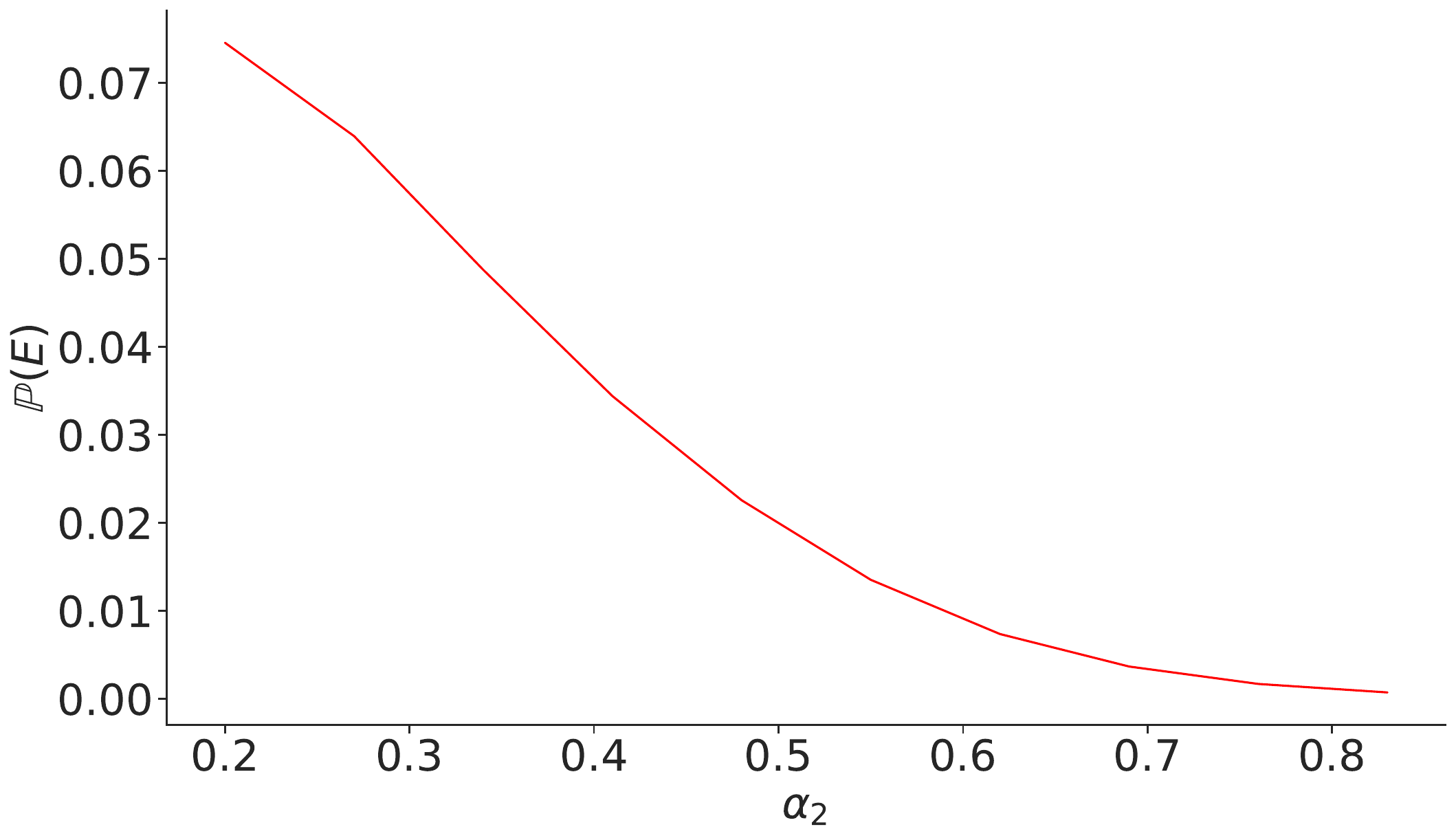}
			\caption{Cancellation probability with different values of competition parameter $\beta$ and different values of negative inventory penalty parameters $\alpha_2$ and $\phi_2$ (we fix the ratio between these parameters by $\phi_2=0.3\,\alpha_2$). In both figures, the other fixed parameters are given by $T=10$, $\overline{Q}=5$, $\underline{Q}=-2$, $\beta = 0.3$, $\kappa=1$, $A=1$, $\alpha_1=0.1$,  $\phi_1=0.03$, $\gamma=0.1$, $\underline{B}=-10$, and $\overline{B}=20$.}
			\label{fig:OversellP(E)}
		\end{figure}
		
		The left of Figure \ref{fig:OversellP(E)} shows how the value of competitiveness parameter $\beta$ effects this cancellation probability. We can see that the cancellation probability is increasing with competitiveness. With higher competitiveness, agents will generally lower their spreads, so the proportion of agents ending at negative inventory levels will increase. 
		In the right panel of Figure \ref{fig:OversellP(E)}, we can observe that higher penalty for overselling leads to a lower probability of cancellation. This is consistent with the result on the right of Figure \ref{fig:OversellMeanQuote&P_T} where we demonstrated that a higher terminal penalty parameter $\alpha_2$ and running inventory parameter $\phi_2$ reduces the proportion of agents who oversold, thus decreases the cancellation probability. For consumers, there is a trade-off between the cancellation probability and the average transaction cost and instantaneous average cost, which is increasing in penalty parameters as per Figure \ref{fig:OversellEcoFeature}. 
		
		\section{Conclusion}\label{sec:conclusion}
		
		We have formulated a model for dynamic inventory pricing which accounts for the effects of competition through a mean-field interaction. First, we introduce a reference model with one agent looking to liquidate a significant number of certain product. We expand our model into the case of infinite players. In our model, the realized sales of each agent is described by a doubly stochastic Poisson process, whose intensity depends on both one's own quoted price and the distribution of quoted prices among all agents. Through the frequency of individual sales, agents compete with each other. This mean-field game system consists of two equations: the dynamic programming equation describing the optimality of representative agent, and the Kolmogorov forward equation governing the dynamics of the distribution of inventory across agents. The two equations are coupled by the consistency condition, which enables us to find an equilibrium numerically by fixed point iterations. For a fixed parameter set, all of our numerical experiments converge to the same equilibrium within tolerance. As expected, competition leads to more sales and lower mean quoted prices of the market as a whole. Interestingly, when market competition's level increases, the optimal quoted prices for all inventory levels do \emph{not} necessarily decrease all the time.
		
		\section*{Appendix}\label{sec:appendix}
		\renewcommand{\thesubsection}{\Alph{subsection}}
		
		\subsection{Proof of Proposition \ref{prop:SolutiontoHJBEquation}}
		\label{proof:SolutiontoHJBEquation}
		\begin{proof}
			Following a similar proof of Proposition \ref{prop:SolutiontoHJBEquationSingle}, we can show that the optimal strategy given by equation \eqref{eqn:OptimalFeedbackControls} is a maximizer.
			
			Equation \eqref{eqn:HJB} is of the form $\partial_t \bold{h} = \bold{F}(\bold{h})$. To show existence and uniqueness of the solution to this equation, by the Picard-Lindelof theorem, the function $\bold{F}$ need to be Lipschitz continuous.
			
			The Lipschitz continuity property of $f$ implies the same for $\bold{F}$, where $f$ satifies
			\begin{align*}
				f(x,y)=\sup_{\delta\ge \underline{B}}A\,\exp{\left\{-(\kappa+\beta)\,\delta+\beta\,\overline{\delta}\right\}}\,(\delta+x-y)\,.
			\end{align*}
			The Lipschitz continuity property of $f$ will be a result of showing that all directional derivatives of $f$ exist and are bounded for all $(x,y)\in \mathbb{R}^2$.
			
			The supremum is attained at $\delta^*$
			in equation \eqref{eqn:OptimalFeedbackControls}. Thus, two separate domains for $f$ must be considered: $ x-y< - \underline{B}+\frac{1}{\kappa+\beta}$ and $x-y\ge-\underline{B}+\frac{1}{\kappa+\beta}$. First, consider $ x-y< -\underline{B}+\frac{1}{\kappa+\beta}$, so that $\delta^*=\frac{1}{\kappa+\beta}+y-x$. Substituting this into the expression for $f$ yields
			\begin{align*}
				f(x,y)=\frac{A}{\kappa+\beta}\,\exp{\left\{(\kappa+\beta)\,(x-y)+\beta\,\overline{\delta}-1\right\}}\,.
			\end{align*}
			Taking partial derivatives of $f$ in this domain gives us
			\begin{align*}
				\partial_x f(x,y)=-\partial_y f(x,y)=A\,\exp{\left\{(\kappa+\beta)\,(x-y)+\beta\,\overline{\delta}-1\right\}}\,,
			\end{align*}
			and this expression is bounded in $\left(0, A\,\exp{\left\{-\underline{B}\,(\kappa+\beta)+\beta\,\overline{\delta}\right\}}\right)$. Thus, $\partial_x f$ and $\partial_y f$ are bounded in this domain, and so directional derivatives exist and are also bounded everywhere in the interior of the domain. On the boundary, directional derivatives exist and are bounded if the direction is towards the interior of the domain.
			
			Now consider $x-y\ge-\underline{B}+\frac{1}{\kappa+\beta}$, which implies $\delta^*=\underline{B}$. The expression of $f(x,y)$ in this domain is
			\begin{align*}
				f(x,y)=A\,\exp{\left\{-\underline{B}\,(\kappa+\beta)+\beta\overline{\delta}\right\}}\,(\underline{B}+x-y)\,.
			\end{align*}
			Partial derivatives of $f$ are given by
			\begin{align*}
				\partial_x f(x,y)=-\partial_y f(x,y)=A\,\exp{\left\{-\underline{B}(\kappa+\beta)+\beta\,\overline{\delta}\right\}}\,.
			\end{align*}
			So similarly to the first domain, directional derivatives exist and are bounded in the interior. On the boundary, they exist and are bounded in the direction towards the interior of the domain. Thus, we have existence and boundedness on the boundary towards every of the two domains. The directional derivative on the boundary is zero when the direction is parallel to the boundary. Existence and boundedness of directional derivatives for all $(x,y)\in\mathbb{R}^2$ allows us to show the Lipschitz continuity condition easily:
			\begin{align*}
				\begin{split}
					\left|f(x_2,y_2)-f(x_1,y_1)\right|&=\left|\int_V\triangledown f(x,y)\cdot d\overset{\to}{r}\right|\le\int_V\left|\triangledown f(x,y)\right|\,ds\\
					&\le\int_VR\,ds=R\,\left|(x_2,y_2)-(x_1,y_1)\right|\,
				\end{split}
			\end{align*}
			where $V$ is the curve which connects $(x_1,y_1)$ to $(x_2, y_2)$ in a straight line and $R$ is a uniform bound on the
			gradient of $f$. This proves that there exists a unique solution $h$ to equation \eqref{eqn:HJB}. \qed
		\end{proof}
		
		\subsection{Proof of Theorem \ref{theo:VerificationTheorem}}
		\label{proof:VerificationTheorem}
		\begin{proof}
			We define a candidate value function $\hat{H}(t,s,x,q;\overline{\delta})=x+q\,s+h_q(t;\overline{\delta})$. 
			From Ito’s lemma we have
			\begin{align*}
				\begin{split}
					\hat{H}(T, S_T,X_{T^-}^{\delta,\overline{\delta}},Q_{T^-}^{\delta,\overline{\delta}};\overline{\delta}) = &\hat{H}(t,s,x,q;\overline{\delta}) + \int_t^T\partial_u h_{Q^{\delta,\overline{\delta}}_{u}}(u)\,du + \int_t^T \sigma \,Q_u^{\delta,\overline{\delta}}\,dW_u \\
					&+ \int^T_t\left(\delta_{u^-}+h_{Q^{\delta,\overline{\delta}}_{u^-}-1}(u)-h_{Q^{\delta,\overline{\delta}}_{u^-}}(u) \right)dN_u^{\lambda(\delta,\overline{\delta})}\,.
				\end{split}
			\end{align*}
			Let $\delta=(\delta_t)_{0\le t\le T}$ be an arbitrary admissible control and let $\epsilon>0$ be arbitrary. Then since $h$ satisfies equation \eqref{eqn:HJB}, the following inequality holds almost surely for every $t$
			\begin{align*}
				\partial_t h_{Q_{t^-}} -\phi\,Q_{t^-}^2+ A\,\exp{\left\{-(\kappa+\beta)\,\delta_{t^-}+\beta\, \overline{\delta}_{t}\right\}}\left[\delta_{t^-}+h_{Q_{t^-}-1}(t;\overline{\delta}_t)-h_{Q_{t^-}}(t;\overline{\delta}_t) \right] < \epsilon\,.
			\end{align*}
			Thus, taking an expectation of $\hat{H}(T, S_T,X_{T^-}^{\delta,\overline{\delta}},Q_{T^-}^{\delta,\overline{\delta}};\overline{\delta})$, we have
			\begin{align*}
				\begin{split}
					&\mathbb{E}_{t,s,x,q} \left[\hat{H}(T, S_T, X_{T^-}^{\delta,\overline{\delta}},Q_{T^-}^{\delta,\overline{\delta}};\overline{\delta})\right]\\
					&=\hat{H}(t,s,x,q;\overline{\delta})
					+\mathbb{E}_{t,s,x,q}\bigg[\int_t^T\partial_u h_{Q^{\delta,\overline{\delta}}_{u}}(u)\,du + \int_t^T \sigma \,Q_u^{\delta,\overline{\delta}}\,dW_u\\
					&\hspace{40mm}+ \int^T_t\left(\delta_{u^-}+h_{Q^{\delta,\overline{\delta}}_{u^-}-1}(u)-h_{Q^{\delta,\overline{\delta}}_{u^-}}(u) \right)dN_u^{\lambda(\delta,\overline{\delta})}\bigg]\\
					&\le \hat{H}(t,s,x,q;\overline{\delta})+ \epsilon\,(T-t)+\mathbb{E}_{t,s,x,q}\left[\phi\int_t^T\left(Q_u^{\delta,\overline{\delta}}\right)^2 du\right]\,,
				\end{split}
			\end{align*}
			where $\mathbb{E}_{t,s,x,q}\left[\int^T_t\left(\delta_{u^-}+h_{Q^{\delta,\overline{\delta}}_{u^-}-1}(u)-h_{Q^{\delta,\overline{\delta}}_{u^-}}(u)\right)dN_u^{\lambda(\delta,\overline{\delta})}\right]$ exists due to the boundedness of $\delta$. Therefore $\hat{H}$ satisfies
			\begin{align*}
				\begin{split}
					\hat{H}(t,s,x,q;\overline{\delta})+ \epsilon\,(T-t) &\ge \mathbb{E}_{t,s,x,q}\left[\hat{H}(T, S_T,X_{T^-}^{\delta,\overline{\delta}},Q_{T^-}^{\delta,\overline{\delta}};\overline{\delta})-\phi\int_t^T\left(Q_u^{\delta,\overline{\delta}}\right)^2 du\right]\\
					&=\mathbb{E}_{t,s,x,q}\left[X_T^{\delta,\overline{\delta}}+Q_T^{\delta,\overline{\delta}}\,\left(S_T-\alpha\, Q_T^{\delta,\overline{\delta}}\right)-\phi \int^T_t\left(Q_u^{\delta,\overline{\delta}}\right)^2 du \right]
				\end{split}
			\end{align*}
			This inequality holds for the arbitrarily chosen control $\delta=(\delta_t)_{0\le t\le T}$, therefore
			\begin{align*}
				\hat{H}(t,s,x,q;\overline{\delta})+ \epsilon\,(T-t) \ge \sup_{(\delta_u)_{t\le u\le T}\in\mathcal{A}}\mathbb{E}_{t,s,x,q}\left[X_T^{\delta,\overline{\delta}}+Q_T^{\delta,\overline{\delta}}\,\left(S_T-\alpha\, Q_T^{\delta,\overline{\delta}}\right)-\phi \int^T_t\left(Q_u^{\delta,\overline{\delta}}\right)^2 du\right]\,,
			\end{align*}
			and letting $\epsilon \to 0$ we finally obtain
			\begin{align*}
				\hat{H}(t,s,x,q;\overline{\delta}) \ge H(t,s,x,q;\overline{\delta})\,.
			\end{align*}
			
			Now let $\delta^*$ be the control process defined as equation \eqref{eqn:OptimalFeedbackControls}, then we have
			\begin{align*}
				\begin{split}
					&\mathbb{E}_{t,s,x,q} [\hat{H}\left(T, S_T, X_{T^-}^{\delta^*,\overline{\delta}},Q_{T^-}^{\delta^*,\overline{\delta}};\overline{\delta})\right]\\
					&=\hat{H}(t,s,x,q;\overline{\delta})
					+\mathbb{E}_{t,s,x,q}\bigg[\int_t^T\partial_u h_{Q^{\delta^*,\overline{\delta}}_{u}}(u)\,du + \int_t^T \sigma \,Q_u^{\delta^*,\overline{\delta}}\,dW_u\\
					&\hspace{40mm}+ \int^T_t\left(\delta^*_{u^-}+h_{Q^{\delta^*,\overline{\delta}}_{u^-}-1}(u)-h_{Q^{\delta^*,\overline{\delta}}_{u^-}}(u) \right)dN_u^{\lambda(\delta^*,\overline{\delta})}\bigg]\\
					&\ge \hat{H}(t,s,x,q;\overline{\delta})+\mathbb{E}_{t,s,x,q}\left[\phi\int_t^T\left(Q_u^{\delta^*,\overline{\delta}}\right)^2 du\right]\,,
				\end{split}
			\end{align*}
			and so $\hat{H}$ satisfies
			\begin{align*}
				\begin{split}
					\hat{H}(t,s,x,q;\overline{\delta})
					&\le \mathbb{E}_{t,s,x,q}\left[\hat{H}(T, S_T,X_{T^-}^{\delta^*,\overline{\delta}},Q_{T^-}^{\delta^*,\overline{\delta}};\overline{\delta})-\phi\int_t^T\left(Q_u^{\delta^*,\overline{\delta}}\right)^2 du\right]\\
					&=\mathbb{E}_{t,s,x,q}\left[X_T^{\delta^*,\overline{\delta}}+Q_T^{\delta^*,\overline{\delta}}\left(S_T-\alpha\, Q_T^{\delta^*,\overline{\delta}}\right)-\phi \int^T_t\left(Q_u^{\delta^*,\overline{\delta}}\right)^2du) \right]\,,
				\end{split}
			\end{align*}
			Therefore,
			\begin{align*}
				\begin{split}
					\hat{H}(t,s,x,q;\overline{\delta}) &\le \sup_{(\delta_u)_{t\le u\le T}\in\mathcal{A}}\mathbb{E}_{t,s,x,q}\left[X_T^{\delta,\overline{\delta}}+Q_T^{\delta,\overline{\delta}}\left(S_T-\alpha\, Q_T^{\delta,\overline{\delta}}\right)-\phi \int^T_t\left(Q_u^{\delta,\overline{\delta}}\right)^2 du) \right]\\
					&=H(t,s,x,q;\overline{\delta})\,.
				\end{split}
			\end{align*}
			Combining the above results we have that
			\begin{align*}
				\hat{H}(t,s,x,q;\overline{\delta})=H(t,s,x,q;\overline{\delta}) \,.
			\end{align*} \qed
		\end{proof}
		
		\subsection{Numerical Stability and Uniqueness of Equilibrium }
		\label{exp:NumericalUniqueness}
		
		We run the algorithm described in Section 3 100 times, each with a randomized initial value of mean spread, and check the converged value of mean spread $\{\overline{\delta}^1, \overline{\delta}^2, \cdots, \overline{\delta}^{100}\}$ obtained from the experiments. Over all 100 simulations, the average values of the mean spread at any time $t \in \{t_j\}_{j=0}^N$, $\frac{1}{100}\sum_{i=1}^{100} \overline{\delta}^i_{t}$, is of order $10^{-2}$ to $1$, while the corresponding standard errors are of order $10^{-16}$ to $10^{-15}$, which is smaller than the tolerance used in the algorithm. Thus we conclude that the final value of $\overline{\delta}$ is always the same, which shows that the algorithm is robust with respect to initial point, and numerically supports our argument that the equilibrium exists and is unique.
		
		\bibliographystyle{chicago}
		\bibliography{References}

	\end{document}